\renewcommand{\vec}[1]{\mathbf{#1}}
\newcommand{\pr}[2]{\left\langle{#1, #2}\right\rangle}
\newcommand{\set}[1]{\left\{#1\right\}}
\newcommand{\R}{\ensuremath{\mathbb{R}}}
\newcommand{\Z}{\ensuremath{\mathbb{Z}}}
\newcommand{\eqdef}{\mathbin{\stackrel{\rm def}{=}}}
\def\vol{{\rm vol}}
\def\conv{{\rm conv}}
\def\eps{\varepsilon}
\def\d{{\rm d}}
\def\linsp{{\rm span}}
\def\cone{{\rm cone}}
\def\aff{{\rm aff}}
\def\imod#1{\allowbreak\mkern8mu({\operator@font mod}\,\,#1)}
\DeclareMathOperator{\E}{\mathbb{E}}
\DeclareMathOperator*{\argmax}{arg\,max}
\newcommand{\poly}{\ensuremath{{\rm poly}}}
\newcommand {\cB}         {\mathcal{B}}
\newcommand {\cP}         {\mathcal{P}}
\newcommand {\cT}         {\mathcal{T}}
\newcommand {\bS}         {\mathbb{S}}
\newcommand {\cU}         {\mathcal{U}}
\newtheorem{question}[theorem]{Question}
\newif\ifnotes\notesfalse
\definecolor{mygrey}{gray}{0.50}
\newcommand{\notename}[2]{{\textcolor{mygrey}{\footnotesize{\bf (#1:} {#2}{\bf ) }}}}
\newcommand{\notename}[2]{{}}
\title{On the Shadow Simplex method for curved polyhedra}
\titlerunning{On the Shadow Simplex method for curved polyhedra} 
\author[1]{Daniel Dadush}
\author[2]{Nicolai H{\"a}hnle}
\affil[1]{Centrum Wiskunde \& Informatica, Netherlands \\
\texttt{dadush@cwi.nl}}
\affil[2]{Universit{\"a}t Bonn, Germany \\ 
  \texttt{haehnle@or.uni-bonn.de}}
\authorrunning{D. Dadush and N. H{\"a}hnle} 
\subjclass{G.1.6}
\keywords{Optimization, Linear Programming, Simplex Method, Diameter of
Polyhedra}
\begin{document}

\maketitle

\begin{abstract}
We study the simplex method over polyhedra satisfying certain ``discrete
curvature'' lower bounds, which enforce that the boundary always meets vertices
at sharp angles. Motivated by linear programs with totally unimodular constraint
matrices, recent results of Bonifas et al (SOCG 2012), Brunsch and Röglin
(ICALP 2013), and Eisenbrand and Vempala (2014) have improved our understanding
of such polyhedra.

We develop a new type of \emph{dual} analysis of the shadow simplex method which
provides a clean and powerful tool for improving all previously mentioned
results. Our methods are inspired by the recent work of Bonifas and the first
named author~\cite{BD14}, who analyzed a remarkably similar process as part of
an algorithm for the Closest Vector Problem with Preprocessing.  

For our first result, we obtain a constructive diameter bound of
$O(\frac{n^2}{\delta} \ln \frac{n}{\delta})$ for $n$-dimensional polyhedra with
curvature parameter $\delta \in [0,1]$. For the class of polyhedra arising from
totally unimodular constraint matrices, this implies a bound of $O(n^3 \ln n)$.
For linear optimization, given an initial feasible vertex, we show that an
optimal vertex can be found using an expected $O(\frac{n^3}{\delta} \ln
\frac{n}{\delta})$ simplex pivots, each requiring $O(m n)$ time to compute. An
initial feasible solution can be found using $O(\frac{m n^3}{\delta} \ln
\frac{n}{\delta})$ pivot steps.
\end{abstract}

\section{Introduction}

The \emph{simplex method} is one of the most important methods for solving linear programs (LP),
that is, optimization problems of the form $\max \set{\pr{\vec{c}}{\vec{x}}: \vec{x} \in P}$
where $P$ is a polyhedron defined by linear constraints.
Starting from an initial vertex $\vec{v}$,
a simplex algorithm provides a rule for moving from vertex to vertex along edges
of the graph or $1$-skeleton of $P$ until an
optimal vertex $\vec{w}$ (or an unbounded ray) is found.

A long standing open question is whether there exists a polynomial-time simplex algorithm for LP.
The first obstacle in proving the existence (or
non-existence) of such a method is the following fundamental question:

\begin{question} Given any two vertices $\vec{v},\vec{w}$ of a polyhedron $P$,
what is the best possible bound on the length of the shortest path between them,
as a function of the dimension $n$ and the number of constraints $m$?
\end{question}

The polynomial Hirsch conjecture posits that the diameter of the graph of a
polyhedron is bounded by a polynomial in $m$ and $n$. The best known general upper
bounds are however much larger. Barnette~\cite{Barnette74} and
Larman~\cite{Larman70} proved a bound of $O(2^n m)$, and Todd~\cite{Todd14}
recently proved a bound of $(m - n)^{\log n}$, slightly improving an earlier
bound of Kalai and Kleitman~\cite{Kalai91,KK92}. The original Hirsch conjecture,
which posited a bound of $m-n$, was
recently disproved for polytopes (i.e.~ bounded polyhedra) by Santos~\cite{Santos12,MSW13}, who
gave a lower bound of $(1+\eps)m$ (only slightly violating the conjectured bound).

Given the difficulty of the general question, much research has been aimed at
bounding the diameter of special classes of polyhedra. For example, polynomial
bounds have been given for $0/1$ polytopes~\cite{Naddef89}, transportation
polytopes~\cite{B84,BHS06,LKOS09}, and flag polytopes~\cite{AB14}.

Another important class, which has recently received much attention and is
directly related to this work, are polyhedra whose constraint matrices are
``well-conditioned''. Dyer and Frieze~\cite{DF94} showed that the diameter of totally
unimodular polyhedra -- i.e. having integer constraint matrices with all
subdeterminants in $\set{0,\pm 1}$ -- is bounded by $O(m^{16}n^3(\log
mn)^3))$. Their work also contains a polynomial time
randomized simplex algorithm that solves linear programs over totally unimodular
polyhedra.

The diameter bound of Dyer and Frieze was both generalized and improved in the
work of Bonifas et al~\cite{BSEHN14}. They showed that polyhedra with integer
constraint matrices and all subdeterminants bounded by $\Delta$ have diameter
$O(\Delta^2 n^4 \log(n \Delta))$ if they are unbounded and $O(\Delta^2 n^{3.5}
\log(n \Delta))$ if they are bounded. Their proof used certain expansion
properties of the polyhedral graph and was non-constructive.

In an attempt to make the bound of~\cite{BSEHN14} constructive, Brunsch and
Röglin~\cite{BR13} showed that given any two vertices $\vec{v},\vec{w}$ on such a
polyhedron $P$, a path between them of length $O(m \Delta^4 n^4)$ (note the dependence on
$m$) can be constructed using the \emph{shadow simplex method}.
In fact, they give a more general bound based on the so-called
$\delta$-distance property of the constraint matrix, which measures how ``well
spread'' the rows of the constraint matrix are\footnote{We note that this measure is
already implicit in~\cite{BSEHN14} and that the diameter bound factors through it.}.
Using this parameter they give a bound of $O(m n^2/\delta^2)$ on the length of
the constructed path, and recover the previous bound by the relationship $\delta
\geq 1/(n \Delta^2)$.

Most recently, Eisenbrand and Vempala~\cite{EV14} provided a different approach to
making the Bonifas et al~\cite{BSEHN14} result constructive, which more closely
resembles the random walk approach of Dyer and Frieze and also extends to
optimization. When the constraint matrix satisfies the $\delta$-distance
property, they show that given an initial vertex and objective, an optimal
vertex can be computed using $\poly(n,1/\delta)$ random walk steps (no
dependence on $m$). Furthermore, an initial feasible vertex can be computed
using $m$ calls to their optimization algorithm over subsets of the original
constraints.

\section{Results}
\label{sec:results}

Building and improving upon the works of Bonifas et al~\cite{BSEHN14}, Brunsch
and R{\"o}glin~\cite{BR13} and Vempala and Eisenbrand~\cite{EV14}, we give an
improved (constructive) diameter bound and simplex algorithm for polyhedra
satisfying the $\delta$-distance and other related properties. We also make
improvements in the treatment of unbounded polyhedra and degeneracy.  All our
results are based on a new variant and analysis of the \emph{shadow simplex
method}.  

We now introduce the ``discrete curvature measures'' we use along with the
corresponding results. We list these measures in order of increasing strength.
In the next section, we shall explain our variant of the shadow simplex method
and compare it with previous implementations. 

Let $P = \set{\vec{x} \in \R^n: A\vec{x} \leq \vec{b}}$, $A \in \R^{m \times
n}$, $\vec{b} \in \R^n$ be a pointed polyhedron ($A$ has full column rank
$\Leftrightarrow$ $P$ has vertices). For a vertex $\vec{v}$ of $P$, the \emph{normal
cone} at $\vec{v}$ is $N_\vec{v} = \set{\sum_{i \in I_{\vec{v}}} \lambda_i
\vec{a}_i: \lambda_i \geq 0, i \in I_{\vec{v}}}$, where $I_{\vec{v}} = \set{i
\in [m]: \pr{\vec{a}_i}{\vec{v}} = \vec{b}_i}$ is the set of tight constraints.
Equivalently, $N_{\vec{v}}$ is the set of all linear objective
functions whose maximum over $P$ is attained at $\vec{v}$.  $N_{\vec{v}}$ is
simplicial (non-degenerate) if it is generated by a basis of $A$, that is, if
exactly $n$ linearly independent constraints of $P$ are tight at $\vec{v}$. The
\emph{normal fan} of $P$ is the collection of all the vertex normal cones,
and the \emph{support of the normal fan} $N(P)$ is their union. A polyhedron is simple (or
non-degenerate) if all its vertex normal cones are simplicial.

\begin{definition}[$\tau$-wide Polyhedra]
We say that a cone $C$ is $\tau$-wide if it contains a
Euclidean ball of radius $\tau$ centered on the unit sphere. We define a
polyhedron $P$ to have a $\tau$-wide normal fan (or simply $P$ to be
$\tau$-wide) if every vertex normal cone is $\tau$-wide. 
\end{definition}

    \begin{center}
      \begin{tikzpicture}[scale=0.8]
        \draw (0,0) circle (2cm);
        \draw (0,0) ellipse (2cm and .82cm);
 
        \filldraw[fill=blue!20] (0,0) -- (0,2,0) --
                                  (0,2,0) arc (90:202:.8 and 2) -- cycle; 
        \filldraw[fill=blue!20] (0,0) -- (2,0,0) --
                                  (2,0,0) arc (0:90:2) -- cycle; 
        \filldraw[fill=blue!20] (0,0) -- (0,0,2) --
                                (0,0,2) arc (248:360:2cm and .82cm) -- cycle; 
         \draw[->] (0,0) -- (2,0,0) node [right] {$a_1$};
         \draw[->] (0,0) -- (0,2,0) node [above] {$a_2$};
         \draw[->] (0,0) -- (0,0,2) node [below] {$a_3$};
         \filldraw[fill=red!20] (.67,.67,.67) circle (.67);
                   \draw (1.2,1.2) node {$N_v$};
                   \draw[very thin] (.67,.67,.67) -- node[above] {\small $\tau$} +(.67,0);
      \end{tikzpicture}
    \end{center}

Roughly speaking, having a $\tau$-wide normal fan enforces that facets always
intersect at ``sharp angles'' (i.e.~angle bounded away from $\pi$). In
particular, for any vertex $\vec{v}$ of $P$, the angle between any two rays
emanating from $\vec{v}$ and (non-trivially) passing through $P$ is at most
$\pi-2\tau$. Hence one can interpret this condition as a discrete form of
curvature for polyhedra.  We now state our diameter bound for $\tau$-wide
polyhedra.

\begin{theorem}[Diameter Bound, see Theorem~\ref{thm:diameter-bound}]
\label{thm:diameter-bound-results}
Let $P \subseteq \R^n$ be an $n$-dimensional pointed polyhedron having a
$\tau$-wide normal fan. Then the graph of $P$ has diameter bounded by
$8n/\tau(1+\ln(1/\tau))$. Furthermore, a path of this expected length can be
constructed via the shadow simplex method. 
\end{theorem}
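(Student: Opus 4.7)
The plan is to run the shadow simplex method with a randomly perturbed one-parameter family of objectives interpolating between vectors in $N_\vec{v}$ and $N_\vec{w}$, and then bound the expected number of pivots by a dual probabilistic count in the spirit of~\cite{BD14}. Since $P$ is $\tau$-wide, let $\vec{s}_\vec{v} \in N_\vec{v}$ and $\vec{s}_\vec{w} \in N_\vec{w}$ be unit vectors at the centers of the guaranteed $\tau$-balls. Draw $\vec{c}_0, \vec{c}_1$ independently and uniformly in small sub-caps around $\vec{s}_\vec{v}$ and $\vec{s}_\vec{w}$, small enough to keep $\vec{c}_0 \in N_\vec{v}$ and $\vec{c}_1 \in N_\vec{w}$ and to ensure non-degeneracy almost surely. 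Set $\vec{c}_t \eqdef (1-t)\vec{c}_0 + t\vec{c}_1$. The shadow simplex then traces the sequence of vertices $\vec{u}(t)$ whose normal cones contain $\vec{c}_t$; because each $N_\vec{u}$ is convex, each vertex appears in a contiguous $t$-interval, so the number of pivots is one less than the number of distinct vertices visited.

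The dual count: letting $X_\vec{u}$ indicate that the random segment enters $N_\vec{u}$,
\[ \E[\#\text{pivots}] \;=\; -1 + \sum_{\vec{u}} \Pr[X_\vec{u} = 1]. \]
Projecting to $\bS^{n-1}$, the task reduces to bounding the expected number of cones of the normal fan a random geodesic arc (with endpoints in the $\tau/2$-caps around $\vec{s}_\vec{v}, \vec{s}_\vec{w}$) enters, by $8n/\tau\,(1+\ln(1/\tau))$. The $\tau$-wide property enters in two ways: it forces the $\tau$-caps inside the various $N_\vec{u}$ to be pairwise disjoint, yielding a packing bound on cones that can crowd near the arc; and it makes the random endpoints spread enough inside $N_\vec{v}, N_\vec{w}$ that the direction of the random chord enjoys useful anti-concentration.

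The key technical step organizes the sum over $\vec{u}$ by the angular distance from the random arc to a natural witness in $N_\vec{u}$ (such as $\vec{s}_\vec{u}$), grouping cones into dyadic shells. Combining the packing bound with the anti-concentration estimate on each shell, each of the $O(\log(1/\tau))$ scales contributes $O(n/\tau)$, while the innermost shell of radius $O(\tau)$ contributes the additive constant, together yielding the claimed $8n/\tau\,(1+\ln(1/\tau))$ bound in expectation. A single randomized run of the shadow simplex method then realizes a $\vec{v}$-to-$\vec{w}$ path of the stated expected length.

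The main obstacle I foresee is the anti-concentration estimate: showing that the probability the random chord between two $\tau$-caps on $\bS^{n-1}$ comes within angular distance $r$ of a fixed point scales at most linearly in $r/\tau$, with no extra dimension-dependence. The clean linear dependence of the final bound on $n$ rests on this estimate. With that lemma in hand, the packing bounds and dyadic bookkeeping are comparatively routine, and the per-scale balance between ``many crowded cones with small visit probability'' and ``few isolated cones with large visit probability'' is what produces the uniform $O(n/\tau)$ contribution per shell.
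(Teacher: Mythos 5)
Your proposal diverges from the paper in a way that it explicitly argues cannot work. You keep the randomness small enough that $\vec{c}_0$ stays inside $N_{\vec{v}}$ and $\vec{c}_1$ stays inside $N_{\vec{w}}$, and follow a single segment $[\vec{c}_0,\vec{c}_1]$. The paper states that ``Previous algorithms were constrained to random perturbations that kept $\vec{c}$ and $\vec{d}$ inside their respective normal cones, making the amount of randomness they could take advantage of much smaller,'' and it is precisely by abandoning that constraint that the improved bound is obtained. The paper instead draws a single \emph{large} perturbation $X$ exponentially distributed over the whole fan support $\Sigma$ (with $\E\|X\|=n$), and traces a \emph{three}-segment path $s\vec{c}_1 \to s\vec{c}_1+X \to s\vec{c}_2+X \to s\vec{c}_2$, where $X$ typically pushes the objectives far outside their starting cones; the initial and final portions of the first and third legs are shown not to pivot because $s\vec{c}_i + s\tau\cB_2^n \subseteq N_{\vec{v}_i}$.

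Beyond the structural mismatch, the two technical ingredients your plan leans on are both unjustified and likely unrepairable in the form stated. For the packing step, $\tau$-wideness guarantees an inscribed $\tau$-ball \emph{somewhere} in each cone, not near a given point of $\bS^{n-1}$; a cone can be wide along one direction and arbitrarily thin near the arc, so the number of cones meeting a small shell around a point can be large (a priori depending on $m$), defeating a dyadic packing count. For the anti-concentration step you flag as the ``main obstacle,'' the claimed dimension-free linear-in-$r/\tau$ bound is exactly the kind of estimate that fails when endpoints are constrained to $\tau$-caps: the chord is essentially deterministic at scales above $\tau$, so you only gain anti-concentration at the smallest scale. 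The paper avoids both issues by a different mechanism: for each cone $C$ with central direction $\vec{u}$ (the center of its $\tau$-ball), it lifts the shifted facet $F+\vec{t}$ along $\cone(\vec{u})$ to show $\int_{(F+\vec{t}+\cone(\vec{u}))\cap\Sigma} e^{-\|x\|} \geq \tau \int_{(F+\vec{t})\cap\Sigma} e^{-\|x\|}$, i.e.\ the exponential mass of a shifted facet is at most $1/\tau$ times the mass of the shifted cone; summing over cones then uses only that the (identically shifted) cones are disjoint, so total mass is $\leq 1$, with no packing argument at all. The logarithm $\ln(1/\tau)$ arises from the scaling bound $\E[|\partial\cT\cap[\vec{c}+\alpha X,\vec{c}+X]|]\leq\frac{2n}{\tau}\ln\frac1\alpha$ applied to the first and last legs, not from dyadic shells. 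I recommend you study Lemma~\ref{lem:exp-cone-hitting} and Lemma~\ref{lem:scaling-cone-hitting} closely; they are the crux, and they replace both of your unproven steps.
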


Restricting to $n$-dimensional polyhedra with subdeterminants bounded by
$\Delta$, using the relation $\tau \geq 1/(n\Delta)^2$ (see Lemma~\ref{lem:determinant-delta-dist}) we achieve
a bound of $O(n^3\Delta^2\ln(n\Delta))$, improving on the existential bounds of
Bonifas et al~\cite{BSEHN14}. In contrast to~\cite{BSEHN14}, we note that our
bound (and proof) is the same for polytopes and unbounded polyhedra.   

While our bound is constructive -- we follow a shadow
simplex path -- it is in general only efficiently implementable when the
polyhedron is simple. In the presence of degeneracy, we note that computing a
single edge of the path is essentially as hard as solving linear programming.
Furthermore, standard techniques for removing degeneracy, such as the
perturbation or lexicographic method, may unfortunately introduce a large number
of extra simplex pivots. 

Interestingly, our diameter bound can take advantage of degeneracy in
situations where it makes the normal cones wider. While degeneracy does not
occur for ``generic polyhedra'', it is very common for combinatorial polytopes.
Furthermore, it can occur in ways that are useful to our diameter bound. For
example, we remark that using degeneracy one can prove that the normal
fan of the perfect matching polytope is $\Omega(1/\sqrt{|E|})$-wide,
see Appendix~\ref{sec:perfect-matching}.


To solve linear optimization problems via the shadow simplex method, we will
need more than a wide normal fan. In fact, we will have different requirements
for the two phases of the simplex algorithm: Phase $1$, which finds an initial
feasible vertex, will require more than Phase $2$, which finds an optimal vertex
with respect to the objective starting from a feasible vertex.

\begin{definition}[$\delta$-distance property]
A set of linearly independent vectors $\vec{v}_1,\dots,\vec{v}_k \in \R^n$
satisfy the $\delta$-distance property if for every $i \in [k]$, the vector
$\vec{v}_i$ is at Euclidean distance at least $\delta \|\vec{v}_i\|$ from the
span of $\set{\vec{v}_j: j \in [k] \setminus \set{i}}$. 

For a polyhedron $P = \set{\vec{x} \in \R^n: A\vec{x} \leq \vec{b}}$, we define
$P$ to satisfy the \emph{local $\delta$-distance property} if every
\emph{feasible basis} of $A$, i.e.~the rows of $A$ defining a vertex of $P$,
satisfies the $\delta$-distance property. 

We say that a set of vectors $\vec{v}_1,\ldots,\vec{v}_m \in \R^n$
satisfy the \emph{global $\delta$-distance property}
if every linearly independent subset satisfies the $\delta$-distance property.
We say that a matrix $A \in \R^{m \times n}$ satisfies the global $\delta$-distance property
if its row vectors do.
\end{definition}

\begin{lemma}
  \label{lem:dist-to-width}
  Let $\vec{v}_1,\dots,\vec{v}_n \in
  \bS^{n-1}$ be a basis satisfying the $\delta$-distance property. Then
  $\cone(\vec{v}_1,\dots,\vec{v}_n)$ is $\delta/n$-wide.
\end{lemma}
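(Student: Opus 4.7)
\medskip

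\noindent\textbf{Plan.} The plan is to exhibit an explicit unit vector $\vec{u} \in \cone(\vec{v}_1,\dots,\vec{v}_n)$ together with a ball of radius $\delta/n$ around $\vec{u}$ that stays inside the cone. The natural certificate that a point $\vec{x}$ lies in the cone is non-negativity of its coordinates in the basis $\vec{v}_1,\dots,\vec{v}_n$, so I would work with the associated \emph{dual basis} $\vec{v}_1^*,\dots,\vec{v}_n^* \in \R^n$ defined by $\pr{\vec{v}_i^*}{\vec{v}_j} = \mathds{1}[i=j]$. Then $\vec{x} \in \cone(\vec{v}_1,\dots,\vec{v}_n)$ if and only if $\pr{\vec{v}_i^*}{\vec{x}} \geq 0$ for all $i \in [n]$.

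The first key step is to translate the $\delta$-distance property into an upper bound on the dual vectors. By construction, $\vec{v}_i^*$ is orthogonal to $\linsp\set{\vec{v}_j : j \neq i}$ and satisfies $\pr{\vec{v}_i^*}{\vec{v}_i}=1$. A standard computation then shows that the distance from $\vec{v}_i$ to $\linsp\set{\vec{v}_j : j \neq i}$ equals $1/\|\vec{v}_i^*\|$. Combined with $\|\vec{v}_i\|=1$ and the $\delta$-distance hypothesis, this gives $\|\vec{v}_i^*\| \leq 1/\delta$ for every $i$.

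The second key step is the choice of center. I would take
\[
 \vec{u} \;=\; \frac{\vec{v}_1 + \cdots + \vec{v}_n}{\|\vec{v}_1 + \cdots + \vec{v}_n\|},
\]
which is a unit vector, and by the triangle inequality $\|\vec{v}_1+\cdots+\vec{v}_n\| \leq n$. Since $\pr{\vec{v}_i^*}{\vec{v}_1+\cdots+\vec{v}_n} = 1$ for each $i$, it follows that $\pr{\vec{v}_i^*}{\vec{u}} \geq 1/n$. Now for any $\vec{x}$ with $\|\vec{x}-\vec{u}\| \leq \delta/n$, Cauchy--Schwarz gives
\[
 \pr{\vec{v}_i^*}{\vec{x}} \;\geq\; \pr{\vec{v}_i^*}{\vec{u}} - \|\vec{v}_i^*\|\cdot\|\vec{x}-\vec{u}\| \;\geq\; \frac{1}{n} - \frac{1}{\delta}\cdot\frac{\delta}{n} \;=\; 0,
\]
so $\vec{x}$ lies in the cone. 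This shows the Euclidean ball of radius $\delta/n$ around the unit vector $\vec{u}$ is contained in $\cone(\vec{v}_1,\dots,\vec{v}_n)$, which is exactly the definition of $\delta/n$-wide.

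There is really no substantive obstacle: the only delicate point is to verify that $\|\vec{v}_i^*\| = 1/\text{dist}(\vec{v}_i, \linsp\set{\vec{v}_j : j\neq i})$, after which the argument is a short linear algebra calculation. The proof is essentially tight in the sense that the $1/n$ lower bound on $\pr{\vec{v}_i^*}{\vec{u}}$ and the $1/\delta$ upper bound on $\|\vec{v}_i^*\|$ combine cleanly, with no loss, to produce the claimed $\delta/n$.
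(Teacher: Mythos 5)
Your proof is correct and follows the same route as the paper's: introduce the dual basis $\vec{v}_1^*,\dots,\vec{v}_n^*$, use the $\delta$-distance hypothesis to get $\|\vec{v}_i^*\| \leq 1/\delta$, and then verify that a suitable multiple of $\sum_i \vec{v}_i$ serves as the center of a $\delta/n$-ball inside the cone via $\pr{\vec{v}_i^*}{\vec{x}} \geq 0$. The one small difference is that you normalize the center $\vec{u}$ to have unit length from the outset, which matches the definition of $\tau$-wide literally; the paper instead works with the average $\bar{\vec{v}} = \frac{1}{n}\sum_i \vec{v}_i$ (which has $\|\bar{\vec{v}}\| \leq 1$) and implicitly relies on the cone's invariance under positive scaling to push the center out to the unit sphere. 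Your version is marginally more careful on that point, but the underlying argument is identical.
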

\begin{proof}
  See Appendix~\ref{sec:appendix-proofs}.
\end{proof}

The definitions differ in strength mainly based on the sets of bases to which they apply.
The local $\delta$-distance property is stronger than the $\tau$-wide property for $\tau = \delta/n$,
because it implies that \emph{all triangulations} of the normal fan are $\tau$-wide.%
\footnote{However, the $\tau$-wide property is weaker even when all normal cones are simplicial:
a $2$-dimensional cone of inner angle close to $\pi$ is almost $1$-wide,
but satisfies $\delta$-distance only for $\delta$ close to~$0$.}
The global property is stronger than the local property since it applies also to infeasible bases,
which allows one to control the geometry of polyhedra related to $P$,
such as polyhedra obtained by removing a subset of constraints, which will be needed for Phase $1$.


We now state our main result for Phase $2$ simplex.

\begin{theorem}[Optimization via Shadow Simplex, see Theorem~\ref{thm:optimization-detail}]
\label{thm:phase2}
Let $P = \set{\vec{x} \in \R^n: A\vec{x} \leq \vec{b}}$ be an $n$-dimensional
\emph{polytope} with $m$ constraints satisfying the \emph{local
$\delta$-distance property}. Then, given an objective $\vec{c} \in \R^n$ and a
vertex $\vec{v}$ of $P$, an optimal vertex can be computed using an
expected $O(n^3/\delta \ln(n/\delta))$ shadow simplex pivots, where each pivot
requires $O(mn)$ arithmetic operations.
\end{theorem}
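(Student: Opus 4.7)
The plan is to run a randomized parametric shadow simplex starting from $\vec{v}$: sample a source objective $\vec{d}$ from a suitable distribution on the normal cone $N_{\vec{v}}$ (so that $\vec{v}$ is automatically optimal for $\vec{d}$), set $W = \linsp(\vec{d},\vec{c})$, and follow the vertices of the $2$-dimensional shadow polygon $Q = \pi_W(P)$ as the objective is rotated continuously from $\vec{d}$ to $\vec{c}$. Each pivot of the simplex walk on $P$ corresponds to traversing one edge of $Q$, and the total number of pivots equals the number of edges of $Q$ strictly between the $\vec{d}$-vertex and the $\vec{c}$-vertex of $Q$. Dually, each such edge of $Q$ is the projection of a vertex normal cone of $P$ whose intersection with $W$ meets the arc on the unit circle of $W$ going from $\vec{d}$ to $\vec{c}$, so bounding the pivot count reduces to bounding the number of such cones pierced by this arc.

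With this setup, the proof reduces to two steps. First, a ``width transfer'' step: by Lemma~\ref{lem:dist-to-width}, the local $\delta$-distance property implies that every vertex normal cone of $P$ contains a simplicial $\tau$-wide subcone for $\tau = \delta/n$, so the normal fan of $P$ is well-suited to the geometric bounds underlying the diameter argument. Second, the key probabilistic step: when $\vec{d}$ is drawn from the appropriate distribution on $N_{\vec{v}}$, the expected number of $\tau$-wide vertex cones pierced by the arc from $\vec{d}$ to $\vec{c}$ inside $W$ is $O\!\left(\tfrac{n^3}{\delta}\ln\tfrac{n}{\delta}\right)$. I would prove this by charging each crossed cone $N_{\vec{u}}$ to the angular measure it subtends: its $\tau$-wideness forces it to occupy a nontrivial fraction of the measure on any $2$-plane through $\vec{c}$, and integrating against the distribution of $\vec{d}$ inside $N_{\vec{v}}$ yields the stated bound. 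This is the dual counterpart of the arc-counting argument underlying Theorem~\ref{thm:diameter-bound-results}, adapted to the asymmetric situation in which $\vec{c}$ is adversarial rather than random.

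The main obstacle is precisely this arc-counting estimate in the asymmetric setting. Because $\vec{c}$ is fixed and may lie arbitrarily close to the boundary of many normal cones, one cannot symmetrize over both endpoints of the arc as one can in the diameter proof; the extra factor of $n$ relative to Theorem~\ref{thm:diameter-bound-results} reflects exactly this loss. Verifying that the randomization on the source side alone still suffices to control the expected number of crossings requires a careful dual volume argument in the spirit of~\cite{BD14}, and is where the new dual analysis advertised in the introduction does its work. Once this estimate is established, the implementation cost per pivot is routine: identifying the next edge of $Q$ amounts to a one-dimensional parametric ratio test against the tight constraints, which a standard tableau update over an $m \times n$ basis performs in $O(mn)$ arithmetic operations, yielding the claimed total runtime.
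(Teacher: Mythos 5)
Your proposal diverges from the paper in ways that leave a genuine gap. The paper does \emph{not} follow a single arc from a randomized source to $\vec{c}$. Instead it uses a three-segment path through the normal fan, $\vec{c} \to \vec{c}+X \to \vec{d}+X \to \vec{d}+\frac{\delta}{2n^3}X$ where $X$ is exponentially distributed (conditioned on $\|X\|\le 2n$), and then \emph{recursion} on a facet of $P$ to close the last gap from $\vec{d}+\frac{\delta}{2n^3}X$ to $\vec{d}$. Both path endpoints $\vec{c}$ (constructed from the starting basis) and $\vec{d}$ (the given target) are deterministic; only the intermediate perturbation $X$ is random. The intersection bounds (Theorem~\ref{thm:intersection-bounds}) give $O(\frac{n^2}{\delta}\ln\frac{n}{\delta})$ pivots for these three segments, and the extra factor of $n$ in the final bound comes from the $n$ levels of recursion, not from an asymmetric arc-count.

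The concrete flaw in your plan is the claimed step that ``$\tau$-wideness forces $N_{\vec u}$ to occupy a nontrivial fraction of the measure on any $2$-plane through $\vec{c}$.'' This is false: a cone that contains a $\tau$-ball around a unit vector in $\R^n$ can meet a given $2$-plane $W$ in an arbitrarily thin wedge, or not at all, because the wideness direction need not lie near $W$. Consequently the classical shadow-polygon arc-counting does not inherit a per-cone angular lower bound from $\tau$-wideness, and your ``dual volume argument in the spirit of~\cite{BD14}'' is exactly the missing core of the proof. The paper sidesteps this by (i) measuring crossings via shifted-facet integrals against the exponential density in $\R^n$ rather than angular measure in $W$, so wideness is used in $\R^n$ where it actually holds (Lemmas~\ref{lem:exp-cone-hitting} and~\ref{lem:scaling-cone-hitting}), and (ii) never trying to run the shadow path all the way to the deterministic target: the bound of Lemma~\ref{lem:scaling-partition-hitting} degrades as $\ln(1/\alpha)$ when contracting toward $\vec{d}$, so the path is stopped at distance $\frac{\delta}{n^2}$ from $\vec{d}$, one optimal constraint is identified via Lemma~\ref{lem:snap-to-cone} (which needs the local $\delta$-distance property, not just $\tau$-wideness), and one recurses on the facet $F_{i^\star}$, whose constraint system again satisfies local $\delta$-distance by Lemma~\ref{lem:delta-distance-projection}. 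Without this recursion-and-snap mechanism your argument cannot bound the crossings near the deterministic endpoint, and no choice of distribution for the source objective alone can fix that.
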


Our Phase $2$ algorithm above is faster than the algorithms in~\cite{BR13,EV14}
and relies on a \emph{weaker assumption} than~\cite{EV14}. The
$\vec{v}$,$\vec{w}$ path finding algorithm of Brunsch and
R{\"o}glin~\cite{BR13} is in fact a special case of the above, since we can
choose $\vec{c}$ to be any objective maximized at $\vec{w}$. Comparing to the
Phase $2$ algorithm of Eisenbrand and Vempala~\cite{EV14}, we require only the
\emph{local $\delta$-distance property} instead of the global one. Whether one
could rely only on the local property was left as open question in~\cite{EV14},
which we resolve in the affirmative.     

A small technical caveat is that as stated, the algorithm requires knowledge of
$\delta$. Since $\delta \leq 1$, we can always guess a number $\delta' \leq
\delta \leq 2\delta'$ by trying $O(\ln 1/\delta)$ different values, incurring an
$O(\ln 1/\delta)$ factor increase in running time (overestimating $\delta$ only
affects correctness, not runtime). For simplicity, we shall henceforth assume
that $\delta$ is known.

A more important caveat is that the above algorithm requires that $P$ be a
polytope (i.e.~bounded). This restriction is due to the fact that we can only
generate the randomness required for our bounds efficiently (that is, without
solving a general LP) when the support of the normal fan equals $\R^n$.

The unbounded setting can be reduced to the bounded setting, in the standard way,
by adding one or more constraints to make $P$ bounded while not cutting off any
of its vertices.

\begin{definition}
  Let $P = \set{ \vec{x} \in \R^n : A\vec{x} \leq \vec{b} }$ be a pointed polyhedron.
  Then a polytope $P' = \set{ \vec{x} \in \R^n : A\vec{x} \leq \vec{b}, A' \vec{x} \leq \vec{b}' }$
  is \emph{LP equivalent} to $P$ if every vertex $\vec{v} \in P$ satisfies
$\pr{\vec{a}_i'}{\vec{v}} < \vec{b}_i'$ for all $i$;
  in particular, $x$ is a vertex of $P'$.
\end{definition}

Given an optimal vertex $\vec{v}$ of $P'$ as above, one can easily check whether
$\vec{v}$ is a vertex of $P$. If it is not, the original LP must be unbounded.
In general, however, adding constraints to $P$ happens at the expense
of a degraded $\delta$. In particular, the standard reduction of adding a large
box constraint can degrade $\delta$ arbitrarily, hence the constraints must be
added with care. We state the guarantees we can achieve below.

\begin{lemma}[Removing Unboundedness, see Appendix~\ref{sec:removing-unboundedness}]
\label{lem:unb-to-bounded}
Let $P = \set{\vec{x} \in \R^n: A\vec{x} \leq \vec{b}}$ be an $n$-dimensional
pointed polyhedron with $m$ constraints. Let $\vec{a}_1,\dots,\vec{a}_m$ denote the rows
of $A$ and $b_{\rm max} = \max_{i \in [m]} |\vec{b}_i|/\|\vec{a}_i\|$.
\begin{enumerate}
  \item Assume that $P$ satisfies the \emph{local $\delta$-distance property} and
    that $I \subseteq [m]$, $|I|=n$, indexes the rows of a feasible basis.  Letting
    $\vec{w} = -1/n \sum_{i \in I} \vec{a}_i/\|\vec{a}_i\|$, we have that
    \begin{align*}
    P' = \set{\vec{x} \in \R^n: A\vec{x} \leq \vec{b}, \quad
                  \pr{\vec{w}}{\vec{x}} \leq
      n b_{\rm max}/\delta } \text{ ,}
    \end{align*}
    is a polytope that is LP equivalent to $P$ and satisfies the
    \emph{local $\delta^2/(2n)$-distance property}.

  \item Assume that $A$ satisfies the \emph{global $\delta$-distance property}.
    Then
    \begin{align*}
    P' = \{\vec{x} \in \R^n:
    -n \|\vec{a}_i\|b_{\rm max}/\delta - 1
          \leq \pr{\vec{a}_i}{\vec{x}} \leq \vec{b}_i,
    \quad \forall i \in [m]\} \text{ ,}
    \end{align*}
    is a polytope that is LP equivalent to $P$ and satisfies the \emph{global
    $\delta$-distance} property.
\end{enumerate}
\end{lemma}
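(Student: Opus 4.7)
For each part of Lemma~\ref{lem:unb-to-bounded} I need to verify three things: (i) every vertex of $P$ lies strictly in the interior of the added constraint(s), giving LP equivalence; (ii) $P'$ is bounded; and (iii) $P'$ satisfies the claimed $\delta$-distance property. I shall handle the easier Part~(2) first as a warm-up, then attack Part~(1), whose distance bound is the real obstacle.

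For Part~(2), the new constraint rows are exactly $\{\pm \vec{a}_i\}_{i \in [m]}$, so any linearly independent subset is, up to signs, a linearly independent subset of the rows of $A$, and hence inherits global $\delta$-distance directly. For LP equivalence, let $\vec{v}$ be a vertex of $P$ with active basis $J$, and expand any row as $\vec{a}_i = \sum_{j \in J} \beta_j \vec{a}_j$. The standard dual-basis estimate $|\beta_j| \leq \|\vec{a}_i\|/(\delta \|\vec{a}_j\|)$, which follows from the $\delta$-distance of $J$, gives $|\langle \vec{a}_i, \vec{v}\rangle| = |\sum_j \beta_j \vec{b}_j| \leq n\|\vec{a}_i\|b_{\rm max}/\delta$, and the extra $-1$ in the lower bound of $P'$ then secures strict feasibility. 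Boundedness is automatic since every $\langle \vec{a}_i, \cdot\rangle$ is two-sided bounded on $P'$ and $A$ has full column rank.

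For Part~(1), LP equivalence is analogous: with $J$ the active basis at a vertex $\vec{v}$ of $P$, write $\vec{w} = \sum_{j \in J}\beta_j \hat{\vec{a}}_j$ where $\hat{\vec{a}}_j = \vec{a}_j/\|\vec{a}_j\|$, invoke local $\delta$-distance of $J$ together with $\|\vec{w}\| \leq 1$ to obtain $|\beta_j| \leq 1/\delta$, and conclude $\langle \vec{w}, \vec{v}\rangle = \sum_j \beta_j \vec{b}_j/\|\vec{a}_j\| \leq n b_{\rm max}/\delta$. Boundedness follows from the construction of $\vec{w}$: the recession cone of $P$ is contained in the tangent cone $\{\vec{r} : \langle \hat{\vec{a}}_i, \vec{r}\rangle \leq 0,\ i \in I\}$ at the feasible vertex $\vec{v}_0$ with basis $I$, whence $\langle \vec{w}, \vec{r}\rangle = -\tfrac{1}{n}\sum_{i \in I}\langle \hat{\vec{a}}_i, \vec{r}\rangle \geq 0$, with strictness for $\vec{r} \neq \vec{0}$ because $\{\hat{\vec{a}}_i\}_{i\in I}$ is a basis of $\R^n$.

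The main obstacle is the local $\delta^2/(2n)$-distance of $P'$. Any feasible basis of $P'$ not containing $\vec{w}$ is a feasible basis of $P$, and $\delta$-distance implies $\delta^2/(2n)$-distance. Otherwise the basis is $B = \{\vec{a}_j : j \in J\} \cup \{\vec{w}\}$ with $|J|=n-1$, and I need a lower bound on both (a)~the distance from $\vec{w}$ to $\linsp(\vec{a}_j : j \in J)$ and (b)~for each $j \in J$, the distance from $\vec{a}_j$ to $\linsp(\vec{w}, \vec{a}_k : k \in J\setminus\{j\})$. The strategy is to use the identity $n\vec{w} + \sum_{i \in I}\hat{\vec{a}}_i = \vec{0}$ to rewrite $B$ in terms of the $\hat{\vec{a}}_i$. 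When $|I \setminus J| = 1$, say $I \setminus J = \{i^*\}$, this substitution shows $B$ is essentially $(I \setminus \{i^*\}) \cup \{\vec{w}\}$, from which a direct computation using $\delta$-distance of $I$ gives a bound of order $\delta/n$ in each direction. When $|I\setminus J| \geq 2$, the components of different $\hat{\vec{a}}_i$ in the direction orthogonal to $\linsp(J)$ can partially cancel, so I exploit the fact that $B$ is \emph{feasible}: the corresponding vertex $\vec{v}^*$ lies in $P$, yet $\langle \vec{w}, \vec{v}^*\rangle = n b_{\rm max}/\delta$, which combined with $\langle \hat{\vec{a}}_i, \vec{v}^*\rangle \leq b_{\rm max}$ for all $i \in I$ forces some $i^* \in I\setminus J$ to have $\langle \hat{\vec{a}}_{i^*}, \vec{v}^*\rangle$ of order $-n b_{\rm max}/\delta$. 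Turning this ``forced alignment'' into a lower bound on the required distances, by projecting onto the unit normal to $\linsp(J)$ and expanding it in the dual basis of $I$ (whose vectors have norm at most $1/\delta$), yields the factor $\delta^2/(2n)$; the additional factor $\delta$ compared with the naive $\delta/n$ is precisely the cost of this feasibility-based cancellation argument. Making the case $|I\setminus J|\geq 2$ tight is the principal technical difficulty.
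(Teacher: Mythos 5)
Your treatment of Part~(2), and of LP equivalence and boundedness in Part~(1), is correct and broadly in the same spirit as the paper (the paper routes LP equivalence through a vertex-norm bound $\|\vec{x}\| \leq n b_{\max}/\delta$ rather than expanding $\vec{a}_i$ in the dual basis, but the estimates are equivalent).

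The genuine gap is in the local $\delta^2/(2n)$-distance property for Part~(1), which you yourself flag as unresolved. Your strategy based on the identity $n\vec{w} = -\sum_{i\in I}\hat{\vec{a}}_i$ and a case split on $|I\setminus J|$ is not carried out for the hard case $|I\setminus J|\geq 2$, and the ``forced alignment'' sketch is a heuristic, not a proof: a bound of the form $\pr{\hat{\vec{a}}_{i^\star}}{\vec{v}^\star}\approx -n b_{\max}/\delta$ is a statement about a vertex of $P'$, and it is not shown how this converts into a lower bound on $d(\vec{w},\linsp(J))$ or $d(\hat{\vec{a}}_j,\linsp(B\setminus\{\hat{\vec{a}}_j\}))$. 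More fundamentally, you miss the structural observation that makes the case split unnecessary. A feasible basis $B = J \cup \{\vec{w}\}$ of $P'$ arises as the intersection of an unbounded edge (ray) of $P$ with the new facet; hence (i)~$J$ extends to a feasible basis of $P$, so $J$ itself satisfies the $\delta$-distance property (which you also need but never establish, since local $\delta$-distance a priori says nothing about arbitrary $(n-1)$-subsets of rows), and (ii)~$H = \linsp(J)$ is a supporting hyperplane of the entire normal fan of $P$, because $\cone(\vec{a}_j: j\in J)$ is a boundary face of the fan. Since $-\vec{w} = \frac{1}{n}\sum_{i\in I}\hat{\vec{a}}_i$ together with a ball of radius $\delta/n$ lies inside the normal cone of $I$ (by the proof of Lemma~\ref{lem:dist-to-width}), and the fan lies entirely on one side of $H$, one gets $d(\vec{w},H) = d(-\vec{w},H)\geq\delta/n$ for free, independent of how $I$ and $J$ intersect. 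After that, a short block-matrix computation using Lemma~\ref{lem:delta-distance-inverse} (writing $M^{-1}$ for the matrix with rows $\hat{\vec{a}}_1,\ldots,\hat{\vec{a}}_{n-1},\vec{w}/\|\vec{w}\|$ and bounding the column norms by $1/\delta + n/\delta^2 \leq 2n/\delta^2$) delivers the $\delta^2/(2n)$ bound cleanly. Without the supporting-hyperplane insight, your path through the case $|I\setminus J|\geq 2$ would remain substantially harder and is, as it stands, incomplete.
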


Finally, we use standard techniques for reducing feasibility to phase $2$ type
optimization. As this generally requires pivoting over infeasible bases, we will 
require global instead of local properties here. Interestingly, for LPs with
bounded subdeterminants, we get that the number of simplex pivots is completely
independent of the number of constraints.

\begin{theorem}[Feasibility via Shadow Simplex, see Appendix~\ref{sec:feasibility}]
Let $P = \set{\vec{x} \in \R^n: A\vec{x} \leq b}$ be an $n$-dimensional
polyhedron whose constraint matrix has full column rank and satisfies the global
$\delta$-distance property.  Then a feasible solution to $P$ can be computed
using an expected $O(mn^3/\delta \ln(n/\delta))$ shadow simplex pivots.
Furthermore, if $A$ is integral and has subdeterminants bounded by $\Delta$, a
feasible solution can be computed using an expected $O(n^5\Delta^2
\ln(n\Delta)))$ shadow simplex pivots.
\end{theorem}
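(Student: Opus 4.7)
The plan is to reduce feasibility on $P$ to $m$ invocations of the Phase~$2$ shadow simplex method (Theorem~\ref{thm:phase2}), after first making the problem bounded. Since $A$ satisfies the global $\delta$-distance property, I first apply Lemma~\ref{lem:unb-to-bounded} (part~$2$) to obtain an LP-equivalent polytope $P' = \{\vec{x} : -n\|\vec{a}_i\|b_{\rm max}/\delta - 1 \leq \pr{\vec{a}_i}{\vec{x}} \leq \vec{b}_i,\ \forall i\}$ whose (at most $2m$) row vectors still satisfy the global $\delta$-distance property. Any vertex of $P'$ that also satisfies the original upper-bound constraints $\pr{\vec{a}_i}{\vec{x}} \leq \vec{b}_i$ is then a feasible vertex of $P$.

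Next, I build a nested family $P^{(0)} \supseteq P^{(1)} \supseteq \dots \supseteq P^{(m)} = P'$, where $P^{(0)} = \{\vec{x} : -B_i \leq \pr{\vec{a}_i}{\vec{x}} \leq B_i,\ \forall i\}$ with $B_i := n\|\vec{a}_i\|b_{\rm max}/\delta + 1$ is a symmetric box, and $P^{(k)}$ additionally enforces the first $k$ upper-bound constraints of $P'$. Each $P^{(k)}$ is a polytope, and since its rows are drawn from $\pm \vec{a}_i$ (so linear independence of a subset is identical to that of the corresponding subset of rows of $A$), the global $\delta$-distance property of $A$ descends to the local $\delta$-distance property on $P^{(k)}$. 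A starting vertex $\vec{v}_0$ of $P^{(0)}$ is obtained by greedy descent from the origin (which lies strictly inside $P^{(0)}$ since every $B_i > 0$), tightening linearly independent constraints one at a time in $O(n)$ steps.

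For each $k = 1, \dots, m$, I then test whether $\pr{\vec{a}_k}{\vec{v}_{k-1}} \leq \vec{b}_k$. If so, set $\vec{v}_k := \vec{v}_{k-1}$; otherwise run shadow simplex on $P^{(k-1)}$ with objective $-\vec{a}_k$ starting from $\vec{v}_{k-1}$, halting at the first vertex $\vec{v}_k$ satisfying $\pr{\vec{a}_k}{\vec{v}_k} \leq \vec{b}_k$, or reporting infeasibility of $P$ if the optimum already violates this. By Theorem~\ref{thm:phase2}, each invocation uses an expected $O(n^3/\delta \ln(n/\delta))$ pivots, and summing over $k$ gives the claimed $O(m n^3/\delta \ln(n/\delta))$ bound. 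The corollary for $\Delta$-bounded subdeterminants follows by combining the relation $\delta \geq 1/\poly(n,\Delta)$ (Cramer's rule, analogous to Lemma~\ref{lem:determinant-delta-dist}) with the classical fact that such matrices have at most $\poly(n,\Delta)$ distinct rows up to positive scaling, so that we may assume $m$ is polynomial in $n$ and $\Delta$; the specific exponents combine to give $O(n^5\Delta^2 \ln(n\Delta))$.

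The principal concern I anticipate is ensuring that Theorem~\ref{thm:phase2} applies to each intermediate polytope $P^{(k-1)}$ with the \emph{local} $\delta$-distance that we inherit: the relevant bases of $P^{(k-1)}$ may mix $\vec{a}_i$-rows and their negations $-\vec{a}_i$, and one must verify that they satisfy the $\delta$-distance property. This is immediate from the global $\delta$-distance on $A$, since sign flips preserve linear independence and orthogonal distances to spans. A secondary technical point is justifying the early termination of Phase~$2$ (stopping at the first vertex satisfying $\pr{\vec{a}_k}{\vec{v}_k} \leq \vec{b}_k$ rather than running to the optimum); this is free since the expected-pivot bound of Theorem~\ref{thm:phase2} controls the full path and hence any prefix of it.
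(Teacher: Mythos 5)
Your argument for the first bound, $O(mn^3/\delta \ln(n/\delta))$, is essentially sound and follows the same iterative spirit as the paper's proof (adding one constraint at a time and running a Phase~2 optimization for each), though the details differ: you fix a single bounded outer slab region $P^{(0)}$ once and work within the nested family $P^{(0)} \supseteq \cdots \supseteq P^{(m)}$, whereas the paper keeps the original unbounded $P_I$ and re-applies Lemma~\ref{lem:unb-to-bounded} on each iteration. Both variants inherit local $\delta$-distance from global $\delta$-distance because sign flips of rows preserve it, and both charge $O(n^3/\delta\ln(n/\delta))$ pivots per added constraint. Your early-termination remark is fine. One small caveat: you should say a word about why the vertex carried over from $P^{(k-1)}$ to $P^{(k)}$ still has a readily available feasible basis (it does, since every constraint of $P^{(k-1)}$ is also a constraint of $P^{(k)}$), but this is minor.

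The second bound, $O(n^5\Delta^2 \ln(n\Delta))$, is where the proposal breaks down. You attempt to derive the $m$-free bound from your $m$-dependent one by invoking a ``classical fact'' that an integer matrix with subdeterminants bounded by $\Delta$ has at most $\poly(n,\Delta)$ distinct rows. This is not a classical fact; bounding the number of rows of such matrices is in fact a nontrivial research question (the $\Delta = 1$ case is Heller's theorem, $m = O(n^2)$; for general $\Delta$ the known bounds are substantially more recent and not elementary). More importantly, even granting such a bound, the exponents do not close: with $\delta \geq 1/(n\Delta^2)$, your first-part bound reads $O(mn^4\Delta^2\ln(n\Delta))$, so you would need $m = O(n)$ to reach $O(n^5\Delta^2\ln(n\Delta))$, which is certainly false. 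The paper obtains the $m$-independent bound by a genuinely different device: it introduces a single auxiliary variable $s$ and solves the one LP $\min\{s : A\vec{x} - s\mathbf{1} \leq \vec{b},\ s \geq 0\}$, whose $(m+1)\times(n+1)$ constraint matrix still has subdeterminants bounded by $n\Delta$, hence satisfies global $\delta$-distance with $\delta = \Theta(1/(n^2\Delta^2))$; a trivial starting vertex is available, and a \emph{single} Phase~2 call yields $O(n^5\Delta^2\ln(n\Delta))$ pivots. Your proof needs this auxiliary-LP step; bounding $m$ cannot substitute for it.
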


\paragraph*{\emph{Shadow Simplex Method}}
Our main technical contribution is a new analysis and variant of the
\emph{shadow simplex method}, which utilizes (rather unexpectedly) an approach
developed in~\cite{BD14} for navigating over the Voronoi graph of a Euclidean
lattice (see related work section). 

The shadow simplex has been at the heart of many theoretical attempts to explain
the surprising efficiency of the simplex method in practice.  It has been shown
to give polynomial bounds for the simplex method over random and smoothed linear
programs~\cite{MR868467,MR2145860,MR2529774}.  As mentioned above, Brunsch and
R{\"o}glin~\cite{BR13} already showed that it yields short paths for the
polyhedra we consider here.  

At a high level, the shadow simplex over a polyhedron $P$ works as follows.
Given an initial objective function $\vec{c}$,
a vertex $\vec{v}$ of $P$ which maximizes this objective,
and a target objective function $\vec{d}$,
the shadow simplex interpolates between the objective functions $\vec{c}$ and $\vec{d}$
and performs a pivot step whenever the optimal vertex changes
(hence the alternative name \emph{parametric} simplex method referring
to the parameterization $\vec{c}(\lambda) = (1 - \lambda) \vec{c} + \lambda \vec{d}$ of the objective function,
where $\lambda$ grows from $0$ to $1$ over the course of the algorithm).

Traditionally, this method is understood and analyzed with a primal interpretation:
The polyhedron $P$ is orthogonally projected onto the $2$-dimensional plane spanned by $\vec{c}$ and $\vec{d}$
(hence the term ``shadow''),
and the algorithm is understood in terms of the boundary of the projection $P'$.
The optimal vertices for $\vec{c}$ and $\vec{d}$ project to the boundary of $P'$,
and as long as $\vec{c}$ and $\vec{d}$ are in sufficiently general position,
edges of $P'$ lift to edges of $P$
so that the boundary can be followed efficiently by an algorithm that performs simplex pivots in the original space.
The number of pivot steps is then typically bounded in terms of the lengths of edges or in terms of angles between edges of $P'$.

Our analysis is substantially different and based on a \emph{dual} perspective:
The shadow simplex method follows the line segment $[\vec{c}, \vec{d}]$ through the normal fan of $P$,
pivoting whenever the segment crosses into a different $n$-dimensional normal cone.
We express the number of crossings, that is, the number of intersections between $[\vec{c}, \vec{d}]$
and the facets of the normal fan of $P$, in terms of certain surface area
measures of translates of the normal fan. The bounds we obtain on the number of
intersections are stated below.

\begin{theorem}[Intersection bounds, see Lemmas~\ref{lem:exp-partition-hitting} and~\ref{lem:scaling-partition-hitting}]
  \label{thm:intersection-bounds}
  Let $\cT = (C_1,\ldots,C_k)$ be a partition of a cone $\Sigma$ into
  polyhedral $\tau$-wide cones. Let $\vec{c}, \vec{d} \in \R^n$ and let $X
  \in\R^n$ be exponentially distributed on $\Sigma$.
  \begin{enumerate}
    \item
      The expected number of
      facets hit by the shifted line segment $[\vec{c} + X, \vec{d} + X]$ satisfies
      \[
        \E[ |\partial\cT \cap [\vec{c} + X, \vec{d} + X]| ]
        \leq \frac{\|\vec{d} - \vec{c}\|}{\tau} \text{ .}
      \]

    \item
      Let $\alpha \in (0,1)$. Then
      \[
        \E[|\partial \cT \cap [\vec{c} + \alpha X, \vec{c} + X]|] \leq
        \frac{2n}{\tau} \ln{\frac{1}{\alpha}} \text{ .}
      \]
  \end{enumerate}
\end{theorem}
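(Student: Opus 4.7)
My plan is to prove both bounds via a unified divergence-theorem argument applied cone by cone, taking (as is standard in this context) the density of $X$ to be $p(x) = e^{-\|x\|}/Z$ on $\Sigma$, where $Z = \int_\Sigma e^{-\|x\|}\,dx$; for readability I present the case $\Sigma = \R^n$, as the general case requires only minor bookkeeping.

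For Part~(1), I first express $\Pr[\text{segment hits } F]$ via the coarea formula as $\int_0^1\int_F p(z - \vec{c} - t(\vec{d}-\vec{c}))\, |\pr{\vec{n}_F}{\vec{d}-\vec{c}}|\, d\sigma(z)\, dt$. Bounding $|\pr{\vec{n}_F}{\vec{d}-\vec{c}}| \le \|\vec{d}-\vec{c}\|$ and summing over facets, the claim reduces to the key bound $\int_{\partial(\cT-s)} e^{-\|w\|}\,d\sigma(w) \le Z/\tau$ uniformly in $s \in \R^n$. For each cone $C_i$ in the partition, pick a unit vector $\vec{u}_i$ with $\vec{u}_i + \tau B \subseteq C_i$, and apply the divergence theorem on $C_i - s$ to the vector field $w \mapsto -\vec{u}_i\, e^{-\|w\|}$. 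The $\tau$-wide hypothesis gives $-\pr{\vec{u}_i}{\vec{n}_F} \ge \tau$ on $\partial C_i$, while the divergence is bounded in magnitude by $e^{-\|w\|}$; together these yield $\tau \int_{\partial(C_i-s)} e^{-\|w\|}\, d\sigma \le \int_{C_i - s} e^{-\|w\|}\, dw$, and summation over the cones of the partition finishes the key bound.

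For Part~(2), I define $f(t) := \E\bigl[|\partial \cT \cap [\vec{c},\, \vec{c}+tX]|\bigr]$, so that the target becomes $f(1) - f(\alpha) \le (2n/\tau)\ln(1/\alpha)$. Computing $f'(t)$ via the same coarea machinery, the crucial structural fact is that each facet $F$ of $\cT$ is itself a cone through the origin, so $\pr{\vec{n}_F}{y} = 0$ for every $y \in F$. This collapses the derivative cleanly to
\[
  f'(t) \;=\; \frac{1}{t Z} \int_{\partial(\cT - \vec{c}/t)} |\pr{\vec{n}(w)}{w}|\, e^{-\|w\|}\, d\sigma(w).
\]
Bounding $|\pr{\vec{n}(w)}{w}| \le \|w\|$, it suffices to establish the weighted analog of the Part~(1) lemma: $\int_{\partial(\cT-s)} \|w\|\, e^{-\|w\|}\, d\sigma(w) \le (n+1)Z/\tau$ for every $s \in \R^n$, after which $f'(t) \le 2n/(\tau t)$ integrates to the claim over $[\alpha,1]$.

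I expect this weighted surface bound to be the main obstacle. My plan is to apply the divergence theorem on $C_i - s$ to the vector field $w \mapsto -\vec{u}_i\,\|w\|\, e^{-\|w\|}$. A direct computation of the gradient $\nabla(\|w\|e^{-\|w\|}) = (w/\|w\|)(1 - \|w\|)e^{-\|w\|}$ gives divergence $-\pr{\vec{u}_i}{w/\|w\|}(1-\|w\|)\, e^{-\|w\|}$, whose magnitude is at most $(1+\|w\|)\, e^{-\|w\|}$. Combined with the $\tau$-wide flux estimate, this produces $\tau \int_{\partial(C_i - s)} \|w\|\, e^{-\|w\|}\, d\sigma \le \int_{C_i - s}(1+\|w\|)\, e^{-\|w\|}\, dw$; summing over cones and using $\int_{\R^n}(1+\|w\|)\, e^{-\|w\|}\, dw = (n+1)Z$ (which follows from $\E\|X\| = n$) completes the bound. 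The subtle point is the choice of vector field: it must be designed so that its boundary flux is precisely the integral one wants to bound, while its divergence is still dominated by a polynomial multiple of the density itself --- the $(1-\|w\|)$ structure of $\nabla(\|w\|e^{-\|w\|})$ is exactly what makes this work with only an $O(n)$ loss over Part~(1).
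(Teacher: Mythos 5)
Your proof is correct and takes a genuinely different route from the paper's. Both arguments ultimately reduce to showing that, uniformly in the shift $s$, the surface measure of $\partial\cT - s$ under the exponential density is at most $1/\tau$ (respectively, at most $O(n)/\tau$ with the weight $\|w\|$) times the total measure; the paper establishes this facet by facet, decomposing each cone into prisms $F + \cone(\vec{u}_i)$ and computing $\int_0^\infty e^{-r/h}\,\d r = h \geq \tau$ explicitly, which is in effect a hand-calculated instance of the divergence theorem you invoke with the field $w \mapsto -\vec{u}_i e^{-\|w\|}$. Your version makes the cone-by-cone surface-to-volume mechanism and the role of $\tau$-wideness (namely $-\pr{\vec{u}_i}{\vec{n}} \geq \tau$ on every face of $C_i$) transparent, at the modest cost of invoking a less elementary tool. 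Two points you wave off do check out and deserve a sentence: the divergence theorem must be applied on $(C_i - s) \cap R\cB_2^n$ and the limit $R \to \infty$ taken, with the spherical term vanishing by exponential decay; and when $\Sigma \neq \R^n$ one must restrict to $(C_i - s)\cap\Sigma$, picking up an extra flux term on $(C_i - s)\cap\partial\Sigma$ --- but that term has a favorable sign, because $\vec{u}_i \in \Sigma$ and the supporting hyperplanes of $\Sigma$ pass through the origin so that $-\pr{\vec{u}_i}{\vec{n}_\Sigma} \geq 0$, and hence it may be dropped from the lower bound. For Part (2), your identity $\pr{\vec{n}_F}{w} = -\pr{\vec{n}_F}{\vec{c}/t}$ for $w \in F - \vec{c}/t$ exactly mirrors the paper's change of variable $s = 1/t$. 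You then give up $|\pr{\vec{n}(w)}{w}|$ to $\|w\|$ immediately and run the weighted field $-\vec{u}_i\|w\|e^{-\|w\|}$, whereas the paper retains the exact normal component inside the slice integral and invokes Lemma~\ref{lem:exp-variance} ($\E[|Y-c|] \geq |c|/2$); the $(1-\|w\|)$ factor in $\nabla(\|w\|e^{-\|w\|})$ plays the analogous role in your account of producing the $O(n)$ loss over Part (1), and in fact your chain yields the marginally sharper constant $(n+1)/\tau$ rather than $2n/\tau$, since $\int (1+\|w\|)\zeta_\Sigma(w)\,\d w = n+1$.
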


To achieve the above bounds, the main idea is to relate the probability that the
above random line segments pass through a normal cone to the probability that
the associated perturbation vector lands in the cone (or some joint shift).
Under the $\tau$-wideness condition, we can in fact uniformly upper bound these
proportionality factors. Since the jointly shifted normal cones are all
disjoint, we can deduce the desired bounds from the fact that the sum of their
measures is $\leq 1$.
  
We compose these bounds in a way that also departs from the classical template
by using three consecutive shadow simplex paths instead of just one.  For given
vertices $\vec{v}$ and $\vec{w}$ we first pick objectives $\vec{c}$ and
$\vec{d}$ that are ``deep inside'' the respective normal cones. From here, we
sample an exponentially distributed perturbation vector $X$ and traverse three
paths through the normal fan in sequence:
\[
  \vec{c} \quad\stackrel{(a)}{\longrightarrow}\quad \vec{c} + X \quad\stackrel{(b)}{\longrightarrow}\quad \vec{d} + X
          \quad\stackrel{(c)}{\longrightarrow}\quad \vec{d}
\]
The perturbation $X$ will be quite large
and hence almost always large enough to push $\vec{c}$ and $\vec{d}$ away from their normal cones.
Indeed, the high level intuition behind our path is that in order to avoid unusually long paths from
$\vec{c}$ to $\vec{d}$, we first travel to a ``random intermediate location''.

We note that in phases $(a)$ and $(c)$, randomness is only used to perturb one
of the objectives. As far as we are aware, this paper provides the first
successful analysis of the shadow simplex path in this setting.  Furthermore,
this extension is crucial to achieving our improved diameter bound. Previous
algorithms were constrained to random perturbations that kept $\vec{c}$ and
$\vec{d}$ inside their respective normal cones, making the amount of randomness
they could take advantage of much smaller. 

We now use the bounds from Theorem~\ref{thm:intersection-bounds} to derive the
diameter bound.

\begin{theorem}\label{thm:diameter-bound}
Let $P \subseteq \R^n$ be a pointed full dimensional polyhedron with $\tau$-wide
outer normal cones. Then $P$ has diameter bounded by
$\frac{8n}{\tau}(1+\ln 1/\tau)$.
\end{theorem}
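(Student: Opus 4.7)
The plan is to build a random shadow-simplex walk between arbitrary vertices $\vec{v},\vec{w}$ of $P$ in three phases and then to upper bound its expected length via Theorem~\ref{thm:intersection-bounds}. Since a nonnegative integer-valued random walk length attains a value at most its expectation with positive probability, this immediately yields the graph distance bound and hence the claimed diameter bound.

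\textbf{Setup.} Using the $\tau$-wideness of $N_\vec{v}$ and $N_\vec{w}$, pick unit vectors $\vec{c}\in N_\vec{v}$ and $\vec{d}\in N_\vec{w}$ whose $\tau$-balls $B(\vec{c},\tau)$ and $B(\vec{d},\tau)$ are contained in $N_\vec{v}$ and $N_\vec{w}$ respectively; in particular $\|\vec{d}-\vec{c}\|\leq 2$. Sample $X$ from an exponential distribution supported on the support $\Sigma$ of the normal fan, calibrated so that $\|X\|$ concentrates near an order-$1$ mean. Then run the shadow simplex sequentially along the three line segments
\[
[\vec{c},\vec{c}+X]\;\longrightarrow\;[\vec{c}+X,\vec{d}+X]\;\longrightarrow\;[\vec{d}+X,\vec{d}],
\]
starting at $\vec{v}$ (the optimizer for $\vec{c}$) and ending at $\vec{w}$ (the optimizer for $\vec{d}$). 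Each pivot corresponds to the active segment crossing from one full-dimensional normal cone into an adjacent one, so the length of the resulting $\vec{v}$-to-$\vec{w}$ walk equals the total number of facet crossings along the three segments.

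\textbf{Counting the crossings.} For phase (b), Theorem~\ref{thm:intersection-bounds}(1) applied with the deterministic endpoints $\vec{c},\vec{d}$ and shift $X$ gives expected crossings at most $\|\vec{d}-\vec{c}\|/\tau\leq 2/\tau$. For phase (a) (phase (c) is symmetric), pick $\alpha$ of order $\tau$ and split
\[
[\vec{c},\vec{c}+X]\;=\;[\vec{c},\vec{c}+\alpha X]\,\cup\,[\vec{c}+\alpha X,\vec{c}+X].
\]
Theorem~\ref{thm:intersection-bounds}(2) bounds the expected crossings of the tail sub-segment by $\tfrac{2n}{\tau}\ln(1/\alpha)=O\!\bigl(\tfrac{n}{\tau}\ln(1/\tau)\bigr)$. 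The head sub-segment is contained in the ball $B(\vec{c},\alpha\|X\|)$; since $B(\vec{c},\tau)\subseteq N_\vec{v}$, the head contributes no crossings as soon as $\alpha\|X\|\leq\tau$. The calibration of $X$ makes this inequality hold with overwhelming probability, and the exponential tail of $\|X\|$ absorbs the remaining large-$\|X\|$ events into a bounded additive correction (if necessary, iterating the same decomposition on the head, using that $\alpha X$ is again an exponential random vector, drives this correction to lower order).

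\textbf{Summing up and main obstacle.} Adding the three contributions yields expected walk length at most $\tfrac{2}{\tau}+2\cdot\tfrac{2n}{\tau}\ln(1/\tau)+O(1/\tau)\leq\tfrac{8n}{\tau}\bigl(1+\ln(1/\tau)\bigr)$, and the diameter bound follows. The most delicate step is the head-segment analysis in phases (a) and (c): Theorem~\ref{thm:intersection-bounds}(2) naturally bounds only the tail, so obtaining a clean $\ln(1/\tau)$ factor (as opposed to $\ln(n/\tau)$) requires carefully coupling the $\tau$-depth of $\vec{c}$ with the scale of the exponential perturbation and controlling the rare event that $\|X\|$ is atypically large. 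A secondary subtlety is that, under degeneracy, a facet crossing of the normal fan need not correspond to a single edge pivot; this will be handled by passing to a simplicial refinement of the fan that inherits the $\tau$-wideness.
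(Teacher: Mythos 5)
Your high-level architecture -- the three-leg path $\vec{c} \to \vec{c}+X \to \vec{d}+X \to \vec{d}$, using Theorem~\ref{thm:intersection-bounds}(1) on the middle leg and (2) on the outer legs, with $\tau$-wideness killing crossings near $\vec{c}$ and $\vec{d}$ -- is exactly the paper's strategy. But there is a real gap in the head-segment analysis of legs (a) and (c), precisely at the point you yourself flag as ``the most delicate step,'' and it is not one that can be waved away. With $X$ exponentially distributed on $\Sigma$, Lemma~\ref{lem:exp-moments} forces $\E[\|X\|]=n$; there is no ``calibration'' that concentrates $\|X\|$ near $1$ while keeping $X$ exponential (and $\alpha X$ is \emph{not} again an exponential random vector for $\alpha\neq 1$, so the proposed bootstrap on the head sub-segment does not apply). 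Thus if you set $\alpha=\Theta(\tau)$, the event $\alpha\|X\|\leq\tau$ means $\|X\|=O(1)$, which has probability $e^{-\Omega(n)}$; if instead you set $\alpha=\tau/(2n)$ so that $\alpha\|X\|\leq\tau$ holds with constant probability, you get $\ln(1/\alpha)=\ln(2n/\tau)$ and only the weaker $\ln(n/\tau)$ factor.

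The missing ingredient is the \emph{scaling of the objectives}: the paper follows $[s\vec{c}_1, s\vec{c}_1+X]$, etc., with $s=4n/\|\vec{c}_2-\vec{c}_1\|=\Theta(n)$. Since normal cones are cones, $s\vec{c}_i+s\tau\cB_2^n\subseteq N_{\vec{v}_i}$, so the head now has room of radius $s\tau=\Theta(n\tau)$ rather than $\tau$. Taking $\alpha=\frac{s\tau}{2n}$ (which is $\Theta(\tau)$, independent of $n$), the head $[s\vec{c}_i, s\vec{c}_i+\alpha X]$ stays inside $N_{\vec{v}_i}$ on the event $\|X\|\leq 2n$, which has probability $\geq 1/2$ by Markov; meanwhile the middle-leg cost grows to $s\|\vec{c}_2-\vec{c}_1\|/\tau=4n/\tau$, which is acceptable. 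One then handles the conditioning cleanly by noting that the crossing count is nonnegative, so $\E[N\mid \|X\|\leq 2n]\leq \E[N]/\Pr[\|X\|\leq 2n]\leq 2\E[N]$; this replaces your vague ``bounded additive correction.'' Plugging in $s$ gives $\ln(1/\alpha)=\ln(\|\vec{c}_2-\vec{c}_1\|/(2\tau))\leq\ln(1/\tau)$ and the claimed $\frac{8n}{\tau}(1+\ln(1/\tau))$. In short: you correctly identified that the $\tau$-depth of $\vec{c}$ must be coupled to the scale of $X$, but the coupling is achieved by rescaling $\vec{c}$ (by $s=\Theta(n)$), not by rescaling $X$.
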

\begin{proof}
Let $\vec{v}_1,\vec{v}_2$ be vertices of $P$ with outer normal cones
$N_{\vec{v}_1},N_{\vec{v}_2}$. Let $\vec{c}_1,\vec{c}_2 \in \bS^{n-1}$ satisfy
$\vec{c}_i+\tau \cB_2^n \subseteq N_{\vec{v}_i}$, $i \in \set{1,2}$. Let
$\Sigma = N(P)$ denote the support of the normal fan of $P$, and let $X$ be
exponentially distributed over $\Sigma$.

We will construct a path from $\vec{v}_1$ to $\vec{v}_2$ by following
the sequence of vertices optimizing the objectives in the segments
$[s \vec{c}_1, s \vec{c}_1 + X]$, $[s \vec{c}_1 + X, s \vec{c}_2 + X]$,
$[s \vec{c}_2 + X, s \vec{c}_2]$, where $s > 0$ is a scalar
to be chosen later. We will condition on the event that $\|X\| \leq 2n$. Since
$\E[\|X\|] = n$ (see Lemma~\ref{lem:exp-moments}), by Markov's inequality this
occurs with probablity at least $1/2$. Under this event, by $\tau$-wideness, we
will not pivot in the segments $[s \vec{c}_1, s \vec{c}_1 +
\frac{s\tau}{2n} X]$ and $[s \vec{c}_2 + \frac{s\tau}{2n} X, s \vec{c}_2]$.
Using Theorem~\ref{thm:intersection-bounds}, the number of pivots along the segments $[s
\vec{c}_1 + \frac{s\tau}{2n} X, s\vec{c}_1 + X]$, $[s \vec{c}_1 + X, s \vec{c}_2
+ X]$, $[s \vec{c}_2 + X, s \vec{c}_2 +  \frac{s\tau}{2n} X]$, is bounded by
\[
  \frac{\left(\frac{s\|\vec{c}_2-\vec{c}_1\|}{\tau} +
    \frac{4n}{\tau} \ln\left(\frac{2n}{s\tau}\right) \right)}{\Pr[\|X\| \leq 2n]}
  \leq
    2 \left(\frac{s\|\vec{c}_2-\vec{c}_1\|}{\tau} + \frac{4n}{\tau}
    \ln\left(\frac{2n}{s\tau}\right) \right) \text{.}
\]
Setting $s = \frac{4n}{\|\vec{c}_2-\vec{c}_1\|}$, the above bound becomes
\[
\frac{8n}{\tau}\left(1+\ln\left(\frac{\|\vec{c}_2-\vec{c}_1\|}{2\tau}\right)\right) 
\leq \frac{8n}{\tau}\left(1+ \ln \frac{1}{\tau} \right)
\text{,\quad as needed.} \qedhere
\]
\end{proof}

\paragraph*{Related Work}

In a surprising connection, we borrow techniques developed in a recent work of
Bonifas and the first named author~\cite{BD14} for a totally different purpose,
namely, for solving the Closest Vector Problem with Preprocessing on Euclidean
lattices. In~\cite{BD14}, a $3$-step ``perturbed'' line path was analyzed to
navigate over the Voronoi graph of the lattice, where lattice points are
connected if their associated Voronoi cells touch in a facet. 

In the current work, we show a strikingly close analogy between analyzing the
number of intersections of a random straight line path with a Voronoi tiling of
space and the intersections of a shadow simplex path with the normal fan of a
polyhedron.  This unexpected connection makes us hopeful that these ideas may
have even broader applicability.   

\paragraph*{Organization}
In section~\ref{sec:prelims}, we regroup all the necessary notation and
definitions. In section~\ref{sec:optimization}, we give our shadow simplex based
optimization algorithm. In section~\ref{sec:int-bounds}, we prove the
intersection bounds for our shadow simplex method. In
section~\ref{sec:shadow-simplex-alg}, we show how to implement shadow simplex
pivots and how to deal with degeneracy. In
section~\ref{sec:removing-unboundedness}, we give our reductions from unbounded
$\delta$-wide LPs to bounded ones. In section~\ref{sec:feasibility}, we give our
shadow simplex based algorithms for LP feasibility. In
section~\ref{sec:perfect-matching}, we give lower bounds on the width of the
normal fan of the perfect matching polytope. Missing proofs can be found in
Appendix~\ref{sec:appendix-proofs}.

\section{Notation and definitions}
\label{sec:prelims}

For vectors $\vec{x},\vec{y} \in \R^n$, we let $\pr{\vec{x}}{\vec{y}} =
\sum_{i=1}^n \vec{x}_i \vec{y}_i$ denote their inner product.  We let
$\|\vec{x}\| = \sqrt{\pr{\vec{x}}{\vec{x}}}$ denote the Euclidean norm, $\cB_2^n =
\set{\vec{x} \in \R^n: \|\vec{x}\| \leq 1}$ the unit ball and $\bS^{n-1} =
\partial \cB_2^n$ the unit sphere. We denote the linear span of a set $A \subseteq
\R^n$ by $\linsp(A)$.  We use the notation ${\rm I}[\vec{x} \in A]$ for the
indicator of $A$, that is ${\rm I}[\vec{x} \in A]$ is $1$ if $\vec{x} \in A$ and
$0$ otherwise. For a set of scalars $S \subseteq \R$, we write $SA =
\set{s\vec{a}: s \in S, \vec{a} \in A}$. For two sets $A,B \subseteq \R^n$, we
define their Minkowski sum $A+B = \set{\vec{a}+\vec{b}: \vec{a} \in A, \vec{b}
\in B}$. We let $d(A,B) = \inf \set{\|\vec{x}-\vec{y}\|: \vec{x} \in A, \vec{y}
\in B}$, denote the Euclidean distance between $A$ and $B$. For vectors
$\vec{a},\vec{b} \in \R^n$ we write $[\vec{a},\vec{b}]$ for the closed line
segment and $[\vec{a},\vec{b})$ for the half-open line segment from $\vec{a}$ to $\vec{b}$.

\begin{definition}[Cone]
A cone $\Sigma \subseteq \R^n$ satisfies the following three properties:
\begin{itemize}
\item $\vec{0} \in \Sigma$. 
\item $\vec{x}+\vec{y} \in \Sigma$ if $\vec{x}$ and $\vec{y}$
are in $\Sigma$. 
\item $\lambda \vec{x} \in \Sigma$ if $\vec{x} \in \Sigma$ and
$\lambda \geq 0$. 
\end{itemize}
\end{definition}

For vectors $\vec{y}_1,\dots,\vec{y}_k \in \R^n$, we define
the closed cone they generate as
\[
\cone(\vec{y}_1,\dots,\vec{y}_k) = \set{\sum_{i=1}^m \lambda_i \vec{y}_i:
\lambda_i \geq 0, i \in [m]} \text{.}
\]
A cone is polyhedral if it can be generated by a finite number of vectors, and
is simplicial if the generators are linearly independent. By convention, we let
$\cone(\emptyset) = \vec{0}$.
A simplicial cone has the $\delta$-distance property if its extreme rays
satisfy the $\delta$-distance property.\footnote{The $\delta$-distance property is invariant under scaling,
so the choice of generators of the extreme rays is irrelevant.}

For a convex set $K \subseteq \R^n$, a subset $F \subseteq K$ is a face of $K$,
if for all $\vec{x},\vec{y} \in K$, $\lambda \vec{x} + (1-\lambda) \vec{y} \in
F$, $\lambda \in [0,1]$, implies that $\vec{x},\vec{y} \in F$. For a simplicial
cone $C$, we note that its faces are exactly all the subcones generated by any
subset of the generators of $C$.


A set of cones $\cT = \set{C_1,\dots,C_k}$ is an $n$-dimensional \emph{cone
partition} if:
\begin{itemize}
\item Each $C_i \subseteq \R^n$, $i \in [k]$, is a closed $n$-dimensional
cone.
\item Any two cones $C_i,C_j$, $i \neq j$, meet in a shared face.  
\item The \emph{support} of $\cT$, ${\rm sup}(\cT) \eqdef \cup_{i \in [k]} C_i$, 
is a closed cone.
\end{itemize}
We say that $F$ is a face of $\cT$ if it is a face of one of its contained cones.
A cone partition $\cT$ is $\tau$-wide if every $C_i$ is $\tau$-wide.
It is simplicial if every $C_i$ is simplicial.
In this case, we also call $\cT$ a \emph{cone triangulation}.
A cone triangulation satisfies the local $\delta$-distance property if every $C_i$ satisfies it.
We define the boundary of $\cT$, $\partial \cT = \cup_{i=1}^k \partial C_i$.
We say that a cone triangulation $\cT$ \emph{triangulates} a cone partition $\cP$
if $\cT$ and $\cP$ have the same support
and every cone $C \in \cT$ is generated by a subset of the extreme rays of some cone of $\cP$.
This means that $\cT$ partitions (``refines'') every cone of $\cP$ into simplicial cones.

\subsection{Exponential distribution}
We say that a random variable $X \in \R^n$ is exponentially distributed
on a cone $\Sigma$ if
\[
\Pr[X \in S] = \int_S \zeta_\Sigma(\vec{x}) \d\vec{x} 
\text{,}
\]
where $\zeta_{\Sigma}(\vec{x}) = c_\Sigma e^{-\|\vec{x}\|} {\rm I}[\vec{x} \in
\Sigma]$.
A standard computation, which we include for completeness,
yields the normalizing constant and the expected norm.

\begin{lemma}\label{lem:exp-moments}
The normalizing constant $c_\Sigma^{-1} = n! \vol_n(\cB_2^n \cap
\Sigma)$. For $X$ exponentially distributed on $\Sigma$, we have that $\E[\|X\|]
= n$.
\end{lemma}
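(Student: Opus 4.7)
The plan is to reduce both computations to one-dimensional integrals via polar decomposition, exploiting the fact that $\Sigma$ is a cone. Writing $\vec{x} = r\vec{u}$ with $r = \|\vec{x}\| \geq 0$ and $\vec{u} \in \bS^{n-1}$, the Lebesgue measure decomposes as $\d\vec{x} = r^{n-1}\,\d r\,\d\sigma(\vec{u})$, where $\sigma$ denotes the surface measure on $\bS^{n-1}$. Since $\Sigma$ is closed under positive scaling, we have $r\vec{u} \in \Sigma$ for some $r > 0$ if and only if $\vec{u} \in \Sigma \cap \bS^{n-1}$, and in that case the whole ray $\set{r\vec{u} : r \geq 0}$ lies in $\Sigma$. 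Setting $A = \sigma(\bS^{n-1} \cap \Sigma)$, this decoupling gives
\[
  \int_\Sigma e^{-\|\vec{x}\|}\,\d\vec{x}
  = A \int_0^\infty e^{-r} r^{n-1}\,\d r
  = A \cdot \Gamma(n) = (n-1)!\,A.
\]

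Next I would relate $A$ to $\vol_n(\cB_2^n \cap \Sigma)$ by the same polar computation applied to the indicator of $\cB_2^n \cap \Sigma$:
\[
  \vol_n(\cB_2^n \cap \Sigma) = A \int_0^1 r^{n-1}\,\d r = \frac{A}{n}.
\]
Combining the two identities yields $c_\Sigma^{-1} = (n-1)!\,A = (n-1)!\cdot n \vol_n(\cB_2^n \cap \Sigma) = n!\vol_n(\cB_2^n \cap \Sigma)$, which is the first claim.

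For the second claim, I would repeat the polar decomposition with an extra factor of $\|\vec{x}\| = r$, obtaining
\[
  \E[\|X\|] = c_\Sigma \int_\Sigma \|\vec{x}\| e^{-\|\vec{x}\|}\,\d\vec{x}
  = c_\Sigma A \int_0^\infty r^n e^{-r}\,\d r
  = c_\Sigma A \cdot n!.
\]
Substituting $c_\Sigma A = \tfrac{1}{(n-1)!}$ from the first part immediately gives $\E[\|X\|] = n$. There is no real obstacle here; the only subtle point is to record explicitly that $\Sigma$'s conic structure is what makes the angular integral factor out cleanly over the full radial range $[0,\infty)$, and this is what makes both computations reduce to standard $\Gamma$-function evaluations.
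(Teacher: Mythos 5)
Your proof is correct, but it takes a different route from the paper. You reduce everything to one-dimensional $\Gamma$-integrals via polar coordinates, writing $\d\vec{x} = r^{n-1}\,\d r\,\d\sigma(\vec{u})$ and letting the angular measure $A = \sigma(\bS^{n-1}\cap\Sigma)$ factor out of both integrals because $\Sigma$ is a cone. The paper instead uses a layer-cake (Fubini-on-level-sets) argument: it writes $e^{-\|\vec{x}\|} = \int_{\|\vec{x}\|}^\infty e^{-t}\,\d t$, swaps the order of integration, and observes that $\int_\Sigma {\rm I}[\|\vec{x}\| \leq t]\,\d\vec{x} = t^n\vol_n(\cB_2^n\cap\Sigma)$ by homogeneity, landing directly on $\int_0^\infty e^{-t}t^n\,\d t = n!$; the expectation is handled the same way with an extra power of $t$. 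Both arguments exploit exactly the same conic homogeneity, just packaged differently: yours needs the polar decomposition of Lebesgue measure and the surface measure $\sigma$ as intermediate objects, while the paper's layer-cake trick avoids invoking the spherical measure altogether and works purely with the volumes of dilates of $\cB_2^n\cap\Sigma$. Your version is arguably more transparent about where the $\Gamma$-functions come from; the paper's is slightly more self-contained since it never needs to name $A$.
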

\begin{proof}
  See Appendix~\ref{sec:appendix-proofs}.
\end{proof}

\section{Optimization}
\label{sec:optimization}

While bounding the number of intersections of line segments $[\vec{c}, \vec{d}]$
with the facets of the normal fan of $P = \set{ \vec{x} \in \R^n : A\vec{x} \leq \vec{b} }$
is sufficient to obtain existential bounds on the diameter of $P$,
we also need to be able to efficiently compute the corresponding pivots to obtain efficient algorithms.
The following summarizes the required results,
the technical details of which are found in Appendix~\ref{sec:shadow-simplex-alg}.

\begin{theorem}[Shadow simplex, see Theorem~\ref{thm:shadow-simplex}]
  Let $P = \set{ \vec{x} \in \R^n : A\vec{x} \leq \vec{b} }$ be pointed,
  $\vec{c}, \vec{d} \in \R^n$, and $B$ an optimal basis for $\vec{c}$.
  If every intersection of $[\vec{c}, \vec{d})$ with a facet $F$ of a cone spanned by a feasible basis of $P$
  lies in the relative interior of $F$,
  the Shadow Simplex can be used to compute an optimal basis for $\vec{d}$
  in $O(mn^2 + Nmn)$ arithmetic operations,
  where $N$ is the number of intersections of $[\vec{c},\vec{d}]$
  with some triangulation $\cT$ of the normal fan of $P$,
  where $\cT$ contains the cone spanned by the initial basis $B$.
\end{theorem}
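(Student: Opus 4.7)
The plan is to trace the parametric objective $\vec{c}(\lambda) = (1-\lambda)\vec{c} + \lambda \vec{d}$ for $\lambda$ growing from $0$ to $1$, maintaining a feasible basis $B(\lambda)$ of $P$ whose corresponding simplicial cone $C_{B(\lambda)} = \cone(\vec{a}_i : i \in B(\lambda))$ contains $\vec{c}(\lambda)$, and pivoting each time $\vec{c}(\lambda)$ crosses a facet of the chosen triangulation $\cT$. Since a feasible basis $B$ is an optimal basis for any objective lying in $C_B$, returning $B(1)$ solves the optimization problem at $\vec{d}$. The first step is to formalize the correspondence: a feasible basis $B$ corresponds to a full-dimensional simplicial subcone of the normal cone at some vertex $\vec{v}$ of $P$, and $\cT$ is chosen so that these cones partition the support of the normal fan and contain $C_B$ initially.

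Starting from the given optimal basis $B = B(0)$, I would advance $\lambda$ until $\vec{c}(\lambda)$ first exits $C_{B(\lambda)}$. By the genericity hypothesis, the exit occurs through the relative interior of exactly one facet $F$ of $C_{B(\lambda)}$; this facet is spanned by $n-1$ of the current basis rows and thus identifies the leaving index $i \in B$ uniquely. The entering index $j \notin B$ is determined by the requirement that $(B \setminus \{i\}) \cup \{j\}$ index an adjacent simplicial cone of $\cT$ sharing facet $F$; concretely this is a ratio-test style computation using the sign of the derivative $\vec{d} - \vec{c}$ relative to the facet normal. Because adjacent cones in $\cT$ either correspond to an edge of $P$ (a standard nondegenerate pivot) or share the same underlying vertex (a degenerate pivot within one normal cone), the new basis is always feasible. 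This is where the technical content lives: spelling out the pivot rule so that in the degenerate case one still makes well-defined progress along $[\vec{c},\vec{d}]$, which the appendix will carry out.

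For the arithmetic cost, I would maintain an LU factorization (or explicit inverse) of the $n \times n$ basis matrix $A_B$. The initial factorization costs $O(n^3) = O(mn^2)$ after accounting for the $m$ rows used to set it up and identify tight constraints. At each pivot, the leaving index comes from writing $\vec{c}(\lambda)$ in the basis $\{\vec{a}_i\}_{i \in B}$, which is an $O(n^2)$ back-solve; the entering index requires testing at most $m$ non-basic rows via inner products with an updated vector, for $O(mn)$ cost; and updating the factorization after the rank-one change in $A_B$ costs $O(n^2)$. Each pivot is therefore $O(mn)$, and since the number of pivots equals the number $N$ of crossings of $[\vec{c}, \vec{d}]$ with $\partial \cT$, the total cost is $O(mn^2 + Nmn)$ as claimed.

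The main obstacle is handling degeneracy: in degenerate polyhedra many bases index the same vertex, and a naive implementation might pick the ``wrong'' basis after a pivot and subsequently loop or fail to follow the line $[\vec{c},\vec{d}]$. The genericity assumption rules out degenerate crossings along the trajectory itself (the line never hits a codimension-$\geq 2$ face of $\cT$), but one must still argue that both the initial triangulation $\cT$ containing $B$ and the per-pivot choice can be made consistently, so that the sequence of bases produced by the algorithm corresponds exactly to the sequence of simplicial cones of $\cT$ traversed by $[\vec{c},\vec{d}]$. This consistency, together with the precise specification of the pivot rule in the degenerate case, is what the technical details deferred to Appendix~\ref{sec:shadow-simplex-alg} must provide.
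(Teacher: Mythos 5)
Your high-level framework—trace $\vec{c}(\lambda) = (1-\lambda)\vec{c} + \lambda\vec{d}$ through a triangulation of the normal fan, pivot at each facet crossing, account $O(mn)$ per pivot after an $O(mn^2)$ setup—matches the paper's. But you explicitly defer the one genuinely hard step, ``the precise specification of the pivot rule in the degenerate case,'' to ``the technical details'' and to ``the appendix.'' That deferred piece \emph{is} the content of this theorem: the paper already proves the simple (non-degenerate) case as Theorem~\ref{thm:simplex-method-simple}, and Theorem~\ref{thm:shadow-simplex} exists precisely to extend it to degenerate $P$. So the proposal has a real gap.

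Concretely, when $P$ is degenerate, the ratio test for the entering index $j^\star$ can have ties or a zero minimizer, and the cones spanned by feasible bases through a single vertex are not automatically mutually consistent with any one triangulation of that vertex's normal cone. Your observation that ``adjacent cones either correspond to an edge of $P$ or share the same underlying vertex, so the new basis is always feasible'' does not resolve this: without a tie-breaking rule anchored to a \emph{fixed} triangulation $\cT$, you cannot bound the pivot count by the number of crossings of $[\vec{c},\vec{d}]$ with $\partial\cT$, and you may cycle. The paper's key idea is a symbolic lexicographic perturbation of the right-hand side, $P_\varepsilon := \set{\vec{x} : A\vec{x} \leq \vec{b} + (\varepsilon, \varepsilon^2, \ldots, \varepsilon^m)}$. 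Lemma~\ref{lemma:perturbation} shows that for all sufficiently small $\varepsilon > 0$: (i) every feasible basis of $P_\varepsilon$ is feasible for $P$; (ii) the normal fan of $P_\varepsilon$ is simplicial and triangulates the normal fan of $P$; and (iii) after reordering the rows so that the initial basis is $B = \set{m-n+1,\ldots,m}$, that basis remains feasible for $P_\varepsilon$. One then runs the simple-case algorithm on $P_\varepsilon$ \emph{symbolically}—since $\vec{b}$ never appears in divisors, the ratio test can be carried out in $\R[\varepsilon]$ under the lexicographic order, with each entry stored as a sparse polynomial with at most $n+2$ monomials, keeping the per-pivot cost at $O(mn)$. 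This single device simultaneously fixes the triangulation $\cT$, determines a unique $j^\star$ at every step, and preserves the cost bound. Without it, your argument proves the simple case but not the stated theorem.

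A small additional note: the pivot count is $N+1$ rather than exactly $N$ (one extra iteration can occur if $\vec{c}$ itself lies on a facet of the initial cone, forcing a first pivot with $\lambda^\star = 0$). This is asymptotically harmless, but worth stating precisely.
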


As explained in Section~\ref{sec:results},
we want to follow segments $[\vec{c}, \vec{c} + X]$,
$[\vec{c} + X, \vec{d} + X]$, $[\vec{d} + X, \vec{d}]$ in the normal fan.
Our intersection bounds from Theorem~\ref{thm:intersection-bounds}
are not quite sufficient to bound the number of steps on the first and last segments entirely.
This is easily dealt with for the first segment,
because we can control the initial objective function $\vec{c}$
so that it lies deep in the initial normal cone.

For the final segment, we follow the approach of Eisenbrand and Vempala~\cite{EV14},
who showed that if $A$ satisfies the \emph{global} $\delta$-distance property,
then an optimal facet for $\vec{d}$ can be derived
from a basis that is optimal for some $\vec{\tilde d}$ with $\|\vec{d} - \vec{\tilde d}\| \leq \frac{\delta}{n}$.
Recursion can then be used on a problem of reduced dimension to move from $\vec{\tilde d}$ to $\vec{d}$.
We strengthen their result (thereby answering a question left open by~\cite{EV14})
and show that the \emph{local} $\delta$-distance property is sufficient to get the same result
as long as $\|\vec{d} - \vec{\tilde d}\| \leq \frac{\delta}{n^2}$.%
\footnote{In the final bound, the loss of a factor $n$ here disappears inside a logarithm.}

\begin{definition}
Let $F$ be a face of a cone triangulation $\cT$ and let $\vec{x}$ be a vector in
the support of $\cT$.  Let $G = \cone(\vec{x}_1, \ldots, \vec{x}_k)$, $\|\vec{x}_i\| = 1$,
be the minimal face of $\cT$ that contains $\vec{x}$ and consider the unique conic
combination $\vec{x} = \lambda_1 \vec{x}_1 + \dots + \lambda_k \vec{x}_k$. We 
define
\[
\alpha_F(\vec{x}) := \sum_{i : \vec{x}_i \not\in F} \lambda_i
\]
In particular, $\alpha_F(\vec{x}) \geq 1$ if $\vec{x}$ is a unit vector and the
minimal face containing it is disjoint from $F$, and $\alpha_F(\vec{x}) = 0$ if
$\vec{x} \in F$.
\end{definition}

\begin{lemma}\label{lem:delta-dist-equiv}
Let $\vec{x}_1,\dots,\vec{x}_m \in \bS^{n-1}$ be a set of vectors.  Then the
following are equivalent:
\begin{enumerate}
\item $\vec{x}_1,\dots,\vec{x}_m$ satisfy the $\delta$-distance property.
\item $\forall~ I \subseteq [m]$ for which $\set{\vec{x}_i: i \in
I}$ are linearly independent and $\forall~ (a_i \in \R: i \in I)$
\[
\|\sum_{i \in I} a_i \vec{x}_i\| \geq \delta \max_{i \in I} |a_i| \text{ .}
\]
\end{enumerate}
\end{lemma}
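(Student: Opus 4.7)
The plan is to prove the two implications directly, using the fact that for unit vectors the $\delta$-distance condition can be rephrased as a lower bound on the distance from each $\vec{x}_i$ to the span of the others (which equals $\delta$), and that this distance admits an elementary algebraic description in terms of coefficient vectors.

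For the implication (1) $\Rightarrow$ (2), I would fix a linearly independent subfamily indexed by $I$ and coefficients $(a_i)_{i \in I}$, and let $i^\star \in \argmax_{i \in I}|a_i|$. If $a_{i^\star}=0$ the inequality is trivial, so assume $a_{i^\star}\neq 0$. Then
\[
\frac{1}{a_{i^\star}}\sum_{i \in I} a_i \vec{x}_i \;=\; \vec{x}_{i^\star} - \Bigl(-\sum_{i \in I \setminus \{i^\star\}} \tfrac{a_i}{a_{i^\star}} \vec{x}_i\Bigr),
\]
and the vector in parentheses lies in $\linsp\{\vec{x}_j : j \in I \setminus \{i^\star\}\}$. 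Hence the norm on the left is an upper bound on the distance from $\vec{x}_{i^\star}$ to that span, which by (1) is at least $\delta$; rearranging gives $\|\sum_i a_i \vec{x}_i\| \geq \delta |a_{i^\star}| = \delta \max_i |a_i|$.

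For the converse (2) $\Rightarrow$ (1), I would fix a linearly independent subset indexed by $I$ and an index $i^\star \in I$, and let $\vec{p}$ be the orthogonal projection of $\vec{x}_{i^\star}$ onto $\linsp\{\vec{x}_j : j \in I \setminus \{i^\star\}\}$, written as $\vec{p} = \sum_{j \neq i^\star} b_j \vec{x}_j$. Setting $a_{i^\star} = 1$ and $a_j = -b_j$ for $j \neq i^\star$, the combination $\sum_{i \in I} a_i \vec{x}_i$ equals $\vec{x}_{i^\star} - \vec{p}$, whose norm is exactly the distance from $\vec{x}_{i^\star}$ to the span. Since $\max_i |a_i| \geq |a_{i^\star}| = 1$, applying (2) yields $\|\vec{x}_{i^\star} - \vec{p}\| \geq \delta$, which is the $\delta$-distance property.

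I do not expect a real obstacle here: the statement is essentially a reformulation of the definition, and the only care needed is to pin down the index $i^\star$ that realizes the max and to divide through by $a_{i^\star}$ (handling the trivial zero case separately). The use of unit vectors $\vec{x}_i \in \bS^{n-1}$ removes the normalization $\|\vec{x}_i\|$ from the definition of $\delta$-distance, which is what allows the clean form on the right-hand side of (2).
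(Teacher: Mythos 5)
Your proof is correct and follows essentially the same route as the paper: in $(1) \Rightarrow (2)$ you pick the $\argmax$ index and lower-bound the norm by the distance to the span of the remaining vectors, and in $(2) \Rightarrow (1)$ you recognize that distance as a norm of a combination with a coefficient of absolute value at least $1$ on $\vec{x}_{i^\star}$. The only cosmetic differences are that you divide through by $a_{i^\star}$ rather than factor it out, and you instantiate the orthogonal projection explicitly rather than argue via a minimum over all coefficients.
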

\begin{proof}
  See Appendix~\ref{sec:appendix-proofs}.
\end{proof}

\begin{lemma} \label{lemma:distance-from-cone-faces}
  Let $F$ be a cone of an $n$-dimensional cone triangulation $\cT$ satisfying the
  local $\delta$-distance property. Let $\vec{x}$ be a point in the support of $\cT$.
  Then $d(\vec{x},F) \geq \alpha_F(\vec{x}) \cdot \frac{\delta}{n}$.
\end{lemma}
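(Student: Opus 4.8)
The plan is to reduce the general statement to the case where $\vec{x}$ lies in the relative interior of an $n$-dimensional cone $C$ of $\cT$ containing $F$ as a face, and then compute the distance from $\vec{x}$ to $\linsp(F)$ directly using the $\delta$-distance property of the generators of $C$. First I would set up notation: let $G = \cone(\vec{x}_1,\dots,\vec{x}_k)$ with $\|\vec{x}_i\|=1$ be the minimal face of $\cT$ containing $\vec{x}$, so $\vec{x} = \sum_{i=1}^k \lambda_i \vec{x}_i$ with all $\lambda_i > 0$, and $\alpha_F(\vec{x}) = \sum_{i:\vec{x}_i \notin F}\lambda_i$. Since $F$ and $G$ are both faces of $\cT$, and $\cT$ is simplicial, I would extend $\{\vec{x}_1,\dots,\vec{x}_k\}$ together with the generators of $F$ to the generators $\vec{x}_1,\dots,\vec{x}_n$ of some $n$-dimensional cone $C$ of $\cT$ having both $F$ and $G$ as faces (this uses that faces of a triangulation sit inside a common maximal cone, or one can pass to a maximal cone containing $G$ and observe $F$ must be among its faces if we are careful — alternatively work relative to $\linsp(F)$ directly). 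Write $\vec{x} = \sum_{i=1}^n \mu_i \vec{x}_i$ (with $\mu_i = \lambda_i$ for $i \le k$ and $\mu_i = 0$ otherwise).

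The key step is the distance estimate. Since $d(\vec{x},F) \ge d(\vec{x}, \linsp(F))$, and $\linsp(F)$ is the span of those $\vec{x}_i$ that generate $F$, write the orthogonal projection onto $\linsp(F)^{\perp}$ as $\pi$. Then $\pi(\vec{x}) = \sum_{i:\vec{x}_i \notin F}\mu_i \pi(\vec{x}_i)$, so
\[
d(\vec{x},F) \ge \|\pi(\vec{x})\| = \Big\|\sum_{i:\vec{x}_i\notin F}\mu_i\,\pi(\vec{x}_i)\Big\|.
\]
Now I want a lower bound of the form $\|\sum a_i \pi(\vec{x}_i)\| \ge \delta \cdot \frac{1}{?}\sum a_i$ for nonnegative $a_i$. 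The natural route is: the vectors $\{\vec{x}_i\}_{i=1}^n$ satisfy the $\delta$-distance property (generators of a cone in $\cT$), hence by Lemma~\ref{lem:delta-dist-equiv} applied to the linearly independent subfamily $\{\vec{x}_i : \vec{x}_i \notin F\}$ we get $\|\sum_{i:\vec{x}_i\notin F} a_i \vec{x}_i\| \ge \delta \max_i |a_i|$. But projecting by $\pi$ can only help in the sense that $\pi$ kills the $F$-part: more precisely I would argue that for the specific combination coming from $\vec{x}$, the distance $d(\vec{x}, \linsp F)$ equals $\min_{(b_j)} \|\sum_{i\notin F}\mu_i\vec{x}_i + \sum_{j\in F} b_j \vec{x}_j\|$, and the inner expression is $\|\sum_{i\notin F}\mu_i \vec{x}_i + (\text{something in }\linsp F)\|$; by Lemma~\ref{lem:delta-dist-equiv} (its equivalence with $\delta$-distance, taking $a_i = \mu_i$ on the complement of $F$ and $a_j = b_j$ on $F$, over the full linearly independent set $\{\vec{x}_1,\dots,\vec{x}_n\}$) this is $\ge \delta \max_{i \notin F} \mu_i$. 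Finally, since there are at most $n$ indices with $\vec{x}_i \notin F$,
\[
\max_{i:\vec{x}_i\notin F} \mu_i \ \ge\ \frac{1}{n}\sum_{i:\vec{x}_i\notin F}\mu_i \ =\ \frac{\alpha_F(\vec{x})}{n},
\]
which gives $d(\vec{x},F) \ge \alpha_F(\vec{x})\cdot \frac{\delta}{n}$, as required.

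I expect the main obstacle to be the first step: correctly justifying that one may pick a single $n$-dimensional cone $C \in \cT$ (or at least a single maximal linearly independent family from $\cT$ that is closed under the $\delta$-distance property) containing both the minimal face $G$ of $\vec{x}$ and the face $F$. If $F$ and $G$ are not contained in a common maximal cone, the argument above does not literally apply, and one has to be more careful — possibly the statement should be read as $F$ being a face whose defining cone also contains $\vec{x}$, or one uses convexity of $d(\cdot, F)$ to reduce to the boundary and induct. A cleaner workaround, which I would fall back on, is to argue entirely within the minimal cone $C_\vec{x}$ of the partition containing $\vec{x}$: if $F$ is not a face of $C_\vec{x}$ then $F \cap C_\vec{x}$ is a proper face $F'$ of $C_\vec{x}$ with $d(\vec{x},F) \ge d(\vec{x},\linsp F') $ and $\alpha_{F'}(\vec{x}) = \alpha_F(\vec{x})$ (the $\vec{x}_i$ in the minimal face $G$ of $\vec{x}$ that lie in $F$ are exactly those in $F'$), reducing to the case handled above. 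The rest — the projection/$\delta$-distance inequality and the averaging bound $\max \ge \frac{1}{n}\sum$ — is routine given Lemma~\ref{lem:delta-dist-equiv}.
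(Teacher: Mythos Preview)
Your core computation---bounding $d(\vec{x},\linsp F)$ from below via Lemma~\ref{lem:delta-dist-equiv} and then passing from $\max_i \mu_i$ to $\frac{1}{n}\sum_i \mu_i$---is exactly the local estimate the paper uses. The gap is the reduction step, and unfortunately neither version of it works.

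For the main route, there is in general no single maximal cone of $\cT$ containing both $F$ and the minimal face $G$ of $\vec{x}$; they can sit in completely different parts of the triangulation. Your fallback tries to fix this by replacing $F$ with $F' = F \cap C_{\vec{x}}$ and using $d(\vec{x},F) \geq d(\vec{x},\linsp F')$. That inequality is false. Take the triangulation of $\R^2$ into six $60^\circ$ cones $C_k = [\,(k{-}1)60^\circ, k\,60^\circ]$, let $\vec{x}$ be the unit vector at angle $50^\circ$ (so $\vec{x} \in C_1$), and let $F = C_3$. Then $F' = F \cap C_1 = \{0\}$, hence $d(\vec{x},\linsp F') = \|\vec{x}\| = 1$, while the closest point of $F$ to $\vec{x}$ lies on the ray at $120^\circ$ at positive radius, giving $d(\vec{x},F) = \sin 70^\circ \approx 0.94 < 1$. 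Your claim $\alpha_{F'}(\vec{x}) = \alpha_F(\vec{x})$ is correct, but the distance comparison is not, so the reduction collapses.

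The paper's argument avoids any global reduction: it takes the nearest point $\vec{y} \in F$, walks along the segment $[\vec{x},\vec{y}]$ through the sequence of full-dimensional cones $G_1,\dots,G_r$ it meets, and applies your local estimate inside each $G_i$ to get $d(\vec{x}_{i-1},\vec{x}_i) \ge \frac{\delta}{n}\,|\alpha_F(\vec{x}_{i-1}) - \alpha_F(\vec{x}_i)|$. Telescoping from $\alpha_F(\vec{x})$ down to $\alpha_F(\vec{y}) = 0$ gives the result. The key point you are missing is that one must track the change of $\alpha_F$ across \emph{several} cones; a one-cone argument cannot see $F$ when $F$ is combinatorially far from $\vec{x}$.
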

\begin{proof}
Let $\vec{y} \in F$ be the (unique) point with $d(\vec{x},\vec{y}) =
d(\vec{x},F)$.  Note that by convexity, the segment $[\vec{x},\vec{y}]$ is
contained in the support of $\cT$. By considering the cones of $\cT$ that
contain points on the segment $[\vec{x},\vec{y}]$, we obtain a sequence of
points
\[
        \vec{x} = \vec{x}_0, \vec{x}_1, \ldots, \vec{x}_r = \vec{y}
\]
on the segment $[\vec{x},\vec{y}]$ and (full-dimensional) cones $G_1, \ldots,
G_r$ such that
\[
    G_i \cap [\vec{x},\vec{y}] = [\vec{x}_{i-1}, \vec{x}_i].
\]
Since $\alpha_F(\vec{y}) = 0$, the result of the lemma follows immediately from
the claim that
\[
d(\vec{x}_{i-1}, \vec{x}_i) \geq |\alpha_F(\vec{x}_{i-1}) - \alpha_F(\vec{x}_i)|
\cdot \frac{\delta}{n},
\]
which we will now prove.

Fix some $G_i = \cone(\vec{y}_1, \ldots, \vec{y}_n)$. By relabelling, we may
assume that $\cone(\vec{y}_1,\dots,\vec{y}_k) = G_i \cap F$ (since $G_i$ and $F$
are both faces of $\cT$), for some $0 \leq k \leq n$ (if $k=0$ then $G_i \cap F
= \vec{0}$).

For every $\vec{z} \in G_i$, the minimal cone containing $\vec{z}$ is a face of
$G_i$. Therefore, using the unique conic combination $\vec{z} = \sum_{i=1}
\lambda_i \vec{y}_i$, we have that $\alpha_F(\vec{z}) = \sum_{k < i \leq n}
\lambda_i$.

Writing $\vec{x}_{i-1} = \sum_{i=1}^n a_i \vec{y}_i$ and $\vec{x}_i =
\sum_{i=1}^n b_i \vec{y}_i$, by Lemma~\ref{lem:delta-dist-equiv} we have that
\begin{align*}
d(\vec{x}_{i-1},\vec{x}_i)
&\geq \delta \max_{1 \leq i \leq n} |a_i-b_i|
\geq \delta \max_{k < i \leq n} |a_i-b_i|
\geq \frac{\delta}{n} \sum_{k < i \leq n} |a_i - b_i| \\
&\geq \frac{\delta}{n} |\alpha_F(\vec{x}_{i-1})-\alpha_F(\vec{x}_i)| \text{,}
\end{align*}
which completes the proof of the claim.
\end{proof}

\begin{lemma}
  \label{lem:snap-to-cone}
  Let $F$ be a cone of a triangulation $\cT$ satisfying the local $\delta$-distance property and
  let $\vec{x}$ be a point in the support of $\cT$ with $d(\vec{x},F) \leq \frac{\delta}{n^2}$.
  Let $G = \cone(\vec{x}_1, \ldots, \vec{x}_n)$, $\|\vec{x}_i\| = 1$,
  be a cone of $\cT$ containing $\vec{x}$ and let
  \[
    \vec{x} = \lambda_1 \vec{x}_1 + \dots + \lambda_n \vec{x}_n
  \]
  be the corresponding conic combination. Then for every $i \in [n]$ with
  $\lambda_i > \frac{1}{n}$ one has $\vec{x}_i \in F$.
\end{lemma}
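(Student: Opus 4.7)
The plan is to apply Lemma~\ref{lemma:distance-from-cone-faces} directly, using the hypothesis on $d(\vec{x},F)$ to bound the total mass $\alpha_F(\vec{x})$ coming from generators outside $F$, and then conclude by a simple averaging argument.

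First, I would reconcile the notation. In the statement of the lemma, $G$ is any cone of $\cT$ containing $\vec{x}$ (full-dimensional, with $n$ unit generators $\vec{x}_1,\dots,\vec{x}_n$), whereas the definition of $\alpha_F(\vec{x})$ refers to the \emph{minimal} face of $\cT$ containing $\vec{x}$. These agree once we set $I = \{i : \lambda_i > 0\}$: the minimal face containing $\vec{x}$ is the face of $G$ spanned by $\{\vec{x}_i : i \in I\}$, and its unique conic representation of $\vec{x}$ has the same nonzero coefficients $\lambda_i$ as in the representation relative to $G$. Hence
\[
\alpha_F(\vec{x}) \;=\; \sum_{i \in I,\, \vec{x}_i \notin F} \lambda_i \;=\; \sum_{i : \vec{x}_i \notin F} \lambda_i,
\]
since indices with $\lambda_i = 0$ contribute nothing.

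Next, I would invoke Lemma~\ref{lemma:distance-from-cone-faces} to get $\alpha_F(\vec{x}) \leq \frac{n}{\delta} d(\vec{x},F)$, and combine this with the hypothesis $d(\vec{x},F) \leq \frac{\delta}{n^2}$ to obtain
\[
\sum_{i : \vec{x}_i \notin F} \lambda_i \;=\; \alpha_F(\vec{x}) \;\leq\; \frac{n}{\delta} \cdot \frac{\delta}{n^2} \;=\; \frac{1}{n}.
\]
Finally, if there were an index $i$ with $\lambda_i > \frac{1}{n}$ and $\vec{x}_i \notin F$, then the left-hand sum would already exceed $\frac{1}{n}$, a contradiction. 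Therefore every index $i$ with $\lambda_i > \frac{1}{n}$ must satisfy $\vec{x}_i \in F$, as claimed.

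There is essentially no technical obstacle here: the entire content is packaged into Lemma~\ref{lemma:distance-from-cone-faces}. The only subtlety worth double-checking is the compatibility between the definition of $\alpha_F$ (stated in terms of the minimal face) and the representation of $\vec{x}$ in the possibly larger cone $G$ of the lemma, but this is immediate once one observes that the extension by zero coefficients does not alter the relevant sum.
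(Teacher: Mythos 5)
Your proof is correct and follows essentially the same approach as the paper: both reduce the claim to Lemma~\ref{lemma:distance-from-cone-faces} by translating the hypothesis $d(\vec{x},F) \leq \delta/n^2$ into the bound $\alpha_F(\vec{x}) \leq 1/n$ and then arguing that a coefficient exceeding $1/n$ on a generator outside $F$ would violate this. Your extra remark reconciling $\alpha_F$ (defined via the minimal face) with the representation in the possibly larger cone $G$ is a helpful detail that the paper leaves implicit.
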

\begin{proof}
Suppose there is some $i$ with $\lambda_i > \frac{1}{n}$ and $\vec{x}_i
\not\in F$.  Then $\alpha_F(\vec{x}) > \frac{1}{n}$ and by
Lemma~\ref{lemma:distance-from-cone-faces} we get $d(\vec{x},F) >
\frac{\delta}{n^2}$, which is a contradiction.
\end{proof}

For the recursion on a facet,
we let $\pi_i(\vec{x}) := \vec{x} - \frac{\pr{\vec{x}}{\vec{a}_i}}{\pr{\vec{a}_i}{\vec{a}_i}} \vec{a}_i$
be the orthogonal projection onto the subspace orthogonal to $\vec{a}_i$
and we let $F_i$ be the facet of $P$ defined by $\pr{\vec{a}_i}{\vec{x}} = \vec{b}_i$.
\begin{lemma}
  \label{lem:delta-distance-projection}
  Let $\vec{v}_1, \ldots, \vec{v}_k \in \R^n$ be linearly independent vectors
  that satisfy the $\delta$-distance property and let $\pi$ be the orthogonal projection
  onto the subspace orthogonal to $\vec{v}_k$.
  Then $\pi(\vec{v}_1), \ldots, \pi(\vec{v}_{k-1})$ satisfy the $\delta$-distance property.
\end{lemma}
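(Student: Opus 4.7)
\textbf{Proof plan for Lemma~\ref{lem:delta-distance-projection}.}
The plan is to reduce the desired bound for the projected system directly to the $\delta$-distance hypothesis on the original one, by exploiting an orthogonal decomposition. Fix $i \in [k-1]$ and set
\[
W \,=\, \linsp(\vec{v}_j : j \in [k] \setminus \{i\}), \qquad
W' \,=\, \linsp(\pi(\vec{v}_j) : j \in [k-1] \setminus \{i\}),
\]
and let $V = \linsp(\vec{v}_k)$, so that $\pi$ is the orthogonal projection onto $V^\perp$. By the $\delta$-distance hypothesis we have $d(\vec{v}_i, W) \geq \delta \|\vec{v}_i\|$; our target is $d(\pi(\vec{v}_i), W') \geq \delta \|\pi(\vec{v}_i)\|$.

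First I would observe that $W = W' \oplus V$ as an orthogonal direct sum. Indeed, $W' \subseteq V^\perp$ by construction, and $\vec{v}_k \in W$ gives $V \subseteq W$, while for every $j \in [k-1]\setminus\{i\}$ we can write $\vec{v}_j = \pi(\vec{v}_j) + (\vec{v}_j - \pi(\vec{v}_j)) \in W' + V$, so $W \subseteq W' + V$; the reverse inclusion is clear. Writing $\vec{v}_i = \pi(\vec{v}_i) + \vec{u}$ with $\vec{u} = \vec{v}_i - \pi(\vec{v}_i) \in V$, any $\vec{w} = \vec{w}' + \vec{w}_V \in W$ with $\vec{w}' \in W'$ and $\vec{w}_V \in V$ satisfies
\[
\|\vec{v}_i - \vec{w}\|^2 \,=\, \|\pi(\vec{v}_i) - \vec{w}'\|^2 + \|\vec{u} - \vec{w}_V\|^2,
\]
by orthogonality of $V$ and $V^\perp$. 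Minimizing independently in $\vec{w}_V \in V$ (which attains $0$ by choosing $\vec{w}_V = \vec{u}$) and $\vec{w}' \in W'$ gives $d(\vec{v}_i, W) = d(\pi(\vec{v}_i), W')$.

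Combining the two facts, $d(\pi(\vec{v}_i), W') = d(\vec{v}_i, W) \geq \delta \|\vec{v}_i\| \geq \delta \|\pi(\vec{v}_i)\|$, where the final inequality uses that orthogonal projection is a contraction. Since this holds for every $i \in [k-1]$ and linear independence of $\pi(\vec{v}_1),\ldots,\pi(\vec{v}_{k-1})$ follows from the fact that $d(\pi(\vec{v}_i), W') > 0$, this proves the claim.

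There is essentially no hard step here; the only thing to be careful about is the orthogonal decomposition $W = W' \oplus V$ and the resulting Pythagorean identity, since without the orthogonality between $W'$ and $V$ the equality $d(\vec{v}_i,W) = d(\pi(\vec{v}_i),W')$ would not follow. The slight ``looseness'' in the final step (bounding $\|\vec{v}_i\|$ below by $\|\pi(\vec{v}_i)\|$) is in our favor and causes no loss in the $\delta$ parameter.
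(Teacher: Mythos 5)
Your proof is correct and follows essentially the same approach as the paper: both rely on the key observation that $\vec{v}_k$ belongs to the span being projected, so that distances to the relevant subspaces are preserved (the paper phrases this via the cancellation $d(\vec{v}_i, \vec{x} + \lambda\vec{v}_k) = d(\pi(\vec{v}_i), \vec{x})$ inside a proof by contradiction, while you phrase it as the orthogonal decomposition $W = W' \oplus V$ and a Pythagorean identity yielding the equality $d(\vec{v}_i, W) = d(\pi(\vec{v}_i), W')$ directly). The final step $\|\pi(\vec{v}_i)\| \leq \|\vec{v}_i\|$ appears identically in both.
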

\begin{proof}
  See Appendix~\ref{sec:appendix-proofs}.
\end{proof}
This Lemma, which was already used by~\cite{EV14},
implies that if $P$ satisfies the local $\delta$-distance property
then so does $F_i$, where the definition of local $\delta$-distance is understood
relative to the affine hull of $F_i$,\footnote{Alternatively,
one can apply a rotation and translation so that $F_i$ lies in
the subspace $\R^{n-1}$ spanned by the first $n-1$ coordinates.
The rotation does not affect the $\delta$-distance property,
and we can then treat $F_i$ as a polytope in $\R^{n-1}$.}
because the normal vectors of $F_i$ arise from orthogonal projections of the normal vectors of $P$.

\begin{algorithm}[]
  \DontPrintSemicolon
  \KwIn{polytope $P = \set{ \vec{x} \in \R^n : A\vec{x} \leq \vec{b} }$, $\delta > 0$,
    feasible basis $B$,
    $\vec{d} \in \R^n$}
  \KwOut{optimal basis $B \subset [m]$ for $\vec{d}$}
  $\vec{c} \gets \sum_{i \in B} \frac{\vec{a}_i}{\|\vec{a}_i\|}$, $\vec{d} \gets 2 \frac{\vec{d}}{\|\vec{d}\|}$\;
  Sample $X \in \R^n$ from the exponential distribution conditioned on $\|X\| \leq 2n$\;
  Follow segments $[\vec{c}, \vec{c} + X]$, $[\vec{c} + X, \vec{d} + X]$,
    $[\vec{d} + X, \vec{d} + \frac{\delta}{2n^3} X]$
    using Shadow Simplex\;
  Find $\lambda_i$ such that
    $\vec{\tilde d} := \vec{d} + \frac{\delta}{2n^3} X = \sum_{i \in B} \lambda_i \frac{\vec{a}_i}{\|\vec{a}_i\|}$
    where $B$ is the current basis\;
  Choose $i^\star$ such that $\lambda_{i^\star} > \frac{1}{n}$\;
  $B' \gets $ optimal basis of $F_{i^\star}$ for $\pi_{i^\star}(\vec{d})$,
    obtained by recursion starting at $B\setminus\set{i^\star}$\;
  \Return $B' \cup \set{i^\star}$\;
  \caption{Optimization\label{alg:optimization}}
\end{algorithm}

\begin{theorem}
  \label{thm:optimization-detail}
  If $P$ satisfies the local $\delta$-distance property,
  then Algorithm~\ref{alg:optimization} correctly computes an optimal basis for $\vec{d}$
  using an expected $O(n^3/\delta \ln(n/\delta))$ shadow simplex pivots.
\end{theorem}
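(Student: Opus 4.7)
The plan is to induct on the dimension $n$, and in the inductive step to separately verify correctness of Algorithm~\ref{alg:optimization} and bound the expected pivots in each of the three shadow simplex phases.

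For correctness, since $\vec{c} = \sum_{i \in B}\vec{a}_i/\|\vec{a}_i\|$ lies in the relative interior of $\cone(\{\vec{a}_i:i\in B\})$, the input basis $B$ is optimal for $\vec{c}$, so the first phase has a valid start; the shadow simplex theorem from Section~\ref{sec:optimization} then produces optimal bases at the endpoints of each segment, and in particular the basis at the end of phase three is optimal for $\vec{\tilde d} := \vec{d} + \frac{\delta}{2n^3}X$. Conditioned on $\|X\|\leq 2n$ we have $\|\vec{\tilde d}-\vec{d}\|\leq \delta/n^2$, so Lemma~\ref{lem:snap-to-cone} applied with $F$ equal to the minimal face of the normal fan of $P$ containing $\vec{d}$ shows that any coordinate $\lambda_i > 1/n$ forces $\vec{a}_i/\|\vec{a}_i\|\in F$, i.e., $i$ belongs to some optimal basis for $\vec{d}$. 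An index $i^\star$ with $\lambda_{i^\star} > 1/n$ is guaranteed to exist because $\sum_i \lambda_i \geq \|\vec{\tilde d}\| \geq 2 - \delta/n^2 > 1$. By Lemma~\ref{lem:delta-distance-projection} the facet $F_{i^\star}$ still satisfies the local $\delta$-distance property in its affine hull, so the inductive hypothesis applies to the recursive call, and combining its output with $\{i^\star\}$ yields an optimal basis for $\vec{d}$ on $P$.

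For the pivot count, Lemma~\ref{lem:dist-to-width} gives $\tau$-wideness with $\tau = \delta/n$. On the first segment, Lemma~\ref{lemma:distance-from-cone-faces} applied to $\vec{c}$ (whose conic coordinates in $\cone(\{\vec{a}_i:i\in B\})$ all equal $1$) gives $d(\vec{c},\partial\cone(\{\vec{a}_i:i\in B\}))\geq \delta/n$, so no pivots occur on $[\vec{c},\vec{c}+\frac{\delta}{2n^2}X]$, and Theorem~\ref{thm:intersection-bounds}(2) with $\alpha = \delta/(2n^2)$ bounds the rest by $O((n^2/\delta)\ln(n/\delta))$ expected pivots. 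The middle segment is bounded directly by Theorem~\ref{thm:intersection-bounds}(1): since $\|\vec{c}\|\leq n$ and $\|\vec{d}\|=2$, $\|\vec{d}-\vec{c}\|/\tau = O(n^2/\delta)$. The third segment is Theorem~\ref{thm:intersection-bounds}(2) with $\alpha = \delta/(2n^3)$, giving another $O((n^2/\delta)\ln(n/\delta))$ expected pivots. The conditioning $\|X\|\leq 2n$, which holds with probability $\geq 1/2$ by Markov's inequality and Lemma~\ref{lem:exp-moments}, costs only a factor of $2$. Summing these three bounds and then summing over the $n$ levels of recursion yields the claimed $O((n^3/\delta)\ln(n/\delta))$ total.

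The hardest part will be choosing the scaling $\frac{\delta}{2n^3}$ in the third segment and the matching $\vec{c} = \sum_{i\in B}\vec{a}_i/\|\vec{a}_i\|$ in the first: both are calibrated so that, after conditioning on $\|X\|\leq 2n$, the endpoints sit at the threshold distances $\delta/n^2$ and $\delta/n$ required by Lemmas~\ref{lem:snap-to-cone} and~\ref{lemma:distance-from-cone-faces}, while the $\ln(1/\alpha)$ factor in Theorem~\ref{thm:intersection-bounds}(2) remains $O(\ln(n/\delta))$. A minor technicality is verifying the genericity hypothesis of the Shadow Simplex theorem (intersections lying in the relative interior of facets), which holds with probability $1$ by absolute continuity of $X$.
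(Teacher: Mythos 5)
Your proposal is correct and follows essentially the same route as the paper: the same three-segment decomposition, the same use of Lemma~\ref{lem:snap-to-cone} to identify $i^\star$, the same intersection-bound estimates from Theorem~\ref{thm:intersection-bounds}, the factor-$2$ cost of conditioning $\|X\|\leq 2n$, and the depth-$n$ recursion. The only cosmetic differences are that you lower-bound $d(\vec{c},\partial C)$ by $\delta/n$ via Lemma~\ref{lemma:distance-from-cone-faces} where the paper gets the sharper $\delta$ directly from the proof of Lemma~\ref{lem:dist-to-width} (harmless inside the logarithm), and you usefully flag the probability-$1$ genericity condition that the paper leaves implicit.
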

\begin{proof}
  For correctness,
  let $\cT$ be some triangulation of the normal fan of $P$
  and let $C$ be a cone in $\cT$ that contains $\vec{d}$.
  We have $\|\frac{\delta}{2n^3} X\| \leq \frac{\delta}{n^2}$
  and therefore $d(\vec{\tilde d}, C) \leq d(\vec{\tilde d}, \vec{d}) \leq \frac{\delta}{n^2}$.
  Furthermore, $\|\vec{\tilde d}\| \geq \|\vec{d}\| - \frac{\delta}{n^2} > 1$
  implies that $\sum_{i \in B} \lambda_i > 1$
  so that there is some $i$ with $\lambda_i > \frac{1}{n}$.
  Applying Lemma~\ref{lem:snap-to-cone} yields that $\vec{a}_{i^\star}$
  is a generator of $C$,
  which means that $i^\star$ is contained in some optimal basis for $\vec{d}$.
  This implies that recursion on $F_{i^\star}$ yields the correct result.

  In order to bound the number of pivots,
  let $C$ be the cone of the initial basis
  and observe that $\vec{c} + \delta \cB_2^n \subseteq C$
  by the proof of Lemma~\ref{lem:dist-to-width}.
  Hence the segment $[\vec{c} + \frac{\delta}{2n} X)$ does not cross a facet of
  the triangulation $\cT_1$ of the normal fan that is implicitly used by the first
  leg of the shadow simplex path.

  If $X$ were exponentially distributed (without the conditioning on $\|X\| \leq 2n$),
  Theorem~\ref{thm:intersection-bounds}
  would bound the expected number of pivot steps along the three segments by
  \[
    \E[ N ] \leq \frac{2n^2}{\delta} \ln \frac{2n}{\delta} + \frac{n \|\vec{d} - \vec{c}\|}{\delta}
      + \frac{2n^2}{\delta} \ln \frac{2n^3}{\delta}
    \leq O(\frac{n^2}{\delta} \ln(\frac{n}{\delta}))
  \]
  Since $\E[\|X\|] = n$ we have $\Pr[ \|X\| \leq 2n ] \geq \frac{1}{2}$ by Markov's inequality
  and therefore
  \[
    \E[ N ~|~ \|X\| \leq 2n ] \leq 2 \E[ N ] \leq O(\frac{n^2}{\delta} \ln(\frac{n}{\delta})).
  \]
  The bound on the total expected number of pivot steps follows from the depth $n$ of recursion.
\end{proof}

\section{Intersection Bounds and Diameter Bounds}
\label{sec:int-bounds}

\begin{lemma}
  \label{lem:exp-cone-hitting}
  Let $C$ be a polyhedral cone containing $\vec{u} + \tau \cB_2^n$, where
  $\|\vec{u}\|=1$. Let $\vec{c}, \vec{d} \in \R^n$ and let $X \in \R^n$ be
  exponentially distributed on a full dimensional cone $\Sigma \ni \vec{u}$. Then
  the expected number of times the shifted line segment $[\vec{c} + X, \vec{d} +
  X]$ hits the boundary of $C$ is at most
  \[
      \E[|\partial C \cap [\vec{c} + X, \vec{d} + X]|]
        \leq \frac{\|\vec{d} - \vec{c}\|}{\tau}
          \int_0^1 \int_{(C -((1-\lambda)c + \lambda d)) \cap \Sigma} \zeta_\Sigma(\vec{x}) \d\vec{x} \d\lambda
    \]
\end{lemma}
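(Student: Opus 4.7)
The plan is to prove the inequality in two steps: first, rewrite the expected crossing count as a weighted surface integral via a change of variables that trades the randomness in $X$ for a parameterization by facet crossings; second, apply an isoperimetric-type bound on this surface integral that extracts the factor $1/\tau$ by exploiting the width condition $\vec{u} + \tau \cB_2^n \subseteq C$ through the divergence theorem.

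For the first step, let $\vec{L}_\lambda := (1-\lambda)\vec{c} + \lambda \vec{d}$ and $C_\lambda := C - \vec{L}_\lambda$, so that the segment is $\set{\vec{L}_\lambda + X : \lambda \in [0,1]}$. For almost every $X$, each facet $F$ of $C$ is met in at most one interior point, giving $|\partial C \cap [\vec{c}+X, \vec{d}+X]| = \sum_F \mathbf{1}[\text{segment meets } F]$. Fixing a facet $F$ with unit outer normal $\vec{n}_F$, the change of variables $(\lambda, \vec{p}) \in [0,1] \times F \mapsto X = \vec{p} - \vec{L}_\lambda$ has flux-Jacobian $|\pr{\vec{d}-\vec{c}}{\vec{n}_F}|$. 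Integrating $\zeta_\Sigma$ along this pullback, summing over facets, and using that the translates $F - \vec{L}_\lambda$ tile $\partial C_\lambda$ up to a set of surface measure zero yields
\[
\E[|\partial C \cap [\vec{c}+X, \vec{d}+X]|] = \int_0^1 \int_{\partial C_\lambda \cap \Sigma} |\pr{\vec{d}-\vec{c}}{\vec{n}}|\, \zeta_\Sigma(\vec{y})\, \d S(\vec{y})\, \d\lambda.
\]

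For the second step, the crude estimate $|\pr{\vec{d}-\vec{c}}{\vec{n}}| \leq \|\vec{d}-\vec{c}\|$ reduces the bound to the isoperimetric-type inequality
\[
\int_{\partial C_\lambda \cap \Sigma} \zeta_\Sigma\, \d S \leq \frac{1}{\tau} \int_{C_\lambda \cap \Sigma} \zeta_\Sigma\, \d V,
\]
which I would derive by applying the divergence theorem to the vector field $\vec{F} = \vec{u}\, \zeta_\Sigma$ on $C_\lambda \cap \Sigma$. On the boundary piece $\partial C_\lambda \cap \Sigma$, the width hypothesis $\vec{u} + \tau \cB_2^n \subseteq C$ implies $\pr{\vec{u}}{\vec{n}_F} \leq -\tau$ for every facet $F$ of $C_\lambda$ (outer normals are preserved under translation), producing a contribution at most $-\tau \int_{\partial C_\lambda \cap \Sigma} \zeta_\Sigma\, \d S$. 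On the boundary piece $C_\lambda \cap \partial \Sigma$, the inclusion $\vec{u} \in \Sigma$ yields $\pr{\vec{u}}{\vec{m}} \leq 0$ for any outer normal $\vec{m}$ of $\Sigma$, so this contribution is non-positive. On the volume side, inside $\Sigma$ we have $\nabla \zeta_\Sigma = -\zeta_\Sigma\, \vec{x}/\|\vec{x}\|$, and $|\pr{\vec{u}}{\vec{x}}| \leq \|\vec{x}\|$ gives $\int_{C_\lambda \cap \Sigma} \pr{\vec{u}}{\nabla \zeta_\Sigma}\, \d V \geq -\int_{C_\lambda \cap \Sigma} \zeta_\Sigma\, \d V$. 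Combining these three ingredients yields the isoperimetric bound; multiplying by $\|\vec{d}-\vec{c}\|/\tau$ and integrating over $\lambda$ completes the proof.

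The main technical obstacle is justifying the divergence theorem on the unbounded, merely piecewise-smooth region $C_\lambda \cap \Sigma$. The standard workaround is to truncate to $C_\lambda \cap \Sigma \cap R\cB_2^n$, invoke Stokes' theorem there (the non-smooth corners on $\partial(C_\lambda \cap \Sigma)$ form a set of surface measure zero), and let $R \to \infty$; the exponential decay of $\zeta_\Sigma$ drives the spherical boundary contribution to zero. The generic-position hypothesis needed in the first step, namely that crossings lie in the relative interiors of facets, is automatic because $X$ is absolutely continuous with respect to Lebesgue measure.
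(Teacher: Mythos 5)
Your proof is correct, and up through the reduction to bounding $\int_0^1\int_{\partial C_\lambda\cap\Sigma}\zeta_\Sigma\,\d S\,\d\lambda$ (where $C_\lambda = C - \vec{L}_\lambda$) it coincides with the paper's: facet decomposition, almost-sure uniqueness of the crossing point on each facet, Fubini through the prism $-[\vec{c},\vec{d}]+F$ with Jacobian $|\pr{\vec{n}_F}{\vec{d}-\vec{c}}|\le\|\vec{d}-\vec{c}\|$. Where it diverges is the estimate of that surface integral. The paper works facet by facet: for each facet it sweeps $((F-\vec{L}_\lambda)\cap\Sigma)+\cone(\vec{u})$ along $\vec{u}$, lower-bounds $e^{-\|\vec{x}+(r/h)\vec{u}\|}$ by $e^{-r/h}e^{-\|\vec{x}\|}$ via the triangle inequality, integrates out the $r$-factor to obtain $h=|\pr{\vec{n}}{\vec{u}}|\ge\tau$, and finally sums over facets using that the prisms $F+\cone(\vec{u})$ partition $C$. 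You instead obtain the same estimate in a single stroke, as an isoperimetric inequality for the exponential density, by applying the divergence theorem to the field $\vec{u}\,\zeta_\Sigma$ on $C_\lambda\cap\Sigma$: $\pr{\vec{u}}{\vec{n}_F}\le-\tau$ on the $\partial C_\lambda$-side extracts the factor $\tau$; $\pr{\vec{u}}{\vec{m}}\le 0$ makes the $\partial\Sigma$-side nonpositive; and $\pr{\vec{u}}{\nabla\zeta_\Sigma}=-\zeta_\Sigma\,\pr{\vec{u}}{\vec{x}}/\|\vec{x}\|\ge-\zeta_\Sigma$ controls the interior term. The two techniques are sound and essentially equivalent in strength here --- the paper's slice-and-sweep is in effect a direct, facet-local proof of your divergence identity --- but yours is a tidier conceptual packaging that sidesteps the explicit prism-partition observation, at the cost of the truncation-and-limit bookkeeping required to apply the divergence theorem on an unbounded polyhedral domain with a (removable) singularity of $\nabla\zeta_\Sigma$ at the origin, which you correctly flag and handle.
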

\begin{proof}
Let $F$ be a facet of $C$. Note that with probability $1$, the line segment
$[\vec{c}+X,\vec{d}+X]$ passes through $F$ at most once. By linearity, we see
that
\begin{equation}
\label{eq:eci-1}
\E[|\partial C \cap [\vec{c} + X, \vec{d} + X]|] = \sum_{F \text{ facet of } C}
\Pr[(F \cap [\vec{c}+X,\vec{d}+X]) \neq \emptyset] \text{.}
\end{equation}
We now bound the crossing probability for any facet $F$.

We first calculate the hitting probability as
\begin{align}
\label{eq:eci-2}
\begin{split}
	\Pr[ F \cap [\vec{c} + X, \vec{d} + X] \neq \emptyset]
		&= \Pr[ X \in -[\vec{c},\vec{d}] + F ] \\
		&= \int_{-[\vec{c},\vec{d}] + F} \zeta_\Sigma(\vec{x}) \d\vec{x} \\
		&= |\pr{\vec{n}}{\vec{d}-\vec{c}}|
			 \int_0^1 \int_{F - ((1-\lambda)\vec{c} + \lambda \vec{d})}
								\zeta_\Sigma(\vec{x}) \d\vol_{n-1}(\vec{x}) \d\lambda \\
    &\leq \|\vec{d}-\vec{c}\| \int_0^1 \int_{(F - ((1-\lambda)\vec{c} + \lambda
\vec{d})) \cap \Sigma} c_\Sigma e^{-\|\vec{x}\|} \d\vol_{n-1}(\vec{x}) \d\lambda
\end{split}
\end{align}
where $\vec{n} \in \R^n$ is a unit normal vector to $F$. Bounding the hitting
probability therefore boils down to bounding the measure of a shift of the facet
$F$. Letting $h = |\pr{\vec{n}}{\vec{u}}| \geq \tau$ (which holds by assumption on
$\vec{u}$), for any shift $\vec{t} \in \R^n$ we have that
\begin{align}
\label{eq:eci-3}
  \begin{split}
   \int_{(F + \vec{t} + \cone(\vec{u})) \cap \Sigma}
       e^{-\|\vec{x}\|} \d\vec{x}
    &\geq \int_{((F + \vec{t}) \cap \Sigma) + \cone(\vec{u})} e^{-\|\vec{x}\|} \d\vec{x}
              \quad \left(\text{ since $\vec{u} \in \Sigma$ }\right) \\
    &= \int_0^\infty \int_{((F + \vec{t}) \cap \Sigma) + \frac{r}{h} \vec{u}}
              e^{-\|\vec{x}\|} \d\vol_{n-1}(\vec{x}) \d r \\
    &= \int_0^\infty \int_{(F + \vec{t}) \cap \Sigma}
              e^{-\|\vec{x} + \frac{r}{h} \vec{u}\|} \d\vol_{n-1}(\vec{x}) \d r \\
    &\geq \int_0^\infty e^{-r/h} \d r \int_{(F + \vec{t}) \cap \Sigma} e^{-\|\vec{x}\|} \d\vol_{n-1}(\vec{x}) \\
    &\geq \tau \int_{(F + \vec{t}) \cap \Sigma} e^{-\|\vec{x}\|} \d\vol_{n-1}(\vec{x})
\end{split}
\end{align}
The lemma now follows by combining
\eqref{eq:eci-1},\eqref{eq:eci-2},\eqref{eq:eci-3}, using the fact that
the $F + \cone(\vec{u})$ partition the cone $C$ up to sets of measure $0$.
\end{proof}

\begin{lemma}\label{lem:exp-partition-hitting}
  Let $\cT = (C_1,\ldots,C_k)$ be a partition of a cone $\Sigma$ into
  polyhedral $\tau$-wide cones. Let $\vec{c}, \vec{d} \in \R^n$ and let $X
  \in\R^n$ be exponentially distributed on $\Sigma$. Then the expected number of
  facets hit by the shifted line segment $[\vec{c} + X, \vec{d} + X]$ satisfies
    \[
      \E[ |\partial\cT \cap [\vec{c} + X, \vec{d} + X]| ]
        \leq \frac{\|\vec{d} - \vec{c}\|}{\tau} \text{ .}
    \]
\end{lemma}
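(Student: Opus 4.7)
The plan is to reduce to the single-cone bound in Lemma~\ref{lem:exp-cone-hitting} by a simple union-bound over the cones of $\cT$. Since $\partial\cT = \bigcup_{i=1}^k \partial C_i$, for any fixed realization of $X$ we have the deterministic inequality
\[
|\partial\cT \cap [\vec{c}+X,\vec{d}+X]| \;\leq\; \sum_{i=1}^k |\partial C_i \cap [\vec{c}+X,\vec{d}+X]|,
\]
so by linearity of expectation it suffices to bound the right-hand side in expectation.

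Next I would apply Lemma~\ref{lem:exp-cone-hitting} to each $C_i$ separately. By $\tau$-wideness, each $C_i$ contains a ball $\vec{u}_i + \tau\cB_2^n$ with $\|\vec{u}_i\|=1$, and since $C_i \subseteq \Sigma$ we have $\vec{u}_i \in \Sigma$, so the hypotheses of Lemma~\ref{lem:exp-cone-hitting} are met (with the same ambient cone $\Sigma$). Writing $\vec{t}_\lambda = (1-\lambda)\vec{c} + \lambda\vec{d}$, the lemma gives
\[
\sum_{i=1}^k \E\bigl[|\partial C_i \cap [\vec{c}+X,\vec{d}+X]|\bigr]
  \;\leq\; \frac{\|\vec{d}-\vec{c}\|}{\tau} \int_0^1 \sum_{i=1}^k \int_{(C_i - \vec{t}_\lambda)\cap \Sigma} \zeta_\Sigma(\vec{x}) \,\d\vec{x}\,\d\lambda.
\]

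The final step uses the partition property to evaluate the inner sum. Because the cones $C_i$ meet only in shared faces, their translates $C_i - \vec{t}_\lambda$ tile $\Sigma - \vec{t}_\lambda$ up to a set of Lebesgue measure zero. Since $\zeta_\Sigma$ is supported on $\Sigma$, this gives
\[
\sum_{i=1}^k \int_{(C_i - \vec{t}_\lambda)\cap \Sigma} \zeta_\Sigma(\vec{x}) \,\d\vec{x}
  \;=\; \int_{(\Sigma - \vec{t}_\lambda)\cap \Sigma} \zeta_\Sigma(\vec{x}) \,\d\vec{x}
  \;\leq\; \int_\Sigma \zeta_\Sigma(\vec{x})\,\d\vec{x} \;=\; 1.
\]
Plugging this into the previous display yields the claimed bound of $\|\vec{d}-\vec{c}\|/\tau$.

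There is no real obstacle here: all of the technical work is already encapsulated in Lemma~\ref{lem:exp-cone-hitting}. The only thing to be careful about is verifying that the exponential density (with its fixed normalizer $c_\Sigma$) is shared across the per-cone applications, which is why we need the cones $C_i$ to sit inside the common ambient cone $\Sigma$ that supports the partition; this is built into the hypothesis that $\cT$ partitions $\Sigma$.
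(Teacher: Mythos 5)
Your proof is correct and follows essentially the same route as the paper: apply Lemma~\ref{lem:exp-cone-hitting} to each $C_i$, sum, and use that the translated cones $C_i - \vec{t}_\lambda$ partition $\Sigma - \vec{t}_\lambda$ up to measure zero so the total probability mass is at most $1$. Your remark that $\vec{u}_i \in \Sigma$ follows from $C_i \subseteq \Sigma$ is a useful explicit check of the hypothesis of Lemma~\ref{lem:exp-cone-hitting} that the paper leaves implicit.
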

\begin{proof}
  Using Lemma~\ref{lem:exp-cone-hitting}, we bound
  \begin{align*}
    \E[ |\partial\cT \cap [\vec{c} + X, \vec{d} + X]| ]
      &\leq \sum_{i=1}^k \E[|\partial C_i \cap [\vec{c} + X, \vec{d} + X]|] \\
      &\leq \sum_{i=1}^k \frac{\|\vec{d} - \vec{c}\|}{\tau} \int_0^1
        \int_{(C_i-((1-\lambda)\vec{c}+\lambda \vec{d})) \cap \Sigma} \zeta_{\Sigma}(\vec{x})\d\vec{x} \d\lambda \\
      &\leq \frac{\|\vec{d} - \vec{c}\|}{\tau} \int_0^1
        \int_{\Sigma} \zeta_{\Sigma}(\vec{x})\d\vec{x} \d\lambda \\
      &\leq \frac{\|\vec{d} - \vec{c}\|}{\tau} \text{,}
  \end{align*}
as needed.

For the furthermore, note that each intersection is overcounted twice in the
summation above, since each facet belongs to exactly two cones in the partition.
\end{proof}

We will need the following simple lemma about the exponential distribution.

\begin{lemma}\label{lem:exp-variance} 
Let $Y$ be exponentially distributed on $\R_+$.
Then for any $c \in \R$, $\E[|Y-c|] \geq |c|/2$.
\end{lemma}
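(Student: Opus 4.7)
The plan is to split on the sign of $c$ and reduce to a direct computation using the density $e^{-y}$ of $Y$ on $\R_+$.

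First I would handle the easy case $c \leq 0$. Since $Y \geq 0$ we have $Y - c \geq 0$ always, so $|Y - c| = Y + |c|$, and hence $\E[|Y-c|] = \E[Y] + |c| = 1 + |c|$, which is trivially at least $|c|/2$.

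Next, for $c > 0$, I would compute $\E[|Y-c|]$ explicitly by splitting the integral at $y = c$:
\[
\E[|Y-c|] = \int_0^c (c-y) e^{-y}\,\d y + \int_c^\infty (y-c) e^{-y}\,\d y.
\]
Evaluating each piece by standard integration by parts (using $\int_0^c y e^{-y} \d y = 1 - (1+c) e^{-c}$ and $\int_c^\infty y e^{-y} \d y = (1+c) e^{-c}$), both integrals combine into the closed form $\E[|Y-c|] = c - 1 + 2 e^{-c}$.

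It then remains to verify the one-variable inequality $c - 1 + 2 e^{-c} \geq c/2$, equivalently $h(c) := c/2 + 2 e^{-c} - 1 \geq 0$ for $c \geq 0$. I would do this by calculus: $h(0) = 1 > 0$, $h'(c) = 1/2 - 2 e^{-c}$ vanishes only at $c = 2\ln 2$, and at this minimizer $h(2\ln 2) = \ln 2 + 1/2 - 1 = \ln 2 - 1/2 > 0$. Since $h$ grows linearly for large $c$, this establishes $h(c) \geq 0$ everywhere on $\R_+$ and completes the proof. There is no real obstacle here; the argument is a routine case split followed by a single-variable check, and in fact the bound $|c|/2$ has slack (one could replace $1/2$ by a slightly larger constant if needed).
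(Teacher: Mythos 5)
Your proposal is correct and follows essentially the same route as the paper: the same trivial dispatch of $c \leq 0$, the same explicit computation yielding $\E[|Y-c|] = c - 1 + 2e^{-c}$, and the same one-variable minimization at $c = 2\ln 2 = \ln 4$ giving the positive minimum $\ln 2 - 1/2 = (\ln 4 - 1)/2$.
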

\begin{proof}
Since $Y \geq 0$, the inequality is trivial if $c \leq 0$. Hence we
may assume that $c \geq 0$. Using integration by parts, we have that
\begin{align*}
\E[|Y-c|] &= \int_0^c (c-x)e^{-x}\d x + 
                 \int_c^\infty (x-c)e^{-x} \d x \\ 
          &= (x-c)e^{-x}\big|_0^c - \int_0^c e^{-x} \d x
            + (c-x)e^{-x}\big|_c^\infty + \int_c^\infty e^{-x} \d x \\
          &= (x-c+1)e^{-x}\big|_0^c + (c-x-1)e^{-x}\big|_c^\infty
          = 2e^{-c} + c - 1 \text{.}
\end{align*}
We wish to show that $2e^{-c} + c - 1 \geq c/2$, hence it suffices to show
$2e^{-c} + c/2 - 1 \geq 0$ for all $c \geq 0$. This function is minimized at $c
= \ln 4$ where it achieves value $(\ln 4 - 1)/2 > 0$.
\end{proof}

While we could choose $\vec{c}$ and $\vec{d}$ such that $\vec{c} + X$ and
$\vec{d} + X$ lie in the same cone with high probability, this would require us
to choose $\|\vec{d} - \vec{c}\|$ quite large. Instead, we will bound the
number of facets that are hit by the segment $[\vec{c}, \vec{c} + X]$.

\begin{lemma}
  \label{lem:scaling-cone-hitting}
  Let $C \subseteq \R^n$ be a polyhedral cone containing $\vec{u}+\tau \cB_2^n$,
  where $\|\vec{u}\|=1$. Let $\vec{c} \in \R^n$ and $X \in \R^n$ be exponentially
  distributed on a cone $\Sigma \ni \vec{u}$. Then for every $\alpha \in (0,1)$ we
  have
  \[
    \E[|\partial C \cap [\vec{c} + \alpha X, \vec{c} + X]|]
    \leq \frac{2}{\tau} \int_1^{1/\alpha} \frac{1}{s}
      \int_{(C - s\vec{c})\cap\Sigma} \|\vec{x}\| \zeta_{\Sigma}(\vec{x}) \d\vec{x} \d s
  \]
\end{lemma}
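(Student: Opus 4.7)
The plan is to follow the template of Lemma~\ref{lem:exp-cone-hitting}: decompose the expected number of hits by facet of $C$, bound the per-facet crossing probability, and sum over facets. The new feature is that the segment $[\vec{c} + \alpha X, \vec{c} + X]$ scales with $X$ rather than translating it, which requires a different change of variables that produces the integration in $s$ and the weight $1/s$. Fix a facet $F$ of $C$ with outward unit normal $\vec{n}_F$. The segment hits $F$ exactly when $X \in \bigcup_{s \in [1, 1/\alpha]} (F - s\vec{c})$, using the identity $sF = F$ since $F$ is a cone through the origin. Parameterizing via $(s, \vec{y}) \mapsto \vec{y} - s\vec{c}$ with $s \in [1, 1/\alpha]$ and $\vec{y} \in F$ (injective after projecting onto $\vec{n}_F$), the Jacobian is the constant $|\pr{\vec{n}_F}{\vec{c}}|$, giving
\[
  \Pr[F \cap [\vec{c} + \alpha X, \vec{c} + X] \neq \emptyset]
    = |\pr{\vec{n}_F}{\vec{c}}| \int_1^{1/\alpha} \int_{(F - s\vec{c}) \cap \Sigma} \zeta_\Sigma(\vec{x}') \d\vol_{n-1}(\vec{x}') \d s.
\]

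To eliminate the prefactor and produce the $1/s$ weight, I would use the pointwise estimate $\|\vec{x}'\| \geq |\pr{\vec{n}_F}{\vec{x}'}| = s |\pr{\vec{n}_F}{\vec{c}}|$, which follows from $\pr{\vec{n}_F}{\vec{x}'} = -s \pr{\vec{n}_F}{\vec{c}}$ on $F - s\vec{c}$, to rewrite $|\pr{\vec{n}_F}{\vec{c}}|$ as $\|\vec{x}'\|/s$ inside the integral. To then convert the $(n-1)$-dimensional facet integral into an $n$-dimensional cone integral, I thicken in the ball direction $\vec{u}$: the parameterization $(\vec{x}', r) \mapsto \vec{x}' + r\vec{u}$ from $(F - s\vec{c}) \times \R_{\geq 0}$ into $(F - s\vec{c}) + \cone(\vec{u})$ has Jacobian $|\pr{\vec{n}_F}{\vec{u}}| \geq \tau$ (by $\vec{u} + \tau \cB_2^n \subseteq C$), and restricting the outer variable to $\vec{x}' \in \Sigma$ is a valid lower bound because $\vec{u} \in \Sigma$ keeps $\vec{x}' + r\vec{u} \in \Sigma$. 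The analytic heart of the proof is then to bound the inner integral
\[
  \int_0^\infty \|\vec{x}' + r\vec{u}\| \, e^{-\|\vec{x}' + r\vec{u}\|} \d r \;\geq\; \|\vec{x}'\| \, e^{-\|\vec{x}'\|},
\]
which I would establish by writing $\vec{x}' = a \vec{u} + \vec{x}'_\perp$ with $\vec{x}'_\perp \perp \vec{u}$ and substituting $t = \sqrt{(a+r)^2 + \|\vec{x}'_\perp\|^2}$ to put the integral in the form $\int t^2 e^{-t}/\sqrt{t^2 - \|\vec{x}'_\perp\|^2} \d t$; using the elementary inequality $t^2/\sqrt{t^2 - b^2} \geq t$, the monotone case $a \geq 0$ gives $\int_{\|\vec{x}'\|}^\infty t e^{-t} \d t = (\|\vec{x}'\|+1) e^{-\|\vec{x}'\|}$, and the non-monotone case $a < 0$ is handled by splitting at the minimum of $r \mapsto \|\vec{x}' + r\vec{u}\|$ and using that $y \mapsto (y+1)e^{-y}$ is decreasing.

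Combining the two bounds yields a per-facet contribution of at most $\int_1^{1/\alpha} \frac{1}{s\tau} \int_{((F - s\vec{c}) + \cone(\vec{u})) \cap \Sigma} \|\vec{x}\| \zeta_\Sigma(\vec{x}) \d\vec{x} \d s$. Summing over the facets of $C$ and using the star decomposition $\bigsqcup_F (F + \cone(\vec{u})) = C$ (valid up to measure zero, hence also $\bigsqcup_F ((F - s\vec{c}) + \cone(\vec{u})) = C - s\vec{c}$), the per-facet integrals assemble into the single integral $\int_{(C - s\vec{c}) \cap \Sigma} \|\vec{x}\| \zeta_\Sigma(\vec{x}) \d\vec{x}$, yielding the claimed bound (in fact with $\frac{1}{\tau}$ in place of $\frac{2}{\tau}$, which is strictly stronger). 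The main obstacle I foresee is the non-monotonicity of $r \mapsto \|\vec{x}' + r\vec{u}\|$ in the inner integral, which forces the case split; the remaining steps mirror the proof of Lemma~\ref{lem:exp-cone-hitting} after the initial change of variables.
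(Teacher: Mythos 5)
Your proof is correct, and it follows the same overall skeleton as the paper: decompose the expectation over facets of $C$, parameterize the crossing set $\bigcup_{s \in [1,1/\alpha]}(F - s\vec{c})$ via $(s,\vec{y}) \mapsto \vec{y}-s\vec{c}$ to get the $1/s$ weight, thicken the shifted facet along $\cone(\vec{u})$, and recover $C - s\vec{c}$ from the star decomposition $\bigsqcup_F (F + \cone(\vec{u}))$. The genuine difference is in the analytic heart. The paper does not replace $|\pr{\vec{n}}{\vec{c}}|$ by $\|\vec{x}'\|/s$; instead it keeps the factor $|\pr{\vec{n}}{\vec{t}}|$ in the facet integral, bounds $\|\vec{x}+\tfrac{r}{h}\vec{u}\| \geq |\pr{\vec{n}}{\vec{x}+\tfrac{r}{h}\vec{u}}|$ and $e^{-\|\vec{x}+\tfrac{r}{h}\vec{u}\|} \geq e^{-\|\vec{x}\|}e^{-r/h}$, and reduces the $r$-integral to the one-dimensional statement $\E[|Y-c|] \geq |c|/2$ for exponential $Y$ (Lemma~\ref{lem:exp-variance}), which is where the factor $2$ comes from. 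You instead prove the exact radial inequality $\int_0^\infty \|\vec{x}'+r\vec{u}\|e^{-\|\vec{x}'+r\vec{u}\|}\,\d r \geq \|\vec{x}'\|e^{-\|\vec{x}'\|}$ by the substitution $t = \|\vec{x}'+r\vec{u}\|$ together with $t^2/\sqrt{t^2-b^2} \geq t$ and the monotonicity of $y \mapsto (y+1)e^{-y}$ to handle the case $\pr{\vec{x}'}{\vec{u}} < 0$; this sidesteps Lemma~\ref{lem:exp-variance} entirely and buys you the constant $1/\tau$ in place of $2/\tau$. I checked your two cases and the inequality holds (in the non-monotone case the tail alone already gives $\geq (b+1)e^{-b} \geq (\|\vec{x}'\|+1)e^{-\|\vec{x}'\|}$). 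The constant improvement is real but immaterial downstream, since it only scales the bound in Lemma~\ref{lem:scaling-partition-hitting} by a factor of $2$, which is absorbed in the $O$-notation of the final diameter and pivot bounds.
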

\begin{proof}
As in the proof of Lemma~\ref{lem:exp-cone-hitting}, we will
decompose the expectation over the facets of $C$, where we have
\begin{equation}
  \label{eq:sfh-1}
  \E[|\partial C \cap [\vec{c} + \alpha X, \vec{c} + X]|] =
  \sum_{F \text{ facet of } C} \Pr[F \cap [\vec{c} + \alpha X, \vec{c} + X] \neq
  \emptyset]
\end{equation}
Take a facet $F$ of $C$ and let $\vec{n}$ denote a unit normal to $F$
pointing in the direction of the cone (i.e., $\pr{n}{u} > 0$).
\begin{align}
\label{eq:sfh-2}
\begin{split}
\Pr[F \cap [\vec{c} + \alpha X, \vec{c} + X] \neq \emptyset]
  &= \Pr[X \in [1, \frac{1}{\alpha}](F - \vec{c})] \\
  &= \int_1^{1/\alpha} \int_{(F - s\vec{c}) \cap \Sigma}
      |\pr{\vec{n}}{\vec{c}}| \zeta_{\Sigma}(\vec{x}) \d\vol_{n-1}(\vec{x}) \d s \\
  &= \int_1^{1/\alpha} \frac{1}{s} \int_{(F - s\vec{c}) \cap \Sigma}
      |\pr{\vec{n}}{s\vec{c}}| c_\Sigma e^{-\|\vec{x}\|} \d\vol_{n-1}(\vec{x}) \d s
\text{.}
\end{split}
\end{align}
Again, we have to bound an integral over a shifted facet, similar to the proof
of Lemma~\ref{lem:exp-cone-hitting}. Letting $h =
|\pr{\vec{n}}{\vec{u}}| \geq \tau$, we have that
\begin{align}
\label{eq:sfh-3}
\begin{split}
  \int_{(F+\vec{t}+\cone(\vec{u})) \cap \Sigma} \|\vec{x}\| e^{-\|\vec{x}\|} \d\vec{x}
    &\geq \int_{((F+\vec{t}) \cap \Sigma)+\cone(\vec{u})}
      \|\vec{x}\| e^{-\|\vec{x}\|} \d\vec{x}
      \quad \left(\text{since $\vec{u} \in \Sigma$}\right)\\
    &= \int_0^\infty \int_{((F+\vec{t}) \cap \Sigma) + \frac{r}{h} \vec{u}}
      \|\vec{x}\| e^{-\|\vec{x}\|} \d\vol_{n-1}(\vec{x}) \d r \\
    &= \int_0^\infty \int_{(F+\vec{t}) \cap \Sigma}
      \|\vec{x} + \frac{r}{h}\vec{u}\| e^{-\|\vec{x} + \frac{r}{h} \vec{u}\|}
      \d\vol_{n-1}(\vec{x}) \d r \\
    &\geq \int_0^\infty \int_{(F + \vec{t}) \cap \Sigma}
      |\pr{\vec{n}}{\vec{x} + \frac{r}{h}\vec{u}}| e^{-r/h} e^{-\|\vec{x}\|}
      \d\vol_{n-1}(\vec{x}) \d r \\
    &= h^2 \int_0^\infty |\pr{\vec{n}}{\vec{t}}/h + s| e^{-s} \d s
      \int_{(F+\vec{t}) \cap \Sigma} e^{-\|\vec{x}\|} \d\vol_{n-1}(\vec{x}) \\
&\geq \frac{h}{2} \int_{(F+\vec{t}) \cap \Sigma} |\pr{\vec{n}}{\vec{t}}|e^{-\|\vec{x}\|}
 \d\vol_{n-1}(\vec{x}) \quad \left(\text{by Lemma~\ref{lem:exp-variance}}\right) \\
&\geq \frac{\tau}{2} \int_{(F+\vec{t}) \cap \Sigma}|\pr{\vec{n}}{\vec{t}}| e^{-\|\vec{x}\|}
 \d\vol_{n-1}(\vec{x})\\
\end{split}
\end{align}
The Lemma now follows by combining~\eqref{eq:sfh-1},\eqref{eq:sfh-2},\eqref{eq:sfh-3}.
\end{proof}

\begin{lemma}\label{lem:scaling-partition-hitting}
  Let $\cT = (C_1,\ldots,C_k)$ be partition of a cone $\Sigma$ into
  polyhedral $\tau$-wide cones. Let $\vec{c} \in \R^n$ and $\alpha \in (0,1)$ be
  fixed and let $X \in \R^n$ be exponentially distributed over $\Sigma$.  Then
  \[
  \E[|\partial \cT \cap [\vec{c} + \alpha X, \vec{c} + X]|] \leq
  \frac{2n}{\tau} \ln{\frac{1}{\alpha}} \text{ .}
  \]
\end{lemma}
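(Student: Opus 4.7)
The plan is to run exactly the same aggregation argument used in Lemma~\ref{lem:exp-partition-hitting}, but applied to the per-cone bound of Lemma~\ref{lem:scaling-cone-hitting} instead of the per-cone bound of Lemma~\ref{lem:exp-cone-hitting}. Concretely, I would first observe that every facet of $\partial\cT$ is shared by at most two of the cones $C_i$, so
\[
|\partial\cT \cap [\vec{c}+\alpha X, \vec{c}+X]| \;\leq\; \sum_{i=1}^k |\partial C_i \cap [\vec{c}+\alpha X, \vec{c}+X]|,
\]
and take expectations termwise.

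Next, I would invoke Lemma~\ref{lem:scaling-cone-hitting}, whose hypothesis is satisfied because each $C_i$ is $\tau$-wide (with some witness unit vector $\vec{u}_i\in\Sigma$), to bound
\[
\E\!\left[|\partial C_i \cap [\vec{c}+\alpha X,\vec{c}+X]|\right] \;\leq\; \frac{2}{\tau}\int_1^{1/\alpha}\!\frac{1}{s}\int_{(C_i - s\vec{c})\cap \Sigma} \|\vec{x}\|\,\zeta_\Sigma(\vec{x})\,\d\vec{x}\,\d s.
\]
Summing over $i$ and interchanging the finite sum with the integrals gives a single inner integral of $\|\vec{x}\|\,\zeta_\Sigma(\vec{x})$ over $\bigcup_i (C_i - s\vec{c})\cap\Sigma$. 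Since $\{C_i\}$ partitions $\Sigma$ up to measure zero, the translated family $\{C_i - s\vec{c}\}$ partitions $\Sigma - s\vec{c}$ up to measure zero, so the union of the integration domains is contained in $\Sigma$. Hence the inner integral is at most $\int_\Sigma \|\vec{x}\|\,\zeta_\Sigma(\vec{x})\,\d\vec{x} = \E[\|X\|] = n$ by Lemma~\ref{lem:exp-moments}.

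The outer integral then reduces to $\int_1^{1/\alpha} n/s\,\d s = n\ln(1/\alpha)$, which yields exactly $\frac{2n}{\tau}\ln(1/\alpha)$ as required. There is essentially no technical obstacle: the only care needed is in justifying the interchange of sum and integral (trivial since the sum is finite and the integrands are nonnegative) and in observing that the supports of the shifted cones assemble back into a subset of $\Sigma$, which is where the $\E[\|X\|]=n$ cancellation makes the $\vec{c}$-dependence disappear. The loss of a factor of $2$ from facet double-counting is absorbed harmlessly into the stated constant, exactly as in Lemma~\ref{lem:exp-partition-hitting}.
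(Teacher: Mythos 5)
Your proposal follows exactly the same route as the paper's proof: sum the per-cone bound from Lemma~\ref{lem:scaling-cone-hitting}, observe that $\sum_i \int_{(C_i - s\vec{c})\cap\Sigma} \leq \int_\Sigma$ because the $C_i$ partition $\Sigma$, apply $\E[\|X\|]=n$ from Lemma~\ref{lem:exp-moments}, and evaluate $\int_1^{1/\alpha} s^{-1}\,\d s = \ln(1/\alpha)$. The argument is correct and matches the paper's proof step for step; the only minor quibble is that the paper does not invoke facet double-counting to save a factor of $2$ here (unlike in Lemma~\ref{lem:exp-partition-hitting}), it simply absorbs the overcount as you also note.
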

\begin{proof}
By Lemmas \ref{lem:exp-moments} and \ref{lem:scaling-cone-hitting}, 
we have that
\begin{align*}
\E[|\partial \cT \cap [\vec{c} + \alpha X, \vec{c} + X]|]
 &\leq \sum_{i=1}^k \E[|\partial C_i \cap [\vec{c} + \alpha X, \vec{c} + X]|] \\
 &\leq \frac{2}{\tau} \sum_{i=1}^k 
 \int_1^{1/\alpha} \frac{1}{s} \int_{(C_i - s\vec{c})\cap\Sigma}
\|\vec{x}\| \zeta_{\Sigma}(\vec{x}) \d\vec{x} \d s \\
 &\leq \frac{2}{\tau} \int_1^{1/\alpha} \frac{1}{s} \int_{\Sigma}
\|\vec{x}\| \zeta_{\Sigma}(\vec{x}) \d\vec{x} \d s \\
 &\leq \frac{2}{\tau} \int_1^{1/\alpha} \frac{1}{s} \E[\|X\|] \d s 
 = \frac{2n}{\tau} \ln \frac{1}{\alpha} \qedhere
\end{align*}
\end{proof}

\section{The shadow simplex method with symbolic perturbation}
\label{sec:shadow-simplex-alg}

In this section, we will give a self-contained presentation of the algorithmic details of the shadow simplex method,
including the details of coping with degenerate $P$ using a perturbation of the right-hand sides $\vec{b}$.
For clarity of presentation, we will first consider the case of simple polyhedra.
We will also assume that $\vec{c}$ and $\vec{d}$ are in general position as made precise in the precondition of Algorithm~\ref{alg:shadow-simplex}.

\begin{algorithm}[]
  \DontPrintSemicolon
  \KwIn{$P = \set{ \vec{x} \in \R^n : A\vec{x} \leq \vec{b} }$,
    $\vec{c}, \vec{d} \in N(P)$,
    optimal basis $B \subset [m]$ for $\vec{c}$}
  \KwReq{$P$ is pointed and simple}
  \KwReq{$[\vec{c}, \vec{d})$ intersects facets of normal cones only in their relative interiors}
  \KwOut{optimal basis $B \subset [m]$ for $\vec{d}$}
  $\lambda \gets 0$\;
  Gauss elimination: $A \gets A U$, $\vec{c} \gets U^T \vec{c}$, $\vec{d} \gets U^T \vec{d}$ so that $A_B = I_n$\;
  \Loop{}{
    $i^\star \gets \arg\min\set{ \frac{\vec{c}_i}{\vec{c}_i - \vec{d}_i} : i \in
B, \vec{c}_i > \vec{d}_i }$\;
    \lIf{$i^\star$ undefined or $\lambda^\star = \frac{\vec{c}_{i^\star}}{\vec{c}_{i^\star} - \vec{d}_{i^\star}} \geq 1$}{
      \Return{B}
    }
    $j^\star \gets \arg\min\set{ \frac{\pr{\vec{a}_j}{\vec{b}_B} - \vec{b}_j}{\vec{a}_{ji^\star}} : j \not\in B, \vec{a}_{ji^\star} < 0}$\;
    $B \gets B \setminus \set{ B_{i^\star} } \cup \set{ j^\star }$, $\lambda \gets \lambda^\star$\;
    Gauss elimination: $A \gets A U$, $\vec{c} \gets U^T \vec{c}$, $\vec{d} \gets U^T \vec{d}$ so that $A_B = I_n$\;
  }
  \caption{Shadow Simplex\label{alg:shadow-simplex}}
\end{algorithm}

\begin{lemma}
  \label{lemma:simplex-col-elim}
  Let $A \in \R^{m \times n}$, $\vec{b} \in \R^m$, and $U \in \R^{n \times n}$ invertible.
  Then a basis $B$ is optimal for $\max\set{ \pr{\vec{c}}{\vec{x}} : A\vec{x} \leq \vec{b}}$
  if and only if it is optimal for $\max\set{ \pr{U^T \vec{c}}{\vec{x}} : AU\vec{x} \leq \vec{b}}$.
\end{lemma}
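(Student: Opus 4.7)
The approach is to observe that the invertible linear map $\vec{y} := U^{-1} \vec{x}$ gives an objective-preserving bijection between the two feasible regions, and that this map sends the vertex associated with a basis $B$ in the original problem to the vertex associated with the same basis $B$ in the transformed problem.

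First I would check the bijection on feasible regions: $\vec{x}$ satisfies $A\vec{x} \leq \vec{b}$ if and only if $\vec{y} := U^{-1}\vec{x}$ satisfies $AU\vec{y} = AUU^{-1}\vec{x} = A\vec{x} \leq \vec{b}$. Objectives match because
\[
  \pr{U^T\vec{c}}{\vec{y}} = \pr{\vec{c}}{U\vec{y}} = \pr{\vec{c}}{\vec{x}}.
\]
Thus $\vec{x}^\star$ is an optimum of the original LP if and only if $U^{-1}\vec{x}^\star$ is an optimum of the transformed LP.

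Next I would check that bases track correctly. For a basis $B \subseteq [m]$ with $|B|=n$, the submatrix $A_B$ is invertible if and only if $(AU)_B = A_B U$ is invertible, since $U$ is invertible. In that case the basic solution for $B$ in the original LP is $\vec{x}_B = A_B^{-1}\vec{b}_B$, while for the transformed LP it is $(A_B U)^{-1}\vec{b}_B = U^{-1} A_B^{-1} \vec{b}_B = U^{-1}\vec{x}_B$, which is exactly the image of $\vec{x}_B$ under the bijection. Moreover, $\vec{x}_B$ is feasible (i.e., $B$ is a feasible basis) if and only if $U^{-1}\vec{x}_B$ is feasible, again because the bijection preserves the constraint system.

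Combining the two observations, a basis $B$ is optimal for the original LP exactly when its associated vertex $\vec{x}_B$ is both feasible and maximizes the objective; by the bijection this happens if and only if $U^{-1}\vec{x}_B$ is both feasible and maximizes the transformed objective, i.e., $B$ is optimal for the transformed LP. There is no real obstacle here, as the argument is a routine change-of-variables; the only mild care needed is to ensure that the correspondence between bases and basic solutions, and the definition of basis optimality (feasibility together with objective-maximality), is spelled out consistently for both LPs so that ``optimal basis'' really is preserved and not just ``optimal vertex''.
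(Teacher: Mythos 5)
Your proof takes a genuinely different route from the paper: you argue via a primal change of variables $\vec{x} \mapsto U^{-1}\vec{x}$, checking that feasible regions, objective values, and basic solutions all correspond, while the paper gives a one-line dual argument observing that $B$ is optimal iff $\vec{c} \in \cone\set{\vec{a}_i : i \in B}$, and that applying $U^T$ carries this membership to $U^T\vec{c} \in \cone\set{U^T\vec{a}_i : i \in B}$ where $U^T\vec{a}_i$ are precisely the rows of $AU$. Your approach has the virtue of explicitly checking primal feasibility of the basis, which the paper's proof glosses over (harmlessly, since $A A_B^{-1}\vec{b}_B \leq \vec{b}$ and $AU(A_BU)^{-1}\vec{b}_B \leq \vec{b}$ are the same condition).

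However, your proof has a subtle gap that you partially flag yourself at the end but do not resolve correctly. You define ``$B$ optimal'' as ``$\vec{x}_B$ feasible and maximizes the objective.'' For degenerate polyhedra this is strictly weaker than the standard notion the paper uses, namely primal feasibility together with dual feasibility $\vec{c} \in \cone\set{\vec{a}_i : i \in B}$: a degenerate optimal vertex can have several feasible defining bases $B$, only some of which contain $\vec{c}$ in the cone of their rows. Your bijection argument establishes the iff for your weaker notion, not for the paper's. The fix is easy and you already have the key identity: $\vec{c} = \sum_{i \in B}\lambda_i \vec{a}_i$ with $\lambda_i \geq 0$ iff $U^T\vec{c} = \sum_{i \in B}\lambda_i U^T\vec{a}_i$ with the same $\lambda_i$, so cone membership is preserved directly (this is the same computation behind $\pr{U^T\vec{c}}{\vec{y}} = \pr{\vec{c}}{U\vec{y}}$). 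As written, though, your proof establishes a subtly different statement that coincides with the paper's lemma only when the polyhedron is simple --- which does happen to be the setting in which the paper invokes it (Theorem~\ref{thm:simplex-method-simple} and the perturbed $P_\varepsilon$), but the lemma is stated without that restriction.
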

\begin{proof}
  Let $\vec{a}_1, \dots, \vec{a}_m \in \R^n$ be the rows of $A$.
  The basis $B$ is optimal for the first problem if and only if $\vec{c} \in \cone\set{ \vec{a}_i : i \in B }$.
  This is equivalent to $U^T \vec{c} \in \cone\set{ U^T \vec{a}_i : i \in B }$.
  Since the $U^T \vec{a}_i$ are the rows of $A U$, this is equivalent to $B$ being an optimal basis for the second problem.
\end{proof}

\begin{theorem}
  \label{thm:simplex-method-simple}
  Algorithm~\ref{alg:shadow-simplex} is correct as specified and requires $O(mn^2 + Nmn)$ arithmetic operations,
  where $N$ is the number of normal cone facets intersected by $[\vec{a},\vec{b}]$.
\end{theorem}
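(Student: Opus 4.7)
My plan is to establish a loop invariant, verify the pivot rules maintain it, and then do a short running-time analysis. The invariant I will maintain at the top of each iteration is: the current $B$ is an optimal basis for the parameterized objective $\vec{c}(\lambda) = (1-\lambda)\vec{c}_0 + \lambda \vec{d}_0$ at the current value of $\lambda$ (where $\vec{c}_0, \vec{d}_0$ denote the original inputs), and the stored $A, \vec{c}, \vec{d}$ equal $A_0 U$, $U^T \vec{c}_0$, $U^T \vec{d}_0$ for an invertible $U$ chosen so that $A_B = I_n$ in the transformed tableau. By Lemma~\ref{lemma:simplex-col-elim}, this column transformation preserves which bases are optimal for which objectives, so it suffices to reason inside the transformed LP. There, $B$ is optimal for $\vec{c}(\lambda)$ iff the $n$ coordinates of the transformed objective indexed by $B$ are nonnegative, which translates to $(1-\lambda)\vec{c}_i + \lambda \vec{d}_i \geq 0$ for each $i \in B$. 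Consequently, the largest $\lambda$ at which $B$ remains optimal is exactly the $\lambda^\star$ computed by the algorithm; if $\lambda^\star \geq 1$ then $B$ is optimal for $\vec{d}_0$ and the algorithm terminates correctly.

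The remaining step is to verify the pivot. The general-position assumption makes $i^\star$ unique, so $\vec{c}(\lambda^\star)$ lies in the relative interior of the unique facet of the current normal cone obtained by dropping its $i^\star$-th generator; simplicity of $P$ then guarantees that exactly one neighbouring normal cone sits across this facet, corresponding to the unique adjacent vertex of $P$ reachable by relaxing constraint $B_{i^\star}$. In the transformed coordinates this edge direction is $-\vec{e}_{i^\star}$, and the standard min-ratio test $j^\star = \arg\min\set{(\pr{\vec{a}_j}{\vec{b}_B} - \vec{b}_j)/\vec{a}_{j i^\star} : j\notin B,\ \vec{a}_{j i^\star} < 0}$ returns the first non-basic constraint made tight along it. The minimization set is non-empty because $\vec{c}(\lambda^\star + \epsilon) \in N(P)$ rules out an unbounded pivot. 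Thus $B \setminus \set{B_{i^\star}} \cup \set{j^\star}$ is optimal for $\vec{c}(\lambda^\star + \epsilon)$ and the invariant is re-established by the post-pivot Gaussian elimination.

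For the running time, the initial Gaussian elimination costs $O(mn^2)$: inverting $A_B$ takes $O(n^3)$ and right-multiplying $A$ by $A_B^{-1}$ takes $O(mn^2)$. After each pivot, $B$ changes by a single element, so the transformed tableau can be refreshed by a single column pivot on column $i^\star$ at cost $O(mn)$; finding $i^\star$ scans the $n$ entries of $\vec{c}, \vec{d}$ indexed by $B$ and finding $j^\star$ scans the $i^\star$-th column of $A$. Since each iteration corresponds (by general position) to exactly one transverse crossing of $[\vec{c}_0,\vec{d}_0]$ with the boundary of the implicit triangulation of the normal fan, the loop runs for $N$ iterations, yielding the claimed $O(mn^2 + Nmn)$ total. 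The main obstacle is the pivot-correctness argument, which requires reconciling the \emph{dual} view (the parameter $\vec{c}(\lambda)$ crossing a single cone facet) with the \emph{primal} view (simplex traversing a single edge of $P$); both simplicity and the general-position hypothesis are essential here, ruling out ties in either ratio test and guaranteeing that the single column pivot of the min-ratio test lands in precisely the adjacent normal cone.
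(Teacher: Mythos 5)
Your proof is correct and follows essentially the same route as the paper's: the same loop invariant (that $B$ is optimal for $\vec{c}_\lambda$), the same use of Lemma~\ref{lemma:simplex-col-elim} to reduce to the case $A_B = I_n$ where the normal cone is $\R^n_+$, the same reading of the two ratio tests, and the same per-iteration cost accounting. The only cosmetic difference is that the paper bounds the iteration count by $N+1$ to accommodate a possible degenerate first iteration where $\lambda^\star = 0$ (when $\vec{c}$ sits on a facet of its cone), which does not affect the $O(mn^2 + Nmn)$ bound.
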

\begin{proof}
  The initial Gauss elimination requires $O(mn^2)$ arithmetic operations.
  Each iteration is dominated by the computation of $j^\star$ and the rank-$1$ Gauss elimination update,
  both of which require $O(mn)$ arithmetic operations.

  We will show the invariant that $B$ is an optimal basis for $\max\set{ \pr{\vec{c}_\lambda}{\vec{x}} : A\vec{x} \leq \vec{u} }$,
  where $\vec{c}_\lambda = (1-\lambda) \vec{c} + \lambda \vec{d}$.
  The invariant initially holds by definition of the input
  and remains unchanged by the Gauss elimination steps due to Lemma~\ref{lemma:simplex-col-elim}.

  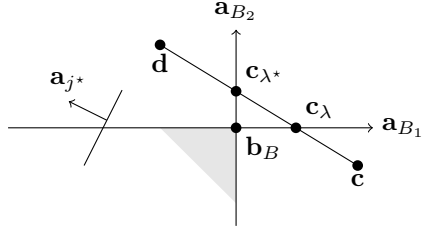
\begin{figure}
    \begin{center}
      \begin{tikzpicture}
        \fill[black!10] (-1,0) -- (0,0) -- (0,-1) -- cycle;
        \draw (-3,0) -- (0,0) -- (0,-1.3);
        \fill (0,0) circle[radius=2pt] node[below right] {$\vec{b}_B$};

        \draw[name path=B1,->] (0,0) -- (1.8,0) node[right] {$\vec{a}_{B_1}$};
        \draw[name path=B2,->] (0,0) -- (0,1.3) node[above] {$\vec{a}_{B_2}$};

        \coordinate (c) at (1.6,-0.5);
        \coordinate (d) at (-1,1.1);

        \draw[name path=cd] (c) -- (d);

        \path[name intersections={of=cd and B1}] (intersection-1) node (cL) {};
        \path[name intersections={of=cd and B2}] (intersection-1) node (cLs) {};

        \foreach \p in {c,d,cL,cLs} {
          \fill (\p) circle[radius=2pt];
        }
        \draw (c) node[below] {$\vec{c}$};
        \draw (d) node[below] {$\vec{d}$};
        \draw (cL) node[above right] {$\vec{c}_\lambda$};
        \draw (cLs) node[above right] {$\vec{c}_{\lambda^\star}$};

        \draw (-2,-0.5) -- (-1.5,0.5);
        \draw[->] (-1.7,0.1) -- +(-0.5,0.25) node[above] {$\vec{a}_{j^\star}$};
      \end{tikzpicture}
    \end{center}
    \caption{One iteration of the Shadow Simplex\label{fig:shadow-simplex-iteration}}
  \end{figure}

  A typical iteration is illustrated in Figure~\ref{fig:shadow-simplex-iteration}.
  The vertex corresponding to $B$ is $\vec{b}_B$, its normal cone is the positive orthant $\R^n_+$.
  The invariant implies that $\vec{c}_{\lambda} \in \R^n_+$.
  If $\vec{d} - \vec{c} \geq 0$, the algorithm returns $B$.
  Indeed, $B$ is an optimal basis for $\vec{d}$ in this case because $\vec{d} \in \R^n_+$.

  Otherwise, the ray from $\vec{c}_\lambda$ through $\vec{d}$ eventually leaves the positive orthant,
  and in fact $\vec{c}_{\lambda^\star}$ is the last point contained in $\R^n_+$.
  If $\lambda^\star \geq 1$ we know that $\vec{d} = \vec{c}_1 \in \R^n_+$
  and so the current basis is optimal for $\vec{d}$.
  The index $i^\star$ is the index into the basis
  whose contribution to the conic combination describing $\vec{c}_{\lambda^\star}$ is $0$.%
  \footnote{This is just a different way of saying that $(\vec{c}_{\lambda^\star})_{i^\star} = 0$.
  Due to the precondition on $[\vec{c},\vec{d})$ this index is uniquely defined when $\lambda^\star < 1$.}

  Letting $B' = B \setminus \{ B_{i^\star} \} \cup \{ j^\star \}$,
  where $B_{i^\star}$ is the $i^\star$-th index in the basis,
  it is trivially true that $\vec{c}_{\lambda^\star} \in \cone A_{B'}^T$,
  so it only remains to show that $B'$ is a feasible basis.
  The edge of the polyhedron described by the constraints $B \setminus \{ B_{i^\star} \}$
  is contained in the ray starting at the vertex $\vec{b}_B$ in direction $-\vec{e}_{i^\star}$.
  This ray can only be cut off by constraints $\vec{a}_j \vec{x} \leq \vec{b}_j$ with $\vec{a}_{ji^\star} < 0$.
  At least one such constraint must exist by the condition that $\vec{d}$ lies in the support of the normal fan,
  i.e. the corresponding linear program is bounded.
  The fraction $\frac{\pr{\vec{a}_j}{\vec{b}_B} - \vec{b}_j}{\vec{a}_{ji^\star}}$ is the Euclidean distance from $\vec{b}_B$
  to the intersection of the constraint $\pr{\vec{a}_j}{\vec{x}} \leq \vec{b}_j$ with the ray,
  and so $B'$ is feasible.\footnote{%
  Note that the minimum is positive and unique because the underlying polyhedron is simple.}
  This completes the proof of the invariant and thus the proof of correctness.

  If the initial objective $\vec{c}$ lies in the facet of a normal cone we may get $\lambda^\star = \lambda$ in the first iteration.
  Nevertheless, due to the precondition on $[\vec{c}, \vec{d})$,
  every computed value $\lambda^\star$ is distinct
  and, except for the last one,
  corresponds to one intersection point of $[\vec{c},\vec{d}]$ with a facet of a normal cone.
  Therefore, the number of iterations is bounded by $N + 1$.
\end{proof}

If the input polyhedron were non-simple, the proof of correctness and termination would still work
given $\vec{c}$ in sufficiently general position.
However, the $\lambda^\star$ would then correspond to points of intersection between $[\vec{c}, \vec{d}]$ and
cones corresponding to bases. Those cones are merely subsets of normal cones,
and they may not be mutually consistent with a single triangulation of the normal fan.
For this reason, we consider a perturbed polyhedron
\[
  P_\varepsilon := \set{ \vec{x} \in \R^n : A\vec{x} \leq \vec{b} + \vec{\gamma}(\varepsilon) }
\]
where $\vec{\gamma}(\varepsilon) := (\varepsilon, \varepsilon^2, \ldots, \varepsilon^m)$ for $\varepsilon > 0$.

\begin{lemma}
  \label{lemma:perturbation}
  If $\varepsilon > 0$ is sufficiently small, one has that
  \begin{enumerate}
    \item every feasible basis $B$ for $P_\varepsilon$ is also feasible for $P$,
    \item the normal fan of $P_\varepsilon$ triangulates the normal fan of $P$, and
    \item if $B = \{ m - n + 1, \ldots, m \}$ is a feasible basis for $P$, then it is also feasible for $P_\varepsilon$.
  \end{enumerate}
\end{lemma}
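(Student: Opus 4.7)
The plan is to prove the three items separately, throughout exploiting the polynomial structure of $\vec{\gamma}(\varepsilon) = (\varepsilon, \varepsilon^2, \ldots, \varepsilon^m)$. For any basis $B \subseteq [m]$ with $A_B$ invertible, let $\vec{x}_B(\varepsilon) := A_B^{-1}(\vec{b}_B + \vec{\gamma}_B(\varepsilon))$ be the associated point; this is a polynomial in $\varepsilon$, and since there are only finitely many such bases, every ``for sufficiently small $\varepsilon$'' conclusion reduces to a minimum over finitely many cases. Item (1) then follows by contrapositive: if $B$ is infeasible for $P$, some $j$ satisfies $\pr{\vec{a}_j}{A_B^{-1}\vec{b}_B} > \vec{b}_j$, and by continuity $\pr{\vec{a}_j}{\vec{x}_B(\varepsilon)} > \vec{b}_j + \varepsilon^j$ persists for all sufficiently small $\varepsilon$, so $B$ is infeasible for $P_\varepsilon$. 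For item (3), fix $B = \{m-n+1,\ldots,m\}$ and any $j \notin B$, so $j \leq m-n$; feasibility at index $j$ reduces to $\pr{\vec{a}_j}{A_B^{-1}\vec{\gamma}_B(\varepsilon)} \leq (\vec{b}_j - \pr{\vec{a}_j}{A_B^{-1}\vec{b}_B}) + \varepsilon^j$, where the parenthesized quantity is $\geq 0$ because $B$ is feasible for $P$. The left-hand side is a polynomial in $\varepsilon$ whose monomials all have exponent $\geq m-n+1 > j$, hence is $o(\varepsilon^j)$, and the inequality follows for small $\varepsilon$.

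Item (2) is the main work, and splits into showing simplicity of $P_\varepsilon$ and then the triangulation property. For simplicity, I take any $I \subseteq [m]$ with $|I| = n+1$: since $n+1$ vectors in $\R^n$ are linearly dependent, there exists a nonzero $(\alpha_i)_{i\in I}$ with $\sum_{i \in I} \alpha_i \vec{a}_i = 0$, and a point simultaneously tight at all constraints in $I$ must satisfy $\sum_{i \in I} \alpha_i (\vec{b}_i + \varepsilon^i) = 0$. Because the exponents $i \in I$ are distinct, the polynomial $\varepsilon \mapsto \sum_i \alpha_i \vec{b}_i + \sum_i \alpha_i \varepsilon^i$ is nonzero and has only finitely many roots; a union bound over the finitely many choices of $I$ yields that for small enough $\varepsilon > 0$ no $n+1$ constraints of $P_\varepsilon$ are simultaneously tight, so $P_\varepsilon$ is simple. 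For the triangulation property, I first note that the supports of the two normal fans coincide since both equal the polar of the common recession cone $\{\vec{y} : A\vec{y} \leq 0\}$, which depends only on $A$. Given any feasible basis $B$ of $P_\varepsilon$, setting $\varepsilon = 0$ in the identity $\pr{\vec{a}_i}{\vec{x}_B(\varepsilon)} = \vec{b}_i + \varepsilon^i$ for $i \in B$ yields $\pr{\vec{a}_i}{A_B^{-1}\vec{b}_B} = \vec{b}_i$, so $A_B^{-1}\vec{b}_B$ is a vertex $\vec{v}$ of $P$ (feasibility of the other constraints is (1)) with $B \subseteq I_\vec{v}$, and consequently $\cone(\vec{a}_i : i \in B) \subseteq N_\vec{v}$. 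To see that the simplicial subcones arising in this way cover $N_\vec{v}$, I fix $\vec{c}$ in the relative interior of $N_\vec{v}$ that is generic with respect to the (finitely many) facets of the normal fan of $P_\varepsilon$: by simplicity, $\max \pr{\vec{c}}{\vec{x}}$ over $P_\varepsilon$ has a unique optimal vertex $\vec{x}_B(\varepsilon)$, which by continuity of LP optima converges as $\varepsilon \to 0$ to the unique maximizer on $P$, namely $\vec{v}$. For small enough $\varepsilon$ this forces $B \subseteq I_\vec{v}$ and $\vec{c} \in \cone(\vec{a}_i : i \in B)$; the remaining $\vec{c}$'s are covered by closure. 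Finally, the subcones meet only in shared faces automatically, since they form a subfamily of the polyhedral normal fan of $P_\varepsilon$.

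The main obstacle is the covering step in (2): passing from the pointwise/generic statement about a single $\vec{c}$ to a uniform one that covers all of $N_\vec{v}$ simultaneously. This is handled by noting that only finitely many bases of $A$ exist, so the set of feasible bases of $P_\varepsilon$ is stable for small enough $\varepsilon$, making the ``for small $\varepsilon$'' in the pointwise argument uniform over the finitely many relevant cones. Every other part of the proof is either a direct continuity argument or a clean application of the chosen exponent structure of $\vec{\gamma}$.
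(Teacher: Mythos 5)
Your proof is correct and follows the same overall strategy as the paper: items (1) and (3) are handled in essentially identical fashion (contrapositive plus continuity for (1); the degree-separation argument with all perturbation exponents in $B$ exceeding $j$ for (3)), and simplicity of $P_\varepsilon$ is deduced from the non-vanishing of finitely many univariate polynomials in $\varepsilon$ — your variant, taking an arbitrary set of $n+1$ simultaneously tight constraints and a nontrivial linear dependence among the corresponding $\vec{a}_i$, is a clean and equally valid formulation of what the paper states via ``basis plus one extra constraint.''

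The one place where you go beyond the paper is the triangulation part of item (2). The paper asserts that it ``is implied by the first claim'' without elaboration. You explicitly verify the two missing ingredients: that the supports of the two normal fans coincide (both being polar to the common recession cone $\set{\vec{y} : A\vec{y} \leq 0}$), and that the simplicial cones of $P_\varepsilon$'s normal fan actually cover each $N_{\vec{v}}$, via convergence of the optimal vertex of $P_\varepsilon$ for a generic $\vec{c} \in N_{\vec{v}}$, with uniformity over $\vec{c}$ secured by the finiteness of bases. This is not a different approach, just a more careful one; the gaps you close are genuinely present in the paper's terse exposition.
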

\begin{proof}
  If $P_\varepsilon$ is not simple,
  there is some basis $B$ and a constraint $j \not\in B$ such that
  \[
    \pr{ \vec{a}_j }{ A_B^{-1} (\vec{b}_B + \vec{\gamma}(\varepsilon)_B) } = \vec{b}_j + \varepsilon^j
  \]
  In other words, $\varepsilon$ is a root of a non-zero polynomial.
  Since there are only finitely many pairs of $B$ and $j \not\in B$,
  we have that $P_\varepsilon$ is simple for sufficiently small $\varepsilon > 0$.
  In particular, the normal fan of $P_\varepsilon$ is a triangulation.
  For the second claim, it remains to show that for all vertices $\vec{x} \in P_\varepsilon$
  one has $N_\vec{x} \subseteq N_\vec{y}$ for some vertex $\vec{y} \in P$.
  In fact, this is implied by the first claim, which we show next.

  Let $B$ be an infeasible basis for $P$,
  i.e. there is some $j \not\in B$ such that
  \[
    \pr{ \vec{a}_j }{ A_B^{-1} (\vec{b}_B + \vec{\gamma}(\varepsilon)_B) } > \vec{b}_j + \varepsilon^j
  \]
  holds for $\varepsilon = 0$.
  Since both sides of the inequality are continuous as functions in $\varepsilon$,
  strict inequality also holds for all sufficiently small $\varepsilon > 0$.
  There are only finitely many infeasible bases for $P$,
  so we have that all of them are infeasible for $P_\varepsilon$ when $\varepsilon > 0$ is sufficiently small.
  This implies the first claim.

  Finally, if $B = \{ m - n + 1, \ldots, m \}$ is feasible for $P$,
  we have that
  \[
    \pr{ \vec{a}_j }{ A_B^{-1} (\vec{b}_B + \vec{\gamma}(\varepsilon)_B) } - \vec{b}_j - \varepsilon^j \leq 0
  \]
  holds for all $j \leq m - n$ when $\varepsilon = 0$.
  The left hand side is a polynomial in $\varepsilon$ whose lowest-degree non-constant monomial is $-\varepsilon^j$.
  This implies that the inequality also holds when $\varepsilon > 0$ is sufficiently small,
  hence $B$ is feasible for $P_\varepsilon$.
\end{proof}

Without explicit bounds on the coefficients describing $P$, we cannot give a quantitative bound for $\varepsilon$.
We avoid the need for such a bound by applying the perturbation symbolically.
Since the right-hand sides $\vec{b}$ never appear in divisors,
we can perform related computations in the polynomial ring $R := \R[\varepsilon]$.
The order $\leq$ on $\R$ naturally extends to a lexicographic order on $R$ such that $a \leq b$ holds for $a, b \in R$
if and only if $a(\varepsilon) \leq b(\varepsilon)$ holds over the reals for all sufficiently small $\varepsilon > 0$.

\begin{theorem}
  \label{thm:shadow-simplex}
  Let $P = \set{ \vec{x} \in \R^n : A\vec{x} \leq \vec{b} }$ be pointed,
  $\vec{c}, \vec{d} \in \R^n$, and $B$ an optimal basis for $\vec{c}$.
  If every intersection of $[\vec{c}, \vec{d})$ with a facet $F$ of a cone spanned by a feasible basis of $P$
  lies in the relative interior of $F$,
  the Shadow Simplex can be used to compute an optimal basis for $\vec{d}$
  in $O(mn^2 + Nmn)$ arithmetic operations,
  where $N$ is the number of intersections of $[\vec{c},\vec{d}]$
  with some triangulation of the normal fan of $P$
  that contains the cone spanned by the initial basis $B$.
\end{theorem}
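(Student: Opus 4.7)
The plan is to reduce to the simple polyhedron case handled by Theorem~\ref{thm:simplex-method-simple} by means of the symbolic right-hand-side perturbation $\vec{\gamma}(\varepsilon) = (\varepsilon, \varepsilon^2, \ldots, \varepsilon^m)$. After a harmless reordering of the constraints so that the input basis occupies positions $B = \{m-n+1, \dots, m\}$, Lemma~\ref{lemma:perturbation} tells us that, for all sufficiently small $\varepsilon > 0$, the perturbed polyhedron $P_\varepsilon$ is simple, its normal fan triangulates the normal fan of $P$, the initial basis $B$ remains feasible for $P_\varepsilon$, and every feasible basis of $P_\varepsilon$ is also a feasible basis of $P$.

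Next I would transfer the precondition of Theorem~\ref{thm:simplex-method-simple} to $P_\varepsilon$. Since every feasible basis of $P_\varepsilon$ is feasible for $P$, the collection of cones spanned by feasible bases of $P_\varepsilon$ is a subcollection of those spanned by feasible bases of $P$; thus the hypothesis on $[\vec{c},\vec{d})$ in the theorem statement immediately yields the analogous hypothesis for $P_\varepsilon$. Running Algorithm~\ref{alg:shadow-simplex} on $P_\varepsilon$ therefore outputs an optimal basis $B'$ for $\vec{d}$ over $P_\varepsilon$ in $O(mn^2 + N'mn)$ arithmetic operations, where $N'$ counts the intersections of $[\vec{c},\vec{d}]$ with facets of the normal fan of $P_\varepsilon$. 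Because the normal fan of $P_\varepsilon$ triangulates that of $P$ and contains the cone spanned by $B$, we may take this normal fan as the triangulation $\cT$ in the statement, yielding $N' = N$. Finally, $B'$ is optimal for $\vec{d}$ over $P$ as well: optimality of $B'$ means $\vec{d} \in \cone\set{\vec{a}_i : i \in B'}$, a condition depending only on $A$, and $B'$ is feasible for $P$ by Lemma~\ref{lemma:perturbation}.

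The main technical hurdle is confirming that working with the symbolic perturbation does not blow up the $O(mn^2 + Nmn)$ arithmetic budget. The key point is that $\varepsilon$ appears only in the right-hand side $\vec{b}(\varepsilon)$ and thus only enters through the pivot ratio $\frac{\pr{\vec{a}_j}{\vec{b}_B} - \vec{b}_j}{\vec{a}_{ji^\star}}$ used to compute $j^\star$; the matrix $A$ and the objectives $\vec{c},\vec{d}$ remain real-valued, so the Gauss elimination updates retain their original $O(mn)$ per-iteration cost and the $O(mn^2)$ initialization cost. Each entry $\vec{b}_{B_i} + \varepsilon^{B_i}$ is a binomial, so $\pr{\vec{a}_j}{\vec{b}_B}$ is a polynomial with at most $n+1$ nonzero monomials, formable in $O(n)$ real operations. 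Comparing two pivot-ratio candidates by cross-multiplying and then lexicographically comparing polynomials of size $O(n)$ also costs $O(n)$. Scanning all $m$ candidates for $j^\star$ therefore costs $O(mn)$ per iteration, and summing over the $N$ iterations together with the $O(mn^2)$ initialization gives the stated bound.
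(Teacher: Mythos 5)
Your proposal is correct and follows essentially the same route as the paper: reorder so $B$ occupies the last $n$ indices, apply Lemma~\ref{lemma:perturbation} to reduce to the simple case of Theorem~\ref{thm:simplex-method-simple}, take the normal fan of $P_\varepsilon$ as the triangulation $\cT$, and observe that the symbolic perturbation lives only in the right-hand sides, so each pivot candidate is a sparse polynomial of $O(n)$ monomials and the per-iteration cost stays $O(mn)$. The only cosmetic difference is that you compare candidate ratios by cross-multiplication (being careful about the sign of $\vec{a}_{ji^\star}<0$) where the paper simply scales each numerator polynomial by the real divisor; both give the same $O(n)$ comparison cost.
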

\begin{proof}
  Rearrange the rows of $A\vec{x} \leq \vec{b}$ so that $B = \set{ m - n + 1, \ldots, m }$
  and apply Lemma~\ref{lemma:perturbation}.
  This gives us all preconditions of the Shadow Simplex algorithm,
  and the proof of Theorem~\ref{thm:simplex-method-simple} applies
  with a single caveat:
  the computation of $j^\star$ involves computations and comparisons of terms of the form
  $\frac{\pr{\vec{a}_j}{\vec{b}_B + \vec{\gamma}(\varepsilon)_B} - \vec{b}_j - \varepsilon^j}{\vec{a}_{ji^\star}} \in R$.
  The resulting polynomials contain at most $n+2$ monomials.
  By storing them as sparse sorted vectors,
  we can compute each term and compare it to the previous best in time $O(n)$,
  so that the computation of $j^\star$ still requires only $O(mn)$ arithmetic operations.
\end{proof}

\section{Removing unboundedness}
\label{sec:removing-unboundedness}

In this section, we discuss two approaches to add constraints to a pointed polyhedron
$P = \set{ \vec{x} \in \R^n : A\vec{x} \leq \vec{b} }$ to construct an LP equivalent
\emph{polytope} $P'$.
This requires the additional constraints to be strictly valid for all vertices of $P$.

\begin{lemma}
  \label{lem:delta-distance-inverse}
  Let $M \in \R^{n \times n}$ be a matrix whose rows are
  $\vec{v}_1, \ldots, \vec{v}_n \in \bS^{n-1}$.
  Then $\vec{v}_1, \ldots, \vec{v}_n$ satisfy the $\delta$-distance property
  if and only if the columns $\vec{u}_1,\ldots,\vec{u}_n$ of $M^{-1}$ satisfy $\|\vec{u}_j\| \leq 1/\delta$ for all $j$.
\end{lemma}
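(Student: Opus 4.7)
The plan is to exploit the dual-basis relationship between the rows of $M$ and the columns of $M^{-1}$. From $M M^{-1} = I_n$ we read off the biorthogonality relation $\pr{\vec{v}_i}{\vec{u}_j} = [i=j]$ for all $i,j \in [n]$. In particular, $\vec{u}_j$ is orthogonal to the hyperplane $H_j \eqdef \linsp\set{\vec{v}_i : i \neq j}$, which is an $(n-1)$-dimensional subspace since the existence of $M^{-1}$ already forces $\vec{v}_1,\dots,\vec{v}_n$ to be a basis.

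Next I would compute $\|\vec{u}_j\|$ in terms of the distance $d_j \eqdef d(\vec{v}_j, H_j)$. Let $\vec{n}_j \in \bS^{n-1}$ be a unit normal to $H_j$; then by definition $d_j = |\pr{\vec{n}_j}{\vec{v}_j}|$. Because $\vec{u}_j \perp H_j$, we may write $\vec{u}_j = c_j \vec{n}_j$ for some scalar $c_j$, and the normalization $\pr{\vec{v}_j}{\vec{u}_j} = 1$ forces $c_j = 1/\pr{\vec{n}_j}{\vec{v}_j}$. Therefore
\[
\|\vec{u}_j\| \;=\; |c_j| \;=\; \frac{1}{|\pr{\vec{n}_j}{\vec{v}_j}|} \;=\; \frac{1}{d_j}.
\]
Since each $\vec{v}_j$ is a unit vector, the $\delta$-distance property for $\vec{v}_1,\dots,\vec{v}_n$ is precisely the condition $d_j \geq \delta$ for all $j$, which by the displayed identity is equivalent to $\|\vec{u}_j\| \leq 1/\delta$ for all $j$. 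This yields both directions of the equivalence simultaneously.

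There is no genuine obstacle here; the proof is a short linear-algebra calculation whose only subtlety is recognizing that the columns of $M^{-1}$ are the biorthogonal ``dual basis'' to the rows of $M$, and hence are scalar multiples of the normals to the hyperplanes $H_j$. Once this observation is made, the quantitative equivalence between $\|\vec{u}_j\|$ and $d(\vec{v}_j, H_j)^{-1}$ falls out immediately.
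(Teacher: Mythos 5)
Your proof is correct and takes essentially the same route as the paper: both rest on the dual-basis observation $\pr{\vec{v}_i}{\vec{u}_j}=[i=j]$ and the resulting identity $\|\vec{u}_i\|\,d(\vec{v}_i,H_i)=1$. You merely spell out the intermediate step (writing $\vec{u}_j$ as a multiple of the unit normal $\vec{n}_j$) that the paper leaves implicit.
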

\begin{proof}
  Let $H_i = \linsp\set{ \vec{v}_j : j \neq i }$.
  Then
  \[
    1 = \pr{\vec{v}_i}{\vec{u}_i} = \|\vec{u}_i\| d(\vec{v}_i, H_i) \iff d(\vec{v}_i, H_i) = \frac{1}{\|\vec{u}_i\|}
  \]
  implies the statement.
\end{proof}

\begin{lemma}
  \label{lem:vertex-norm}
  Let $P = \set{ \vec{x} \in \R^n : A\vec{x} \leq \vec{b} }$
  satisfy the local $\delta$-distance property
  and let $b_{\max} = \max\set{ \frac{|\vec{b}_i|}{\|\vec{a}_i\|} : i \in [m] }$.
  Then every vertex $\vec{x} \in P$ satisfies $\|\vec{x}\| \leq \frac{n b_{\max}}{\delta}$.
\end{lemma}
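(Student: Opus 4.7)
The plan is to express a vertex in terms of the basis of tight constraints at it, normalize the rows to unit vectors, and then invoke Lemma~\ref{lem:delta-distance-inverse} to bound the inverse.

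First, let $\vec{x}$ be a vertex of $P$ and let $B \subseteq [m]$, $|B| = n$, index a feasible basis with $\vec{x}$ as its unique solution, so that $A_B \vec{x} = \vec{b}_B$ with $A_B$ invertible. Define the row-normalized matrix $\tilde{A}_B$ whose rows are $\vec{a}_i / \|\vec{a}_i\|$ for $i \in B$, and correspondingly the vector $\tilde{\vec{b}}_B$ with entries $\vec{b}_i / \|\vec{a}_i\|$. Since rescaling rows preserves the solution set, we have $\tilde{A}_B \vec{x} = \tilde{\vec{b}}_B$, hence $\vec{x} = \tilde{A}_B^{-1} \tilde{\vec{b}}_B$.

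Second, the local $\delta$-distance property of $P$ implies that the unit vectors $\vec{a}_i/\|\vec{a}_i\|$ for $i \in B$ satisfy the $\delta$-distance property (the property is scale-invariant, as noted in the preliminaries). Applying Lemma~\ref{lem:delta-distance-inverse} to the matrix $\tilde{A}_B$, every column $\vec{u}_i$ of $\tilde{A}_B^{-1}$ satisfies $\|\vec{u}_i\| \leq 1/\delta$.

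Third, writing $\vec{x} = \sum_{i \in B} (\tilde{\vec{b}}_B)_i \, \vec{u}_i$ and using the triangle inequality together with $|(\tilde{\vec{b}}_B)_i| = |\vec{b}_i|/\|\vec{a}_i\| \leq b_{\max}$ for each $i \in B$, we get
\[
  \|\vec{x}\| \leq \sum_{i \in B} |(\tilde{\vec{b}}_B)_i|\, \|\vec{u}_i\| \leq n \cdot b_{\max} \cdot \frac{1}{\delta} = \frac{n b_{\max}}{\delta},
\]
as claimed.

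There is no real obstacle here; the only subtlety is the initial normalization, which is needed because the $\delta$-distance property and Lemma~\ref{lem:delta-distance-inverse} are stated for unit vectors, whereas $b_{\max}$ is naturally measured after dividing each $\vec{b}_i$ by $\|\vec{a}_i\|$. Once the rows are normalized, the result follows immediately from Lemma~\ref{lem:delta-distance-inverse}.
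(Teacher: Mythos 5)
Your proof is correct and takes essentially the same approach as the paper: express the vertex via the inverse of the (normalized) basis matrix, apply Lemma~\ref{lem:delta-distance-inverse} to bound the columns of the inverse by $1/\delta$, and conclude with the triangle inequality. The paper simply assumes $\|\vec{a}_i\| = 1$ without loss of generality rather than carrying out the row normalization explicitly as you do.
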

\begin{proof}
  We may assume without loss of generality that $\|\vec{a}_i\| = 1$ for all $i \in [m]$.
  Let $\vec{x}$ be a vertex and let $B$ be a basis for $\vec{x}$.
  Let $\vec{u}_1, \ldots, \vec{u}_n \in \R^n$ be the columns of $A_B^{-1}$.
  The triangle inequality and Lemma~\ref{lem:delta-distance-inverse} imply
  $\|\vec{x}\| = \|A_B^{-1} \vec{b}_B\| \leq \sum_{i \in B} \|\vec{u}_i\| b_{\max} \leq n b_{\max} / \delta$.
\end{proof}

\begin{proof}[Proof of Lemma~\ref{lem:unb-to-bounded}]
\begin{enumerate}
\item Let $P = \set{\vec{x} \in \R^n : A\vec{x} \leq \vec{b}}$ satisfy the local $\delta$-distance property.
  For some feasible basis $I$ we let
  \[
    \vec{w} = - \frac{1}{n} \sum_{i \in I} \frac{\vec{a}_i}{\|\vec{a}_i\|}
  \]
  and $P' = \set{ \vec{x} \in P : \pr{\vec{w}}{\vec{x}} \leq \frac{n b_{\max}}{\delta} }$.
  The normal fan of $P'$ covers $\R^n$,
  because $\vec{0}$ lies in the interior of $\conv(\set{\vec{a}_i : i \in I} \cup \set{\vec{w}})$,
  and so $P'$ is a polytope.
  Since $\|\vec{w}\| < 1$, we have $\pr{\vec{w}}{\vec{x}} < \|\vec{x}\| \leq \frac{n b_{\max}}{\delta}$ for every vertex $\vec{x} \in P$
  by Lemma~\ref{lem:vertex-norm}, so $P'$ is LP equivalent to $P$.

  Every vertex of $P'$ is either a vertex of $P$ or the intersection of an unbounded ray of $P$ with the new constraint.
  Consider a feasible basis $B$ of $P'$.
  Either $B$ is already feasible for $P$, in which case it satisfies the $\delta$-distance property.
  Otherwise, it is of the form
  $\vec{a}_1, \ldots, \vec{a}_{n-1}, \vec{w}$,
  where $\vec{a}_1, \ldots, \vec{a}_n$ is a feasible basis for $P$,
  after a suitable renumbering of indices.

  In this case, $\cone\set{\vec{a}_1, \ldots, \vec{a}_{n-1}}$ is on the boundary of the support of the normal fan of $P$,
  and so by the proof of Lemma~\ref{lem:dist-to-width} we have $d(\vec{w}, H) \geq \delta / n$,
  where $H = \linsp\set{\vec{a}_1, \ldots, \vec{a}_{n-1}}$.
  Using an orthogonal transformation,
  we may assume without loss of generality that $\vec{a}_{1n} = \dots = \vec{a}_{(n-1),n} = 0$
  and so the matrix $M$ whose rows are the basis vectors normalized to unit length is of the form
  \[
    M = \begin{pmatrix}
          A'         & 0 \\
          \vec{w}'^T & h
        \end{pmatrix} \in \R^{n \times n}
  \]
  where $h = d(\vec{w}/\|\vec{w}\|, H) \geq \delta / n$ and $\|\vec{w}'\| < 1$.
  We compute
  \[
    M^{-1} = \begin{pmatrix}
                A'^{-1}                 & 0 \\
                -\vec{w}'^T A'^{-1} / h & 1/h
             \end{pmatrix}
  \]
  By Lemma~\ref{lem:delta-distance-inverse},
  it is sufficient to show that the norms of the columns of $M^{-1}$ are bounded by $2n/\delta^2$.
  This is immediate for the last column.
  Let $\vec{u}_1, \ldots, \vec{u}_{n-1}$ be the columns of $A'^{-1}$.
  We have $\|\vec{u}_i\| \leq 1/\delta$ by Lemma~\ref{lem:delta-distance-inverse}.
  Furthermore, the $i$-th entry of the last row of $M^{-1}$ is bounded in absolute value by
  \[
    \left| \frac{\pr{\vec{w}'}{\vec{u}_i}}{h} \right| \leq \frac{\|\vec{u}_i\| n}{\delta} \leq \frac{n}{\delta^2}
  \]
  By the triangle inequality, the norms of the first $n-1$ columns of $M^{-1}$ are bounded by $1/\delta + n/\delta^2 \leq 2n/\delta^2$.
  This completes the proof that $P'$ satisfies the local $\delta^2/(2n)$-distance property.

\item
  Now suppose that $A$ satisfies the global $\delta$-distance property.
  For every vertex $\vec{x} \in P$, we have
  \[
    \pr{-\vec{a}_i}{\vec{x}} \leq \|\vec{a}_i\| \|\vec{x}\| \leq \frac{n \|\vec{a}_i\| b_{\max}}{\delta}
  \]
  by Lemma~\ref{lem:vertex-norm}, and so the polytope
  \[
    P' = \set{\vec{x} \in \R^n: -\frac{n\|\vec{a}_i\|b_{\rm max}}{\delta} - 1 \leq \pr{\vec{a}_i}{\vec{x}} \leq \vec{b}_i \quad \forall i \in [m]}
  \]
  is LP equivalent to $P$.
  Furthermore, every (not necessarily feasible) basis of the constraint matrix of $P'$ is equal to a basis of $A$ up to sign changes,
  which do not affect the $\delta$-distance property.
  Therefore, $P'$ satisfies the global $\delta$-distance property.
  \qedhere
\end{enumerate}
\end{proof}

\section{Feasibility}
\label{sec:feasibility}

\begin{lemma}
  Let $P = \set{ \vec{x} \in \R^n : A \vec{x} \leq \vec{b} }$
  where $A \in \Z^{m \times n}$ is an integral matrix of full column rank with subdeterminants bounded by $\Delta$.
  One can compute a feasible basis of $P$ or decide infeasibility
  using an expected $O(n^5 \Delta^2 \ln n\Delta)$ shadow simplex pivots.
\end{lemma}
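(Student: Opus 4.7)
The plan is to reduce this to the general Feasibility via Shadow Simplex theorem stated just above, specialized to $\Delta$-modular constraint matrices. First, I would invoke the auxiliary Lemma~\ref{lem:determinant-delta-dist} (referenced earlier in the paper's appendix) to translate the sub-determinant hypothesis into a quantitative $\delta$-distance bound: for an integer matrix with all sub-determinants bounded by $\Delta$, the global $\delta$-distance property holds with $\delta \geq 1/(n\Delta^2)$. Second, I would apply Lemma~\ref{lem:unb-to-bounded} Part~2 to replace $P$ by an LP-equivalent bounded polytope $P'$ whose constraint matrix retains the global $\delta$-distance property with the same $\delta$, and which remains $\Delta$-modular because the new constraints share row directions with rows of $A$ (up to sign, which leaves sub-determinants invariant in absolute value). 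Third, I would apply the main feasibility theorem. A direct substitution of the bound $\delta = \Omega(1/(n\Delta^2))$ into the general $O(mn^3/\delta \ln(n/\delta))$ estimate gives $O(m\, n^4 \Delta^2 \ln(n\Delta))$ pivots.

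To upgrade this bound to the claimed $O(n^5 \Delta^2 \ln(n\Delta))$ — a saving by a factor of $m/n$ — the key step is to arrange that the Phase~1 routine calls the Phase~2 subroutine only $O(n)$ times rather than $O(m)$ times. The natural implementation is an adaptive constraint-addition scheme that maintains a working set $S \subseteq [m]$ initialized from the bounding box, and at each iteration either concludes infeasibility or produces a vertex of $P_S$, extending $S$ by one violated constraint of $[m] \setminus S$ whenever the current vertex is not feasible for all of $[m]$. The claim is that for $\Delta$-modular $A$, one can control the scheme so that only $O(n)$ such extensions occur before termination. The intuition is a combination of a Farkas/Carath\'eodory-style bound (any LP infeasibility certificate uses at most $n+1$ constraints) with the rigidity of the normal fan of a $\Delta$-modular polyhedron (whose diameter is $O(n^3 \Delta^2 \ln(n\Delta))$ independent of $m$).

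The main obstacle is exactly this last claim: rigorously bounding the number of outer Phase~2 calls by $O(n)$ rather than $O(m)$. In full generality the bound really is $\Theta(m)$, so the argument must use the bounded-subdeterminant condition in an essential way. I would attempt a potential argument charging each successful Phase~2 call against a combinatorial quantity --- such as an ``essential direction'' of the current basis or a tight-constraint count relative to an eventual certificate --- and show that for $\Delta$-modular $A$ this potential is bounded by $O(n)$. Combined with the Phase~2 cost of $O(n^3/\delta \ln(n/\delta)) = O(n^4 \Delta^2 \ln(n\Delta))$ pivots per call, this yields the desired overall bound of $O(n^5 \Delta^2 \ln(n\Delta))$ shadow simplex pivots.
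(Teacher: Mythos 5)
The proposal misses the key idea and does not close. The paper does \emph{not} take the route of iteratively adding constraints and then arguing that only $O(n)$ outer Phase~2 calls occur; that iterative scheme is exactly the one used for the weaker $O(mn^3/\delta\ln(n/\delta))$ bound, and there is no reason the number of additions drops to $O(n)$ for $\Delta$-modular matrices --- the "potential argument" you sketch is not substantiated, and the generic scheme genuinely needs $\Theta(m)$ iterations. Instead, the paper uses a completely different, simpler device: it introduces a single auxiliary ``Phase~1'' LP with one extra slack variable,
\[
\min\ s \quad \text{s.t.}\quad \pr{\vec{a}_i}{\vec{x}} - s \leq \vec{b}_i\ \forall i,\quad -s \leq 0,
\]
which is trivially feasible at $(\vec{0}, s_0)$ with $s_0 = \max(0, \max_i(-\vec{b}_i))$. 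Feasibility of $P$ is then decided by a \emph{single} Phase~2 optimization: if $\min s = 0$ we read off a feasible basis, otherwise $P$ is empty. Because the augmented $(m{+}1)\times(n{+}1)$ constraint matrix still has bounded entries and $n\times n$ subdeterminants bounded by $n\Delta$, Lemma~\ref{lem:determinant-delta-dist} gives a global $\delta$-distance bound of $\delta = 1/(n^2\Delta^2)$; substituting this into the $O(n^3/\delta \ln(n/\delta))$ Phase~2 bound (after applying Lemma~\ref{lem:unb-to-bounded} Part~2, which preserves $\delta$) yields $O(n^5\Delta^2\ln(n\Delta))$ with no $m$ factor at all.

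Note also that your first bullet slightly misstates the $\delta$: the relation from Lemma~\ref{lem:determinant-delta-dist} used here is $\delta \geq 1/(n\Delta_1\Delta_{n-1})$; since the augmented system lives in dimension $n+1$ and has $n\times n$ subdeterminants up to $n\Delta$, the relevant value is $\delta \sim 1/(n^2\Delta^2)$ rather than $1/(n\Delta^2)$. Getting this right is what produces the $n^5$ (rather than $n^4$) in the final count. The larger issue, though, is conceptual: the bounded-subdeterminant hypothesis is exploited by constructing one well-conditioned auxiliary LP, not by bounding the number of incremental constraint insertions.
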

\begin{proof}
  Consider the linear program
  \begin{align*}
    \min\, s &\\
    \pr{\vec{a}_i}{\vec{x}} - s &\leq \vec{b}_i & \forall i \in [m] \\
                              s &\geq 0
  \end{align*}
  The $(m + 1) \times (n + 1)$-constraint matrix is integral of full column rank and has $n \times n$-subdeterminants bounded by $n\Delta$.
  Therefore, it satisfies the global $\delta$-distance property with $\delta = \frac{1}{n^2\Delta^2}$ by Lemma~\ref{lem:determinant-delta-dist}.
  The point $(\vec{0}, -\min(\set{0} \cup \set{ \vec{b}_i : i \in [m]}))$ is feasible,
  and so a feasible basis can be found using a standard ray-casting procedure.
  Lemma~\ref{lem:unb-to-bounded} implies that we can construct an LP equivalent polytope with the same parameter $\delta$,
  which we can then optimize in $O(n^5 \Delta^2 \ln n\Delta)$ shadow simplex pivots
  by Theorem~\ref{thm:phase2}.
  If the optimal solution we found satisfies $s = 0$, we can read off a feasible basis of $P$;
  otherwise, we know that $P$ is empty.
\end{proof}

\begin{lemma}
  Let $P = \set{ \vec{x} \in \R^n : A \vec{x} \leq \vec{b} }$
  where $A \in \R^{m \times n}$ satisfies the global $\delta$-distance property.
  One can compute a feasible basis of $P$ or decide infeasibility
  using an expected $O(mn^3/\delta \ln n/\delta)$ shadow simplex pivots.
\end{lemma}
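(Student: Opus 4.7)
The plan is to reduce feasibility to a sequence of at most $m - n$ Phase~2 calls (Theorem~\ref{thm:phase2}), by processing the constraints of $A$ incrementally and maintaining a feasible vertex of a bounded auxiliary polytope built from the constraints seen so far. Since $A$ has full column rank, I first pick an initial basis $B_0 \subseteq [m]$ with $|B_0| = n$, and let $\vec{v}_0 = A_{B_0}^{-1} \vec{b}_{B_0}$ be the unique vertex of $P_{B_0} = \set{\vec{x} : \pr{\vec{a}_i}{\vec{x}} \leq \vec{b}_i,\, i \in B_0}$; by Lemma~\ref{lem:vertex-norm}, $\|\vec{v}_0\| \leq n b_{\max}/\delta$, where $b_{\max} := \max_{i \in [m]} |\vec{b}_i|/\|\vec{a}_i\|$. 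I maintain a pair $(S, \vec{v})$ with $B_0 \subseteq S \subseteq [m]$ and $\vec{v}$ a vertex of a bounded polytope $\tilde P_S$ defined below.

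The key observation is that the global $\delta$-distance property is hereditary under row-subsets and sign changes: any linearly independent subset of $\set{\vec{a}_i : i \in S} \cup \set{-\vec{a}_j : j \in [m]}$ uses at most one of $\pm \vec{a}_j$ per index $j$ and is therefore, up to signs, a linearly independent subset of the rows of $A$, hence satisfies $\delta$-distance. Consequently, Lemma~\ref{lem:unb-to-bounded}(2), applied uniformly with the fixed constant $b_{\max}$ above, bounds every intermediate polyhedron $P_S = \set{\vec{x} : \pr{\vec{a}_i}{\vec{x}} \leq \vec{b}_i,\, i \in S}$ to an LP-equivalent polytope $\tilde P_S$ whose constraint matrix still satisfies global (hence local) $\delta$-distance. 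Thus Theorem~\ref{thm:phase2} is directly applicable to $\tilde P_S$ for every $S \supseteq B_0$.

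For each $i \in [m] \setminus B_0$ I process constraint $i$ as follows. If $\pr{\vec{a}_i}{\vec{v}} \leq \vec{b}_i$ already, no pivots are required and I just add $i$ to $S$. Otherwise I invoke Theorem~\ref{thm:phase2} on $\tilde P_S$ with objective $-\vec{a}_i$, starting from $\vec{v}$, at a cost of $O(n^3/\delta \ln(n/\delta))$ shadow simplex pivots, and obtain an optimal vertex $\vec{w}$. If $\pr{\vec{a}_i}{\vec{w}} > \vec{b}_i$ then $\tilde P_S \cap \set{\pr{\vec{a}_i}{\vec{x}} \leq \vec{b}_i} = \emptyset$, and LP-equivalence propagates this emptiness to $P_{S \cup \set{i}}$ and thereby to $P$, so I report infeasibility. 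Otherwise $\vec{w}$ is a vertex of $\tilde P_{S \cup \set{i}}$ and I set $\vec{v} \leftarrow \vec{w}$, $S \leftarrow S \cup \set{i}$. After all constraints are processed, $\vec{v}$ is a vertex of $\tilde P_{[m]}$, which is LP-equivalent to $P$; a short postprocessing step pivots away from any artificial bounding constraints still tight at $\vec{v}$ to yield a genuine feasible basis of $P$. Summing over the at most $m - n$ Phase~2 invocations gives the claimed bound of $O(mn^3/\delta \ln(n/\delta))$ pivots.

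The step I expect to require the most care is the bookkeeping around the artificial bounding constraints: checking that the uniform choice of $b_{\max}$ yields LP-equivalence uniformly for every intermediate polyhedron $P_S$ (which reduces via Lemma~\ref{lem:vertex-norm} to controlling vertex norms), that the optimum returned by Theorem~\ref{thm:phase2} is also a vertex of the next enlarged polytope $\tilde P_{S \cup \set{i}}$, and that the final cleanup recovers a vertex of the (possibly unbounded) $P$ rather than of the strictly larger $\tilde P_{[m]}$. The essential role of the \emph{global} $\delta$-distance hypothesis, as opposed to only the local one, is precisely that intermediate row-subsets $A_S$ need not correspond to feasible bases of $P$, yet Lemma~\ref{lem:unb-to-bounded}(2) must still apply to them throughout the incremental construction.
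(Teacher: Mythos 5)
Your proposal follows the same high-level strategy as the paper's proof: process the constraints incrementally starting from a basis $B_0$ of $A$, and charge one Phase~$2$ call of cost $O(n^3/\delta\ln(n/\delta))$ per newly added constraint, using Lemma~\ref{lem:unb-to-bounded}(2) to make each intermediate polyhedron bounded. The infeasibility test is also the same in spirit, and your closure argument for it (pass through $\tilde P_{S\cup\set{i}} \subseteq \tilde P_S\cap\set{\pr{\vec{a}_i}{\vec{x}}\leq\vec{b}_i}$ and invoke LP-equivalence of $\tilde P_{S\cup\set{i}}$ and $P_{S\cup\set{i}}$ together with pointedness) is correct once made precise.

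There is, however, a genuine gap in the maintained invariant. You claim that the optimal vertex $\vec{w}$ of $\tilde P_S$ (for the objective $-\vec{a}_i$) is automatically a vertex of $\tilde P_{S\cup\set{i}}$ once $\pr{\vec{a}_i}{\vec{w}}\leq\vec{b}_i$. This need not hold: passing from $\tilde P_S$ to $\tilde P_{S\cup\set{i}}$ also introduces the \emph{artificial lower bound} $\pr{\vec{a}_i}{\vec{x}} \geq -n\|\vec{a}_i\|b_{\max}/\delta - 1$, and $\vec{w}$ may violate it. Concretely, if $P_S$ is unbounded in a direction $\vec{r}$ with $\pr{\vec{a}_i}{\vec{r}}<0$, then $\vec{w}$ lies on an artificial facet of $\tilde P_S$; Lemma~\ref{lem:vertex-norm} applied to $\tilde P_S$ only gives $\|\vec{w}\|\leq n\tilde b_{\max}/\delta$ where $\tilde b_{\max}\approx n b_{\max}/\delta$ (because the artificial right-hand sides inflate $b_{\max}$), so $\pr{\vec{a}_i}{\vec{w}}$ can be of order $-n^2\|\vec{a}_i\|b_{\max}/\delta^2$, far below the new lower bound. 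The same issue affects your ``no pivots required'' shortcut when $\pr{\vec{a}_i}{\vec{v}}\leq\vec{b}_i$. The paper avoids this by rounding (ray-casting) to a vertex of the \emph{unbounded} $P_{I+i}$ at the end of every round, so that Lemma~\ref{lem:vertex-norm} applied to $P_{I+i}$ (with the small, uniform $b_{\max}$) guarantees the maintained vertex is strictly interior to the next round's artificial constraints. To repair your argument, insert the per-round ray-cast back to a vertex of $P_{S\cup\set{i}}$ (as the paper does) rather than deferring all cleanup to the very end; the ray-casting cost is dominated by the Phase~$2$ pivots, so the claimed bound is unaffected.
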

\begin{proof}
  We proceed by iteratively adding the constraints of $P$, one at a time.
  First, observe that we can find a basis $B$ of $A$ efficiently using Gauss elimination,
  which gives us the unique feasible basis of $P_B = \set{ \vec{x} \in \R^n : A_B \vec{x} \leq \vec{b}_B }$.

  Now suppose we already found a feasible basis $B$ of $P_I = \set{ \vec{x} \in \R^n : A_I \vec{x} \leq \vec{b}_I }$, $I \subsetneq [m]$.
  Let $i \not\in I$.
  Since $P_I$ satisfies the global $\delta$-distance property,
  we can combine Lemma~\ref{lem:unb-to-bounded} with Theorem~\ref{thm:phase2}
  to solve the linear program
  \[
    \min\set{ \pr{\vec{a}_i}{\vec{x}} : \vec{x} \in P_I } = \gamma
  \]
  using an expected $O(n^3/\delta \ln n/\delta)$ shadow simplex pivots.
  If $\gamma > \vec{b}_i$, this implies that $P_{I + i}$ is empty and therefore $P$ is empty.
  Otherwise, the solution of the linear program yields a point $\vec{x} \in P_{I + i}$
  (if $\gamma = -\infty$, we find a suitable point on an unbounded ray)
  which we can round to a feasible basis of $P_{I + i}$ using a standard ray-casting procedure if necessary.

  Applying this procedure iteratively until $I = [m]$,
  we use an expected $O(mn^3/\delta \ln n/\delta)$ shadow simplex pivots to obtain a feasible basis of $P$.
  Observe that the time required for those pivots dominates the time required for any intermediate ray-casting.
\end{proof}

\section{The perfect matching polytope}
\label{sec:perfect-matching}

The perfect matching polytope $P_G \subset \R^E$ of an undirected graph $G = (V, E)$, $|V| = 2n$,
is the convex hull of the characteristic vectors $\chi_M$ of perfect matchings $M \subseteq E$
(see chapter~25 of \cite{book:Schrijver2003} for a collection of fundamental results on $P_G$).
It is described by the system of inequalities
\begin{align*}
  \sum_{e \in \delta(v)} x(e) & \leq 1 & \forall v \in V \\
  \sum_{e \in \delta(U)} x(e) & \geq 1 & \forall U \subseteq V, |U| \text{ odd} \\
                         x(e) & \geq 0 & \forall e \in E
\end{align*}
We will also use the fact that two vertices $\chi_M$ and $\chi_N$ of $P_G$
are adjacent if and only if $M \triangle N$ is a cycle.

We will show that even though this polytope has an exponential number of facets,
so that its normal fan contains an exponential number of extreme rays,
every normal cone is rather wide.
This is due to the high level of degeneracy of the polytope.
While a much better bound on the diameter of $P_G$ follows directly from
the combinatorial adjacency structure noted above,
it is interesting to see that some of our techniques can be applied to $P_G$.
As far as we know, this is the first example of a combinatorial polytope
that satisfies this kind of ``discrete curvature bound'' without having a constraint matrix with small subdeterminants.

Since $P_G$ is not full-dimensional,
there are two different but essentially equivalent definitions for the normal cones of $P_G$.
One can treat $P_G$ as a polytope in the ambient space $\R^E$,
keeping the definition of the normal cone $N_v$ of a vertex $\vec{v} \in P_G$
as the set of objective functions $\vec{c} \in \R^E$ that are maximized at $\vec{v}$.
Hence $N_v$ is not pointed,
and its lineality space $L^\bot$ is the set of vectors that are orthogonal to the affine hull $\aff(P_G)$ of $P_G$.
Alternatively, one may treat $P_G$ as a full-dimensional polytope within $\aff(P_G)$,
in which case the normal cone $N_v'$ is simply the restriction of $N_v$
to the linear space $L$ of vectors parallel to $\aff(P_G)$.
The choice of definition does not affect $\tau$-width:
given a ball $\vec{w} + \tau B_2^{E} \subseteq N_v$, $\|\vec{w}\| = 1$,
the orthogonal projection of the ball onto $L$ is $\vec{w}' + \tau B_2^L \subseteq N_v'$
with $\|\vec{w}'\| \leq 1$.

\begin{theorem}
  $P_G$ is $\tau$-wide for $\tau = 1/(3\sqrt{|E|})$.
\end{theorem}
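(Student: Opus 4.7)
The plan is to exhibit, for each perfect matching $M$ of $G$ with $n := |M|$, an explicit unit vector $\vec{w} \in N_{\chi_M}$ around which a ball of radius $\tau = 1/(3\sqrt{|E|})$ fits inside $N_{\chi_M}$. I would work with the non-pointed normal cone $N_v \subseteq \R^E$; by the remark preceding the theorem statement, this is equivalent to showing $\tau$-wideness for $N_v'$.

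Concretely, I would take $\vec{w} = \chi_M/\sqrt{n}$. This is a unit vector, and it lies in $N_{\chi_M}$ because for every perfect matching $N$ one has $\pr{\vec{w}}{\chi_M - \chi_N} = (n - |M \cap N|)/\sqrt{n} \geq 0$. Since $N_{\chi_M}$ equals the set of $\vec{c} \in \R^E$ satisfying $\pr{\vec{c}}{\chi_M - \chi_N} \geq 0$ for every perfect matching $N$ (the vertices of $P_G$ being exactly the characteristic vectors of perfect matchings), verifying $\vec{w} + \tau \cB_2^E \subseteq N_{\chi_M}$ reduces to checking this inequality for every $N$ and every perturbation $\vec{x}$ with $\|\vec{x}\| \leq \tau$.

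For any such $N$ and $\vec{x}$, I would apply Cauchy--Schwarz together with the identity $\|\chi_M - \chi_N\|^2 = |M \triangle N| = 2(n - |M \cap N|)$ to obtain
\[
  \pr{\vec{w} + \vec{x}}{\chi_M - \chi_N} \geq \frac{n - |M \cap N|}{\sqrt{n}} - \tau \sqrt{2(n - |M \cap N|)}.
\]
Writing $k := n - |M \cap N|$, non-negativity reduces to the scalar inequality $\tau \leq \sqrt{k/(2n)}$. The classical observation that $M \triangle N$ is a disjoint union of $M$-$N$ alternating even cycles of length $\geq 4$ yields $k \geq 2$ whenever $M \neq N$, so $\tau \leq 1/\sqrt{n}$ always suffices. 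Since a perfect matching has $n$ edges, $n \leq |E|$, and hence $1/(3\sqrt{|E|}) \leq 1/\sqrt{|E|} \leq 1/\sqrt{n}$; the theorem's bound is comfortably absorbed.

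The only step requiring real care is the structural observation that $|M \triangle N| \geq 4$ for distinct perfect matchings $M$ and $N$, from which the rest is essentially a one-line Cauchy--Schwarz estimate, the factor $1/3$ providing ample slack. The argument proceeds entirely in $\R^E$ and never uses full-dimensionality of $P_G$; the excerpt's observation that $\tau$-wideness of $N_v$ and $N_v'$ agree handles the projection to the affine hull of $P_G$, and one may additionally note that directions in the lineality space $L^\perp$ of $N_v$ are orthogonal to every $\chi_M - \chi_N \in L$, so they contribute nothing to the verification and the ball condition is unaffected by them.
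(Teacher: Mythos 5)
Your proof is correct and takes a genuinely different route from the paper's. The paper builds the ball center as a scaled sum $\vec{w}=\sum_{U\in\cU}\vec{a}_U$ over $2n$ carefully chosen tight odd-set constraints, bounds the distance from $\vec{w}$ to each facet hyperplane of $N_{\chi_M}$ via the adjacency characterization (facets correspond to neighbours $\chi_N$ with $M\triangle N$ a single alternating cycle $C$, giving $d(\vec{w},H)\geq\sqrt{|C|}\geq 2$), and finally divides by $\|\vec{w}\|\leq 6\sqrt{|E|}$. You instead take $\vec{w}=\chi_M/\sqrt{n}$ and verify the ball inclusion $\vec{w}+\tau\cB_2^E\subseteq N_{\chi_M}$ directly against the vertex description $N_{\chi_M}=\set{\vec{c}:\pr{\vec{c}}{\chi_M-\chi_N}\geq 0\ \forall\, N}$ by a single Cauchy--Schwarz estimate, using only that $|M\triangle N|\geq 4$ for distinct perfect matchings. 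Your version is shorter, avoids the odd-set inequality description entirely, and in fact yields the slightly stronger bound $\tau=1/\sqrt{n}$ (with $n=|V|/2\leq|E|$), of which the stated $1/(3\sqrt{|E|})$ is a relaxation. It also generalizes verbatim to any $0/1$-polytope whose vertices all have Hamming weight $n$ and pairwise Hamming distance at least $d$, giving $\tau=\sqrt{d/(4n)}$. What the paper's choice of $\vec{w}$ illustrates --- in line with the authors' remark that degeneracy widens normal cones --- is that the center can be assembled from the many redundant tight odd-set constraints; your argument sidesteps this by working with the vertex (rather than inequality) description of the normal cone, which is perfectly legitimate for a polytope.
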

\begin{figure}
  \begin{center}
    \begin{tikzpicture}
      \foreach \x in {0,1,2,6} {
        \fill (\x,0) circle[radius=2pt];
        \fill (\x,1.5) circle[radius=2pt];
        \draw[very thick] (\x,0) -- (\x,1.5);
      }
      \draw (7,0) -- (7,1.5);
      \draw[fill=white] (7,0) circle[radius=2pt];
      \draw[fill=white] (7,1.5) circle[radius=2pt];

      \foreach \x in {0,1,2} {
        \draw (\x,-0.2) node[below] {$v_{\x}$};
        \draw (\x, 1.7) node[above] {$u_{\x}$};
      }
      \draw (6,-0.2) node[below] {$v_{n-1}$};
      \draw (6, 1.7) node[above] {$u_{n-1}$};
      \draw (7,-0.2) node[below right] {$v_n = v_0$};
      \draw (7, 1.7) node[above right] {$u_n = u_0$};

      \draw (4,0.75) node {$\dots$};

      \foreach \x in {0,1,2,6} {
        \draw (\x,1.5) ++(30:0.2)
              arc[start angle=30,end angle=180,radius=0.20cm] -- ++(0,-1.5)
              arc[start angle=180,end angle=270,radius=0.20cm] -- ++(1,0.1)
              arc[start angle=270,end angle=390,radius=0.10cm] -- cycle;

        \draw (\x,0) ++(330:0.15)
              arc[start angle=330,end angle=180,radius=0.15cm] -- ++(0,1.5)
              arc[start angle=180,end angle=90,radius=0.15cm] -- ++(1,-0.05)
              arc[start angle=90,end angle=-30,radius=0.10cm] -- cycle;
      }
    \end{tikzpicture}
  \end{center}
  \caption{A perfect matching and a selection of tight odd sets.\label{fig:perfect-matching-odd-sets}}
\end{figure}
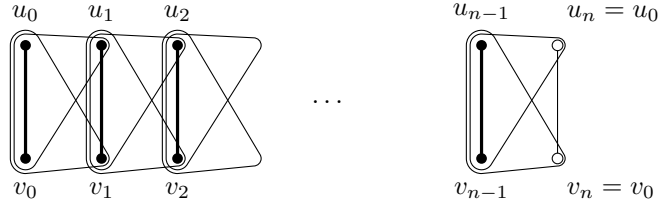
\begin{proof}
  Let $\chi_M \in P_G$ be a vertex and let $N_{\chi_M}$ be its normal cone.
  Let us label the vertices of $G$ such that
  \[
    M = \{ u_0v_0, u_1v_1, \ldots, u_{n-1}v_{n-1} \}
  \]
  For an odd set $U \subseteq V$ of vertices,
  let $\vec{a}_U$ be the corresponding row of the constraint matrix in normal form,
  i.e. $\vec{a}_U = - \chi_{\delta(U)}$.
  We consider the following $3$-element sets, see Figure~\ref{fig:perfect-matching-odd-sets}:
  \[
    \cU := \set{ \set{ u_k, v_k, w } : 0 \leq k \leq n -1, w \in \set{ u_{k + 1\pmod{n}}, v_{k + 1\pmod{n}} } }
  \]
  Since the corresponding constraints are tight at $\chi_M$,
  we have that
  \[
    \vec{w} := \sum_{U \in \cU} \vec{a}_U \in N_{\chi_M}
  \]
  Note that for $e \in M$, we have $\vec{w}(e) = -2$,
  and for $e \in E \setminus M$, we have $\vec{w}(e) \in \{ -4, -6 \}$
  (every vertex is contained in exactly $3$ of the sets in $\cU$,
  and there can be at most one set in $\cU$ that contains both endpoints of $e \not\in M$).
  In particular, $\|\vec{w}\| \leq 6 \sqrt{|E|}$.

  Every facet $F$ of $N_{\chi_M}$ corresponds to an edge from $\chi_M$ to some other vertex $\chi_N$.
  We know that $M \triangle N$ is a cycle $C$.
  The direction $\vec{v}$ of the edge from $\chi_N$ to $\chi_M$, which is (inner) normal to $F$,
  satisfies
  \[
    \vec{v}(e) = \begin{cases}
                     1 & e \in C \cap M \\
                    -1 & e \in C \setminus M \\
                     0 & e \not\in C
                 \end{cases}
  \]
  We compute
  \begin{align*}
    \pr{\vec{v}}{\vec{w}}
      &= \sum_{e \in C \cap M} \vec{w}(e) - \sum_{e \in C \setminus M} \vec{w}(e)
      \geq -2 |C \cap M| + 4 |C \setminus M| = |C|
  \end{align*}
  where we use the fact that $C$ alternates between edges in $M$ and edges in $N$ for the last equation.
  Let $H$ be the affine span of $F$.
  We obtain
  \[
    d(\vec{w}, H) = \frac{\pr{\vec{v}}{\vec{w}}}{\|\vec{v}\|} \geq \frac{|C|}{\|\vec{v}\|} = \sqrt{|C|} \geq 2
  \]
  In other words, $N_{\chi_M}$ contains a ball of radius $2$ around $w$,
  and so $N_{\chi_M}$ is $\tau$-wide for $\tau = 2/\|\vec{w}\| \geq 1/(3\sqrt{|E|})$.
\end{proof}

\subparagraph*{Acknowledgements}

We would like to thank Friedrich Eisenbrand and Santosh Vempala for useful
discussions.




\bibliographystyle{plain}
\bibliography{references}

\section{Additional proofs}
\label{sec:appendix-proofs}

\begin{proof}[Proof of Lemma~\ref{lem:dist-to-width}]
Let $\vec{v}_1^*,\dots,\vec{v}_n^*$ be the dual basis satisfying
$\pr{\vec{v}_i}{\vec{v}_j^*} = 1$ if $i=j$ and $0$ otherwise. By definition
of the $\delta$-distance property
\[
\pr{\vec{v}_i}{\vec{v}_i^*/\|\vec{v}_i^*\|} \geq \delta
\Rightarrow \|\vec{v}_i^*\| \leq 1/\delta \text{.}
\]
Note that $\vec{x} \in \cone(\vec{v}_1,\dots,\vec{v}_n)$ iff
$\pr{\vec{v}_i^*}{\vec{x}} \geq 0$, for all $i \in [n]$. Let $\bar{\vec{v}} =
\sum_{i=1}^n \vec{v}_i/n$, and note that $\|\bar{\vec{v}}\| \leq 1$, since
it is an average of unit vectors.

We will show that $\bar{\vec{v}} + \frac{\delta}{n} \cB_2^n \subseteq
\cone(\vec{v}_1,\dots,\vec{v}_n)$, which suffices to prove the lemma. Take any
vector $\vec{e}$, $\|\vec{e}\| \leq \delta/n$. Then for any $i \in [m]$, note
that
\[
\pr{\vec{v}_i^*}{\bar{\vec{v}}+\vec{e}} = 1/n + \pr{\vec{v}_i^*}{\vec{e}}
\geq 1/n - \|\vec{v}_i^*\|\|\vec{e}\| \geq 0 \text{.}
\]
Hence $\bar{\vec{v}} + \vec{e} \in \cone(\vec{v}_1,\dots,\vec{v}_n)$, as needed.
\end{proof}

\begin{proof}[Proof of Lemma~\ref{lem:exp-moments}]
For the first part,
\begin{align*}
c_\Sigma^{-1} &= \int_{\Sigma} e^{-\|\vec{x}\|} \d\vec{x}
    = \int_{\Sigma} \int_{\|\vec{x}\|}^\infty e^{-t} {\rm dt} \d\vec{x} \\
    &= \int_0^\infty e^{-t} \int_{\Sigma} {\rm I}[\|\vec{x}\| \leq t] \d\vec{x}{\rm dt}
    = \int_0^\infty e^{-t} t^n \vol_n(\cB_2^n \cap \Sigma){\rm dt} = n!
\vol_n(\cB_2^n \cap \Sigma) \text{.}
\end{align*}
For the expected norm,
\begin{align*}
  \E[\|X\| ]
    &= c_\Sigma \int_{\Sigma} \|\vec{x}\| e^{-\|\vec{x}\|} \d\vec{x}
     = c_\Sigma \int_{\Sigma} \|\vec{x}\| \int_{\|\vec{x}\|}^\infty e^{-t} {\rm dt} \d\vec{x} \\
    &= c_\Sigma \int_0^\infty  e^{-t} \int_{t(\cB_2^n \cap \Sigma)} \|\vec{x}\|
        \d\vec{x} {\rm dt}
     = c_\Sigma \int_0^\infty  e^{-t} t^{n+1} {\rm dt}
          \int_{\cB_2^n \cap \Sigma} \|\vec{x}\| \d\vec{x} \\
    &= c_\Sigma (n+1)! \int_0^1 (1-s^n) \vol_n(\cB_2^n \cap \Sigma) {\rm ds} \\
 &= c_\Sigma (n+1)! \vol_n(\cB_2^n \cap \Sigma) \frac{n}{n+1} = n \text{.} \qedhere
\end{align*}
\end{proof}

\begin{proof}[Proof of Lemma~\ref{lem:delta-dist-equiv}]
\begin{description}
  \item[$(1) \Rightarrow (2)$]
    Let $I \subseteq [m]$ for which $\set{\vec{x}_i: i \in I}$ are linearly
    independent, and examine the linear combination $\sum_{i \in I} a_i
    \vec{x}_i$. Letting $j = \argmax_{i \in I} |a_i|$, we have that
    \begin{align*}
    \|a_j\vec{x}_j + \sum_{i \in I \setminus \set{j}} a_i\vec{x}_i\|
      &\geq d(a_j\vec{x}_j,~\linsp(\set{\vec{x}_i: i \in I \setminus \set{j}})) \\
      &= |a_j| d(\vec{x}_j,~\linsp(\set{\vec{x}_i: i \in I \setminus \set{j}}))
      \geq \delta |a_j| \text{,}\quad \left(\text{ by property $(1)$ }\right)
    \end{align*}
    as needed.

  \item[$(2) \Rightarrow (1)$]
    Take $i \in [m]$ and $J \subseteq [m]$,
    such that $\vec{x}_i \not\in \linsp(\set{\vec{x}_j: j \in J})$. Since we need
    only prove a lower bound on $d(\vec{x}_i, \linsp(\set{\vec{x}_j: j \in J}))$, we
    may clearly assume that $\set{\vec{x}_j: j \in \set{i} \cup J}$ are linearly
    independent. Given this, we have that
    \begin{align*}
    d(\vec{x}_i, \linsp(\set{\vec{x}_j: j \in J}))
      &= \min \set{\vec{x}_i - \sum_{j \in J} a_j \vec{x}_j:  a_j \in \R, j \in J} \\
      &\geq \min \set{\delta \max \set{1, \max_{j \in J} |a_j|}: a_j \in \R, j \in J}
      \quad \left(\text{ by property $(2)$ }\right) \\
      &\geq \delta \text{, \quad as needed.} \qedhere
    \end{align*}
\end{description}
\end{proof}

\begin{proof}[Proof of Lemma~\ref{lem:delta-distance-projection}]
  Let $i \in \{1, \dots, k-1\}$.
  Let $S_i = \linsp\set{ \pi(\vec{v}_j) : j \neq i, k }$.
  First, observe
  \[
    \linsp(\set{ \vec{v}_k } \cup S_i) = \linsp(\set{ \vec{v}_j : j \neq i }).
  \]
  Assume that $d(\pi(\vec{v}_i), S_i) < \delta \|\pi(\vec{v}_i)\|$.
  So there exists some $\vec{x} \in S_i$ such that $d(\pi(\vec{v}_i), \vec{x}) < \delta \|\pi(\vec{v}_i)\|$.
  Since we can write $\vec{v}_i = \pi(\vec{v}_i) + \lambda \vec{v}_k$ for some $\lambda \in \R$,
  we get
  \[
    d(\vec{v}_i, \linsp(\set{ \vec{v}_j : j \neq i}))
      \leq d(\vec{v}_i, \vec{x} + \lambda \vec{v}_k) = d(\pi(\vec{v}_i), \vec{x}) < \delta \|\pi(\vec{v}_i)\|
      \leq \delta \|\vec{v}_i\|,
  \]
  which contradicts the $\delta$-distance property of $\vec{v}_1, \ldots, \vec{v}_k$.
  So we must in fact have $d(\pi(\vec{v}_i), S_i) \geq \delta \|\pi(\vec{v}_i)\|$ for all $i \in \{1, \ldots, k-1\}$,
  which completes the proof.
\end{proof}

The following result is already implicit in~\cite{BSEHN14}. We provide it here for completeness.

\begin{lemma}
  \label{lem:determinant-delta-dist}
  Let $A \in \R^{m \times n}$ be an integral matrix whose entries are bounded by $\Delta_1$
  and whose $(n-1)\times(n-1)$ subdeterminants are bounded by $\Delta_{n-1}$ in absolute value.
  Then $A$ satisfies the global $\delta$-distance property with $\delta = 1/(n\Delta_1\Delta_{n-1})$.
  Furthermore, any polyhedron with constraint matrix $A$ is $\tau$-wide with $\tau = 1/(n^2\Delta_1\Delta_{n-1})$.
\end{lemma}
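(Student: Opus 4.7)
The plan is to reduce the global $\delta$-distance claim to the case of an $n \times n$ invertible integer submatrix (handled via Cramer's rule together with Lemma~\ref{lem:delta-distance-inverse}), inherit the property for smaller linearly independent subsets by an enlarging-the-span argument, and obtain the $\tau$-wide bound as an immediate corollary of Lemma~\ref{lem:dist-to-width}.

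For the base case, take $n$ rows $\vec{v}_1, \ldots, \vec{v}_n$ of $A$ that are linearly independent, and let $M \in \Z^{n \times n}$ be the matrix with these rows. Integrality and invertibility force $|\det M| \geq 1$, and Cramer's rule expresses each entry of $M^{-1}$ as $\pm$ an $(n-1) \times (n-1)$ subdeterminant of $M$ divided by $\det M$, so $|(M^{-1})_{ij}| \leq \Delta_{n-1}$ and every column of $M^{-1}$ has Euclidean norm at most $\sqrt{n}\,\Delta_{n-1}$. Since Lemma~\ref{lem:delta-distance-inverse} requires unit rows, I rescale: with $D = \operatorname{diag}(\|\vec{v}_1\|, \ldots, \|\vec{v}_n\|)$, the matrix $\hat{M} := D^{-1} M$ has unit rows, and the $i$-th column of $\hat{M}^{-1} = M^{-1} D$ has norm at most $\|\vec{v}_i\|\sqrt{n}\,\Delta_{n-1} \leq \sqrt{n}\,\Delta_1 \cdot \sqrt{n}\,\Delta_{n-1} = n\Delta_1\Delta_{n-1}$. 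Lemma~\ref{lem:delta-distance-inverse} then certifies the $\delta$-distance property for the unit rows with $\delta = 1/(n\Delta_1\Delta_{n-1})$, and scale invariance transfers it back to the original $\vec{v}_i$.

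For a general linearly independent subset $\vec{v}_{i_1}, \ldots, \vec{v}_{i_k}$ of rows of $A$, the setting of interest (the polyhedron with constraint matrix $A$ being pointed) forces $\operatorname{rank}(A) = n$, so the subset can be extended to a linearly independent $n$-tuple of rows, to which the base case applies. The $\delta$-distance property for the original subset is then immediate because enlarging a span can only decrease distances:
\[
  d\bigl(\vec{v}_{i_j},\, \linsp\set{\vec{v}_{i_\ell} : \ell \in [k] \setminus \set{j}}\bigr)
  \;\geq\; d\bigl(\vec{v}_{i_j},\, \linsp\set{\vec{v}_{i_\ell} : \ell \in [n] \setminus \set{j}}\bigr)
  \;\geq\; \delta\,\|\vec{v}_{i_j}\|.
\]

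For the $\tau$-wide claim, every vertex normal cone of a polyhedron with constraint matrix $A$ is full-dimensional (by pointedness) and generated by tight rows of $A$; pick a linearly independent $n$-subset of these generators, which spans a simplicial subcone of the normal cone. By the previous two steps this subset satisfies $\delta$-distance, so Lemma~\ref{lem:dist-to-width} certifies the simplicial subcone to be $(\delta/n)$-wide; the full normal cone contains this subcone and hence the same Euclidean ball, yielding $\tau = \delta/n = 1/(n^2\Delta_1\Delta_{n-1})$. I expect the only delicate step to be keeping the normalizations straight in the base case---in particular invoking $\|\vec{v}_i\| \leq \sqrt{n}\,\Delta_1$ only once---so that the sharp constant $n\Delta_1\Delta_{n-1}$ emerges rather than a looser power of~$n$.
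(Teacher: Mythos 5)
Your proof is correct and follows essentially the same route as the paper: reduce to an $n\times n$ invertible integer submatrix, apply Cramer's rule to bound the cofactor-based entries of the inverse by $\Delta_{n-1}$ (with $|\det M| \geq 1$), combine with $\|\vec{v}_i\| \leq \sqrt{n}\Delta_1$ to get $\delta = 1/(n\Delta_1\Delta_{n-1})$, and then invoke Lemma~\ref{lem:dist-to-width} for the $\tau$-wide bound. The only cosmetic difference is that you route the Cramer's-rule estimate through Lemma~\ref{lem:delta-distance-inverse} after explicitly normalizing the rows, whereas the paper computes $d(\vec{a}_i/\|\vec{a}_i\|, H_i) = \pr{\vec{a}_i}{\vec{u}_i}/(\|\vec{a}_i\|\|\vec{u}_i\|)$ directly from the integer normal $\vec{u}_i = |\det A|\, A^{-1}\vec{e}_i$; these are the same computation, and your treatment of the general linearly independent subset (extend to a row basis, span monotonicity) makes explicit what the paper leaves implicit in its opening reduction.
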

\begin{proof}
  It is sufficient to consider the case where $A \in \Z^{n \times n}$ is invertible.
  Let $\vec{a}_1, \ldots, \vec{a}_n$ be the rows of $A$ and
  let $H_i = \linsp\set{ \vec{a}_j : j \neq i }$.
  The vector $\vec{u}_i$ satisfying $A\vec{u}_i = |\det(A)| \vec{e}_i$ is a normal vector of $H_i$
  with $\vec{u}_i \in \Z^n$ and $\|\vec{u}_i\|_\infty \leq \Delta_{n-1}$ by Cramer's rule.
  We can compute
  \[
    d(\vec{a}_i/\|\vec{a}_i\|, H_i) = \frac{\pr{\vec{a}_i}{\vec{u}_i}}{\|\vec{a}_i\|\|\vec{u}_i\|}
      \geq \frac{1}{(\sqrt{n} \Delta_1) (\sqrt{n} \Delta_{n-1})}
      = \frac{1}{n \Delta_1 \Delta_{n-1}}
  \]
  using the fact that $\|\vec{a}_i\|_\infty \leq \Delta_1$.

  The ``furthermore'' part of the statement of the Lemma follows from Lemma~\ref{lem:dist-to-width}.
\end{proof}

\end{document}